%%%%%%%%%%%%%%%%%%%%%%% file template.tex %%%%%%%%%%%%%%%%%%%%%%%%%
%
% This is a general template file for the LaTeX package SVJour3
% for Springer journals.          Springer Heidelberg 2010/09/16\\
%\textbf{\part{title}}
%
% Copy it to a new file with a new name and use it as the basis
% for your article. Delete % signs as needed.
%
% This template includes a few options for different layouts and
% content for various journals. Please consult a previous issue of
% your journal as needed.
%
%%%%%%%%%%%%%%%%%%%%%%%%%%%%%%%%%%%%%%%%%%%%%%%%%%%%%%%%%%%%%%%%%%%
%
% First comes an example EPS file -- just ignore it and
% proceed on the \documentclass line
% your LaTeX will extract the file if required
% [arxiv_v2: filecontents  stripped, 189 chars]
\RequirePackage{fix-cm}
\documentclass{svjour3}
% onecolumn (standard format)
%\documentclass[smallcondensed]{svjour3}     % onecolumn (ditto)
%\documentclass[smallextended]{svjour3}       % onecolumn (second format)
%\documentclass[twocolumn]{svjour3}          % twocolumn
%
 % flush right qed marks, e.g. at end of proof
%
%\usepackage[hmargin=1.0in,height=8.3in]{geometry}
\usepackage{graphicx}

\usepackage{amsthm,amsmath,epsfig,subfigure,amsfonts,amssymb,xcolor,amsbsy,times}
\usepackage{bm}
\usepackage{bbm}
\smartqed
\usepackage{mathrsfs}
\usepackage{comment}
\usepackage{mathtools}
\usepackage{cases}
\usepackage[numbers]{natbib}
%\setcitestyle{authoryear}
%\usepackage{hyperref}
\usepackage{hyperref}
\hypersetup{%
	%          draft,   %to suppress all hypertext options
	colorlinks=true,
	linkcolor=blue,
	filecolor=blue,
	citecolor=blue,
	breaklinks=true}
\usepackage{endnotes}

\numberwithin{equation}{section}
\numberwithin{theorem}{section}
\numberwithin{lemma}{section}
\numberwithin{proposition}{section}
\numberwithin{definition}{section}
\numberwithin{remark}{section}
\spnewtheorem{assumption}{Assumption}{\bf}{\rm}

\newcommand{\md}{\mathrm{d}}

\newcommand{\mR}{\mathbb{R}}

\newcommand{\mE}{\mathbb{E}}
\newcommand{\mF}{\mathbb{F}}
\newcommand{\mP}{\mathbb{P}}

\newcommand{\mU}{\mathbb{U}}
\renewcommand{\epsilon}{\varepsilon}

\newcommand{\F}{\mathcal{F}}

\newcommand{\B}{\mathcal{B}}
\newcommand{\Ui}{\mathcal{U}}

\newcommand{\cA}{\mathcal{A}}
\newcommand{\cW}{\mathcal{W}}
\newcommand{\cL}{\mathcal{L}}

\newcommand{\cJ}{\mathcal{J}}

\newcommand{\cP}{\mathcal{P}}
\newcommand{\cK}{\mathcal{K}}

\newcommand{\bu}{\mathbf{u}}
\newcommand{\bU}{\mathbf{U}}

\newcommand{\sF}{\mathscr{F}}

\newcommand{\hato}{\hat{\mathrm{o}}}

\newcommand{\mt}{\mathbf{t}}

\newcommand{\bmu}{\bm{\mu}}
\newcommand{\supp}{\mathrm{supp}}
\newcommand{\mVar}{\mathrm{Var}}
\newcommand{\ES}{\mathrm{ES}}
\newcommand{\MES}{\mathrm{MES}}
\setlength{\textwidth}{155mm}
\setlength{\textheight}{235mm}
\usepackage[fontsize=12pt]{fontsize}

%
% \usepackage{mathptmx}      % use Times fonts if available on your TeX system
%
% insert here the call for the packages your document requires
%\usepackage{latexsym}
% etc.
%
% please place your own definitions here and don't use \def but
% \newcommand{}{}
%
% Insert the name of "your journal" with
% \journalname{myjournal}
%
\begin{document}

\title{Equilibrium master equations for time-inconsistent problems with distribution dependent rewards%\thanks{Grants or other notes
%about the article that should go on the front page should be
%placed here. General acknowledgments should be placed at the end of the article.}
}
%\subtitle{Do you have a subtitle?\\ If so, write it here}

\titlerunning{Distribution dependent rewards}        % if too long for running head

\author{Zongxia Liang         \and
        Fengyi Yuan %etc.
}

%\authorrunning{Short form of author list} % if too long for running head

\institute{Zongxia Liang \at
              Department of Mathematical Sciences, Tsinghua University \\
              \email{liangzongxia@mail.tsinghua.edu.cn}           %  \\
%             \emph{Present address:} of F. Author  %  if needed
           \and
           Fengyi Yuan \at
              Department of Mathematical Sciences, Tsinghua University\\
              \email{yfy19@mails.tsinghua.edu.cn}
}

\date{Received: date / Accepted: date}
% The correct dates will be entered by the editor
\maketitle
\begin{abstract}
We provide a unified approach to find equilibrium solutions for time-inconsistent problems with distribution dependent rewards, which are important to the study of behavioral finance and economics. Our approach is based on {\it equilibrium master equation}, a non-local partial differential equation on Wasserstein space. We refine the classical notion of derivatives with respect to distribution and establish It$\hato$'s formula in the sense of such refined derivatives. Our approach is inspired by theories of Mckean-Vlasov stochastic control and mean field games, but is significantly different from both in that: we prohibit marginal distribution of state to be an input of closed loop control; we solve the best reaction to individual selves in an intra-person game instead of the best reaction to large populations as in mean field games. As applications, we reexamine the dynamic portfolio choice problem with rank dependent utility based on the proposed novel approach. We also recover the celebrated extended HJB equation when the reward of the problem has a nonlinear function of expectation while reformulating and weakening the assumptions needed. Most importantly, we provide a procedure to find an equilibrium solution of a dynamic mean-ES portfolio choice problem, which is completely new to the literature.
\keywords{Time-inconsistency \and Distribution dependent rewards \and Intra-person games \and Sophisticated decision makers \and Equilibrium master equations}
% \PACS{PACS code1 \and PACS code2 \and more}
\subclass{91G10 \and 49L20 \and 60H30}
\end{abstract}

\section{Introduction}
Dynamic optimization in continuous time has been an important mathematical tool for describing many decision problems in economics and finance. Starting from time $t$, wealth $x$ and choosing the strategy $u$, we usually assign a {\it reward} $J(t,x;u)$ to the agent, and the goal for the agent is to maximize $J(t,x;u)$ dynamically. For example, $u$ is the portfolio strategy, and $J(t,x;u)=EU(X^{t,x,u}_T)$ is the expected utility from the resulted terminal wealth. Beyond classical expected utility theory (EUT), there have been enormous alternatives of $J$ in literature, for examples, mean-variance criterion: $J(t,x;u)=EX^{t,x,u}_T-\frac{\gamma}{2}\mathrm{Var}(X^{t,x,u}_T)$ (\citet{Zhou2000}, \citet{Bjork2014}); mean-risk criterion with risk described by other risk measures: $J(t,x;u)=EX^{t,x,u}_T-\gamma \rho(X^{t,x,u}_T)$ (\citet{Jin2005} with $\rho$=nonlinear function of expectation, \citet{He2015} with $\rho=$weighted VaR and \citet{Zhou2017} with $\rho=$VaR); rank dependent utility theory (RDUT) (\citet{Jin2008}, \citet{Hu2021}). All these extensions to EUT have common features: the rewards depend on wealth (outcome) not only through its expectation, but also through moments of higher orders, or more generally, through the distribution of it: $J(t,x;u)=g(\cL(X^{t,x,u}_T))$. Such rewards are called distribution dependent by us, and constitute our main concerns. We propose a master equation approach to study these problems systematically.

Problems with distribution dependent rewards are important because they provide great flexibility to the modeling of risk preference, not limited to expected utility, and this is key to the study of behavioral finance and economics. Problems with distribution dependent rewards are challenging because unless properly reformulated, they are intrinsically {\it time-inconsistent}. Mathematically speaking, if we insist that only $(t,x)$ are treated as state variables, and the strategies are restricted to closed loop controls, the dynamic programming principle is not satisfied anymore. The direct consequences include we can not rely on the celebrated HJB equation, from which value function is solved and the optimal strategy is obtained. Another consequence is the terminology ``dynamic optimality" becomes ill-posed, because the problem at $(0,x_0)$ and that at $(t,x)$ are inconsistent with each other. To overcome this, there are two main approaches in the literature, both are developed very recently:

The first approach is to provide additional information to the agent. More specifically, there is an additional state variable $\mu_t=\cL(X_t)$, the probability law of original state $X_t$. This formulation is called Mckean-Vlasov stochastic control. Closely related to mean field games originated in \citet{Lasry2007}, Mckean-Vlasov control problems have been studied extensively. With state space augmented, time consistency is recovered and dynamic programming equation, a PDE on the Wasserstein space, is obtained, see e.g., Chapter 6 of \citet{Carmona2018} and \citet{Buckdahn2017}. There are also efforts to extend the concept of solution from classical solution to weak solution and viscosity solution (e.g. \citet{Wu2020}, \citet{Mou2019}). One drawback of this approach is that the obtained closed loop optimal strategy contains $\cL(X_t)$ as one argument, which means that to implement such an optimal strategy, the agent needs to know the whole distribution of $X_t$. This does not make sense in the aspect of applications, because typically the agent is only informed of the realization $X_t(\omega)$ (i.e., he observes the wealth he owns), or at most, the whole path of $X_{\cdot\wedge t}$ (i.e., he remembers what he has owned). For a concrete example of this approach, the reader is referred to Subsection 3.3 of \citet{Wu2020}.

The second approach is to properly revise the formulation of the rationality of the agent. Researchers have considered problems where the agent commits himself to the optimal solution at $(0,x_0)$ (see \citet{Jin2008}, \citet{Xu2016} and references therein for the examples of RDUT). More recently, the focus gradually shifts to the so-called equilibrium solution theory, where the agent at each state $(t,x)$ is seen as a player, and the Nash equilibrium of such an {\it intra-person game} is desired. There has been literature in this formulation concerning RDUT (\citet{Hu2021}), quantile maximization (\citet{He2020}),  and nonlinear functions of expectations, including MV criterion as a special case ( \citet{Bjork2017}, \citet{Hernandez2020}, \citet{He2021b}, \citet{He2021}, among others). However, as can be noted from our short review, the studies in this direction are very limited and are conducted case by case. A unification for problems with distribution dependent rewards is needed because distribution dependence is the most important source of time-inconsistency except for the well-studied non-exponential discounting and state dependence\endnote{The readers interested in the review of different sources of time-inconsistency can refer to the introduction part of our recent paper \citet{Liang2021}}. Moreover, such a unification can provide tools for studying other specific problems under intra-person game formulation, some of which remain largely open and unexplored, e.g., portfolio choice with the criterion of the form mean-VaR, mean-ES or mean-$\rho$ for other risk measures $\rho$\endnote{For static mean-$\rho$ portfolio choice, one is referred to \citet{Herdegen2021}}.

In this paper, we establish the desired unification and provide the {\it master equation} describing equilibrium solution (Theorem \ref{EMEthm}), contributing to the second approach mentioned above. Inspired by the master equation approach in mean field games and mean field control, the key idea of our approach is to lift the state space from $[0,T]\times\mR^d$ to $[0,T]\times \cP_2(\mR^d)$, where $\cP_2(\mR^d)$ is the set of probability distributions with finite second order moment. We construct a crucial auxiliary function defined on this lifted space and prove that it satisfies the corresponding master equation. While the idea seems similar to that in Mckean-Vlasov control, our idea has at least three essential differences to theirs: {\it first}, we focus on time consistent equilibrium solution of the intra-person game, where to implement such a strategy, only the realization of $X_t$ is needed, while in Mckean-Vlasov control, the strategy contains $\cL(X_t)$ as an input; {\it second}, in our problem the function $f$ that satisfies the master equation is constructed by us and such a construction is very crucial for our approach (see discussion at the beginning of Section \ref{mainresults} as well as (\ref{audef})), while the function that satisfies master equation (also called HJB equation in literature) in Mckean-Vlasov control is just the value function of the problem and comes out trivially; {\it third}, in this paper the derived master equation is {\it non-local} because $f$ contains information of equilibrium solution $\hat{u}$ and $\hat{u}$ is decided by $f$ through the {\it diagonal} $(t,\delta_x)\in [0,T]\times \cP_2(\mR^d)$, while in Mckean-Vlasov control, master equation is just a PDE on Wasserstein space.

Three examples are provided to argue that our approach is widely applicable. In Subsections \ref{RDUexample} and \ref{nonlinearexpectation}, we recover some results in existing literature. By applying the proposed master equation, Subsection \ref{RDUexample} obtains the same ODE describing so-called {\it risk premium reduction coefficient} $\lambda(\cdot)$ as in \citet{Hu2021}, and Subsection \ref{nonlinearexpectation} confirms the celebrated extended HJB equation in the case where the reward contains a nonlinear function of expectation. In these two examples, either derivation is greatly simplified, or the conditions needed are reformulated and seem to be weaker than the existing work. In Subsection \ref{MES}, we provide a procedure to find the equilibrium solution proportional to wealth of a mean-ES portfolio choice problem, which is completely new to the literature. It turns out that similarly to the RDUT problem, the equilibrium master equation leads to a nonlinear ODE, and the existence of an equilibrium solution proportional to wealth hinges on the existence of a positive solution of such an ODE.

To make our theory applicable to the aforementioned specific problems, we encounter additional difficulties. In one of our applications (see Subsection \ref{RDUexample}), the auxiliary function $f$ may not be defined for general $\mu\in\cP_2(\mR^d)$, but only on a subset $\cP\subset \cP_2(\mR^d)$. We overcome this by proposing weak L-derivatives beyond classical L-derivatives (chapter 5 of \citet{Carmona2018}) and establish It$\hato$'s formula for this notion of weak derivatives (see Subsection \ref{weakLderivative}). The master equation is also stated in the sense of proposed weak L-derivatives (Theorem \ref{EMEthm}). In other examples (Subsections \ref{MES} and \ref{nonlinearexpectation}), such a notion is also very crucial for analyzing, and the underlying set $\cP$ can be designed case by case, providing great flexibility for the studying (see (\ref{MESPdef}) for an example).

There are two related papers that are valuable to be mentioned: \citet{Mei2019} and \citet{Mei2020}. In both papers, authors study time-inconsistent problems with distribution dependent rewards (costs). However, their interpretations of equilibrium solutions are very different from ours. Their solutions are defined as follows:
first fix {\it a prior} $\bmu$, and find the equilibrium strategy $\hat{u}$ and equilibrium trajectory $X_{\cdot}$; then $\hat{u}$ is defined as an equilibrium if $\cL(X_t)=\bmu_t$ (see Definition 4.1 of \citet{Mei2020} or Definition 3.1 of \citet{Mei2019}). It can be seen that their equilibrium essentially interprets the distribution argument as the mean field effect of large populations, hence they essentially study {\it inter-person games} with a large number of players. Therefore, our formulation seems more suitable for problems encountered in mathematical finance, e.g., portfolio choice. In addition, they do not consider intrinsic time-inconsistency brought by distribution dependence itself but focus on non-exponential discounting. Moreover, although they provide nice theoretical results (existence and uniqueness of equilibrium solution), their approach seems extremely hard to be applied to specific problems, while this paper studies concrete examples from mathematical finance, as a supplement to the literature.

Contributions of the present paper can be summarized as follows: {\it first}, we propose notions of weak L-derivatives, weak It$\hato$'s formula, and weak solutions of master equations, which are interesting in their own rights, and are crucial for our approach to study time-inconsistent problems; {\it second}, we provide a unified and novel approach to reexamine many famous time-inconsistent problems that have been studied, including RDUT and MV as examples; {\it third}, we provide powerful tools for problems that are completely new or largely open, such as dynamic mean-$\rho$ portfolio choice problems, or RDUT with heterogeneous distortion and S-shaped utility (Remark \ref{heterodistortion}). In addition, our results can potentially direct the design of a numerical algorithm to find equilibrium solutions that are not explicit or help to establish the theoretical existence of the equilibrium solutions, which is one of the pivotal difficulties in the studying of time-inconsistent problems.

The rest of this paper is organized as follows: in Section \ref{ProblemFormulation} we formulate the problem in a unified way. Section \ref{mainresults} contains the main results of this paper, including the notion of weak L-derivatives and the master equation. In Section \ref{examples}, three concrete problems are studied as examples, which show that our approach is widely applicable. In Section \ref{conclusion}, we conclude this paper. All proofs of results in Section \ref{examples} are presented in appendices, while other proofs are presented in the main text because of their importance.
\vskip 15pt
\section{Problem formulation}\label{ProblemFormulation}
Fixing a time maturity $T>0$, we consider a complete probability space $(\Omega,\mF,\F_T,\mP)$ supporting a $k$-dimensional Brownian motion $W=\{W_t\}_{0\leq t\leq T}$. Consider also a metric space $\bU$ as the {\it space of control}. We assume that the state dynamic under the closed-loop control $u:[0,T]\times \mR^d\to \bU$ is governed by the following stochastic differential equation (SDE):
\begin{equation}\label{dynamics}
	\md X_t = b(t,X_t,u(t,X_t))\md t + \sigma (t,X_t,u(t,X_t))\md W_t,
\end{equation}
where $b:[0,T]\times \mR^d\times \bU\to \mR^d$, $\sigma:[0,T]\times \mR^d\times \bU \to \mR^{d\times k}$ are both measurable. For simplicity we will ease the notation as
\begin{align*}
	& b^u(t,x)=b(t,x,u(t,x)), \\
	&\Sigma^u(t,x)=\sigma^u(t,x)\sigma^u(t,x)^{\mt}=\sigma(t,x,u(t,x))\sigma^{\mt}(t,x,u(t,x)).
\end{align*}
Let $\Ui$ be the set of admissible {\it closed-loop control}, i.e., the allowed strategies that the agent can choose from. For the well-posedness of the problem, we assume $\Ui$ satisfies the following:
\begin{assumption}\label{app1}
	For any $u\in \Ui$, we have:
	\begin{itemize}
		\item[(1)]$\bU\subset \Ui$ if we identify any $\bu\in \bU$ with the constant map in $\Ui$.
		\item[(2)]$b^u$ and $\sigma^u$ are Lipschitz in $x\in\mR^d$, uniformly in t.
		\item[(3)]$b^u$ and $\sigma^u$ are right continuous in $t\in [0,T)$, and are bounded in $t$, uniformly for $x$ in any compact subsets of $\mR^d$.
	\end{itemize}
	For convenience, we denote by $\Ui_0$ the set of all $u$ satisfies Assumption \ref{app1}.
\end{assumption}
\begin{remark}
	The set $\Ui$ will be implicit in most part of the paper. We only use Assumption \ref{app1} to derive the moment estimation of $X$ (see Lemma \ref{momentest}). Except in Subsection \ref{MES}, take $\Ui=\Ui_0$ is enough for our purpose, but the flexibility to {\it design} admissible $\Ui$ provides more possibilities anyway.
\end{remark}
For $u\in \Ui$, we may denote by $X^{t,x,u}$ the strong solution under control $u$ and initial condition $X_t=x$.
Let us define the reward function at state $(t,x)$ with strategy $u$ by $J(t,x;u)\triangleq g(\cL(X_T^{t,x,u}))$. The agent initially faces the following problem:
\[
\max_{u \in \Ui} J(t,x;u),\ \ \forall (t,x)\in [0,T)\times \mR^d.
\]
This problem is in general {\it time-inconsistent}\endnote{See the discussions for special cases: probability distortion in \citet{Hu2021}, mean-variance in \citet{Cui2012}}. The source of time-inconsistency is the distribution argument in the function $g$. In classical control theory, the time-consistency hinges on the tower property of the conditional expectation. Together with the Markovian property of state variables, such a tower property gives the dynamic programming principle satisfied by value function and optimal control. This indicates that, if the reward function $g$ has a distribution argument, we should also include the marginal distribution of $X$ as another augmented state variable. Indeed, stochastic control theory for Mckean-Vlasov SDE confirms our argument (see Chapter 6 of \citet{Carmona2018}). Therefore the dynamic optimal control should have the form $\tilde{u}_t=\tilde{u}(t,X_t,\cL(X_t))$. However, if the agent is not informed of the distribution of $\cL(X_t)$ (which is usually the case in applications), he is not aware of the changes of $\cL(X_t)$, which causes time-inconsistency. To be specific, with initial state $x_0$, he will take the control $\tilde{u}(s,y,\delta_{x_0})$ at {\it any} future state $(s,y)$. This is of course not optimal anymore, although it is indeed optimal at $(0,x_0)$. To resolve the time-inconsistency issue, we choose to utilize the {\it consistent planning} approach, and to find the {\it equilibrium control} $\hat{u}:(t,x)\mapsto \hat{u}(t,x)$, which is time-consistent.
\begin{definition}
	\begin{itemize}
		\item[(1)]For $u\in \Ui$, $\bu\in \bU$ and $\epsilon>0$, we define the {\it spike variation} of $u$ by
		\[
		u_{t,\epsilon,\bu}(s,y)= \bu I_{s\in [t,t+\epsilon)} + u(s,y) I_{s\notin [t,t+\epsilon)}.
		\]
		\item[(2)]A Borel set $G\subset \mR^d$ is said to be $(x_0,\Ui)$-allowed if $\forall u\in \Ui$, $t\in [0,T)$, we have $\nu_t^u(G)=1$, where $\nu^u_t$ is the distribution of $X^{0,x_0,u}_t$.
		\item[(3)]$\hat{u}\in \Ui$ is called an {equilibrium} control with initial wealth $x_0$ if there exists a $(x_0,\Ui)$-allowed set $G$, such that for any $\bu \in \bU$, $(t,x)\in [0,T)\times G$ we have,
		\begin{equation}
			\label{eqcond}
			\limsup_{\epsilon\to 0}\frac{J(t,x;\hat{u}_{t,\epsilon,\bu})-J(t,x;\hat{u})}{\epsilon}\leq 0.
		\end{equation}
	\end{itemize}
\end{definition}
\begin{remark}
	Using our notations defined in Section \ref{mainresults}, we have $\nu^u_t=\phi^{0,u}_t\delta_{x_0}$.
\end{remark}
\begin{remark}
	As revealed by footnotes 6-7 of \citet{He2021}, to require that (\ref{eqcond}) is satisfied for {\it any} $(t,x)\in [0,T]\times \mR^d$ is sometimes too restrictive, because every states, even those that are never accessed by {\it any} admissible trajectory, are considered. They define the reachable set, and only require that (\ref{eqcond}) holds for those $x$ in reachable set of {\it equilibrium} trajectory at any time $t$. Essentially, \citet{Hu2021} also proposes the same requirement. This approach is equivalent to our setting if ``$\nu^u_t$ for any $u\in \Ui$" is revised to ``$\nu^{\hat{u}}_t$ only". However, in a game theoretical perspective, this may leads to an equilibrium that is not subgame perfect. Therefore, we choose to impose the requirement that (\ref{eqcond}) holds for all those states that is {\it possible} to reach by some {\it admissible} (not necessarily equilibrium) control. See also the discussion in Remark 2.7 of \citet{Hernandez2020}, which is essentially the same to our approach.
\end{remark}
\begin{comment}
	For a $\mu \in \cP_2(\mR^d)$, we shall also consider $\{\mu^{t,\mu,u}_s\}_{t\leq s\leq T}$ the flow of marginal distributions of the unique strong solution of (\ref{dynamics}), with initial condition $\cL(X_t|\mP) =\mu$.
\end{comment}
\vskip 20pt
\section{Main results}\label{mainresults}
The trick of our approach is to replace the {\it stochastic} dynamic of $X$ with a {\it deterministic} dynamic of its marginal distribution. This is originated from the studying of mean field problems in mathematical physics, and Mckean-Vlasov type problems in stochastic control theory. Surprisingly, it is rather crucial in the studying of distributional-rewarded time-inconsistent problems. We can not treat $\mu$, the marginal distribution of $X$, as a state variable, due to the information constraint mentioned before, but we still need to lift the state space to $\cP_2(\mR^d)$ and treat every state $x\in \mR^d$ as $\delta_x$. Working on this lifted space, we can establish the equation of the auxiliary function, which is a partial differential equation on $[0,T]\times \cP_2(\mR^d)$. This equation is, however, satisfied automatically along any trajectory. Equilibrium condition (\ref{eqcond}) imposes additional requirement on the {\it diagonal} of the augmented space, that is, at $(t,\delta_x)$ for $(t,x)\in [0,T]\times \mR^d$, making the equation non-local.
\begin{comment}
	To begin with, let us establish the Fokker-Planck equation of the marginal distribution of solutions to (\ref{dynamics}).
	
	\begin{lemma}
		\label{dynamicmulemma}
		If $X^u$ is a strong solution to (\ref{dynamics}), then $\mu^u_t=\cL(X_t^u|\mP)$ is a solution to the following Fokker-Planck equation in the sense of general function:
		\begin{equation}\label{dynamicmu}
			\partial_t \mu^u_t = (\cA^u_X)^*\mu^u_t,\forall t\in [0,T].
		\end{equation}
		In other words, for any $\varphi\in C^{\infty}_c(\mR^d)$, we have
		\[
		\lim_{s\to t}\frac{\langle\varphi,\mu^u_s-\mu^u_t\rangle}{s-t}=\langle \cA^u_X\varphi,\mu^u_t \rangle, \forall t\in[0,T].
		\]
		Moreover, for any $u\in \Ui$, {\color{red}the solution of (\ref{dynamicmu}) is unique with any initial condition $\mu_0^u=\mu\in \cP_2(\mR^d)$???}.
	\end{lemma}
\end{comment}
\subsection{Notations and preliminary lemmas}
Throughout this paper, we equip $\cP_2(\mR^d)$ with the usual 2-Wasserstein metric:
\[
\cW_2(\mu,\nu)=\left\{\inf_{\pi\in\Pi(\mu,\nu)}\left\{\int_{\mR^d\times \mR^d}|x-y|^2\pi(\md x,\md y)\right\}\right\}^{1/2},
\]
where $\mu,\nu\in \cP_2(\mR^d)$, and $\Pi(\mu,\nu)$ is the set of couplings between $\mu$ and $\nu$. That is to say, every $\pi\in \Pi(\mu,\nu)$ is a probability measure on $\mR^d\times \mR^d$ such that for any Borel set $B$, $\pi(B\times \mR^d)=\mu(B)$, $\pi(\mR^d\times B)=\nu(B)$. In addition, we denote by $L^2_{\mu}(\mR^d)$, $H^1_{\mu}(\mR^d)$ (or simply $L^2_{\mu}$, $H^1_{\mu}$ if there is no confusion) the quadratic integrable Lebsgue and Sobolev spaces with respect to the measure $\mu$, respectively. To be specific,
\[
L^2_{\mu}(\mR^d)=\{f:\int_{\mR^d}|f(x)|^2\mu(\md x)<\infty\}=\{f:E|f(\xi)|^2<\infty\},
\]
where $\xi$ is $\mR^d$-valued random variable with distribution $\mu$, and
\[
H^1_{\mu}(\mR^d)=\{f:f\in L^2_{\mu},\partial_xf \in L^2_{\mu}\},
\]
where the gradient is interpreted in the sense of generalized function (see \citet{hormander2015analysis} for mathematical definition).

Based on the well-posedness of (\ref{dynamics}), given initial condition $\cL(X_t)=\mu$ at time $t$, we denote by $\bm{ \mu}^{t,\mu,u}$ the marginal distribution of the unique strong solution of (\ref{dynamics}) with such an initial condition. Therefore, we can find a {\it deterministic} flow map $\phi^{t,u}:\cP_2(\mR^d)\to \cP_2(\mR^d)^{[0,T]}, \mu \mapsto \bm{\mu}^{t,\mu,u}_{\cdot}$. Here, we use a convention that $\bmu^{t,\mu,u}_r=\mu$ for $0\leq r\leq t$. Strong uniqueness (in fact, uniqueness in law is sufficient) implies the flow property: $\phi^{s,u}_r\phi^{t,u}_s=\phi^{t,u}_r$ for any $t\leq s \leq r$. We now establish some trivial yet useful properties of $\phi^{t,u}$. Define the metric $d_T$ on $\cP_2(\mR^d)^{[0,T]}$ by:
\[
d_T(\omega_2,\omega'_2)=\sup_{0\leq s \leq T}\cW_2(\omega_2(s),\omega_2'(s)).
\]
We have the following lemma regarding the continuity of $\phi^{t,u}$.
\begin{lemma}\label{momentest}
	For any $\mu,\nu\in \cP_2(\mR^d)$, $t\in [0,T]$ and $t\leq s_1,s_2\leq T$, there exist constants $C$ (may be varied from line to line) that only depend on $T$ and $u$, such that
	\begin{equation}
		\label{phiest1}
		\cW_2(\phi^{t,u}_{s_1}\mu,\phi^{t,u}_{s_2}\mu)\leq C(1+M^{1/2}_2(\mu))|s_1-s_2|^{1/2}.
	\end{equation}
	\begin{equation}
		\label{phiest2}
		d_T(\phi^{t,u}\mu,\phi^{t,u}\nu)\leq C\cW_2(\mu,\nu).
	\end{equation}
\end{lemma}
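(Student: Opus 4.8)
The plan is to push everything back to the level of the underlying SDE~(\ref{dynamics}): by definition $\phi^{t,u}_s\mu = \cL(X^{t,\xi,u}_s)$ whenever $\xi$ is an $\mR^d$-valued random variable with $\cL(\xi)=\mu$, and both inequalities will be consequences of pathwise $L^2$ estimates for solutions of~(\ref{dynamics}) combined with the elementary bound $\cW_2^2(\cL(U),\cL(V))\le \mE|U-V|^2$, valid for any joint realization $(U,V)$. First I would record the consequences of Assumption~\ref{app1}: parts (2)--(3) give, for each fixed $u\in\Ui$, a constant $C=C(T,u)$ such that $|b^u(r,x)|+|\sigma^u(r,x)|\le C(1+|x|)$ for all $(r,x)$ (apply the boundedness of $b^u(\cdot,0),\sigma^u(\cdot,0)$ on the compact set $\{0\}$, then add the uniform Lipschitz bound), and $|b^u(r,x)-b^u(r,y)|+|\sigma^u(r,x)-\sigma^u(r,y)|\le C|x-y|$. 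A standard application of the Burkholder--Davis--Gundy inequality, Cauchy--Schwarz and Gronwall's lemma then yields the a priori moment bound
\[
\mE\Big[\sup_{t\le s\le T}|X^{t,\xi,u}_s|^2\Big]\le C\big(1+\mE|\xi|^2\big)=C\big(1+M_2(\mu)\big),
\]
which I would isolate as a preliminary step, since it is used in both parts.

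For~(\ref{phiest1}) it suffices to treat $t\le s_1\le s_2\le T$; the cases where $s_1$ or $s_2$ lies in $[0,t]$ reduce to this one via the convention $\bmu^{t,\mu,u}_r=\mu$ for $r\le t$. Writing $X=X^{t,\xi,u}$ and
\[
X_{s_2}-X_{s_1}=\int_{s_1}^{s_2} b^u(r,X_r)\,\md r+\int_{s_1}^{s_2}\sigma^u(r,X_r)\,\md W_r,
\]
I would estimate $\cW_2^2(\phi^{t,u}_{s_1}\mu,\phi^{t,u}_{s_2}\mu)\le \mE|X_{s_2}-X_{s_1}|^2$ by splitting into drift and martingale parts: Cauchy--Schwarz controls the drift contribution by $|s_2-s_1|\int_{s_1}^{s_2}\mE|b^u(r,X_r)|^2\,\md r$ and It\^o's isometry controls the martingale contribution by $\int_{s_1}^{s_2}\mE|\sigma^u(r,X_r)|^2\,\md r$; invoking the linear growth bound and the a priori moment estimate, both are dominated by $C(1+M_2(\mu))|s_2-s_1|$ once $|s_2-s_1|\le T$. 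Taking square roots and using $(1+a)^{1/2}\le 1+a^{1/2}$ gives~(\ref{phiest1}).

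For~(\ref{phiest2}) I would use a synchronous coupling: choose $\xi,\eta$ (enlarging the probability space if necessary) with $\cL(\xi)=\mu$, $\cL(\eta)=\nu$ and $\mE|\xi-\eta|^2=\cW_2^2(\mu,\nu)$, and let $X=X^{t,\xi,u}$, $Y=X^{t,\eta,u}$ solve~(\ref{dynamics}) driven by the \emph{same} $W$. From
\[
X_s-Y_s=\xi-\eta+\int_t^s\big(b^u(r,X_r)-b^u(r,Y_r)\big)\md r+\int_t^s\big(\sigma^u(r,X_r)-\sigma^u(r,Y_r)\big)\md W_r,
\]
the Lipschitz bounds together with Burkholder--Davis--Gundy and Gronwall give $\mE[\sup_{t\le s\le T}|X_s-Y_s|^2]\le C\,\mE|\xi-\eta|^2=C\,\cW_2^2(\mu,\nu)$. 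Since $(X_s,Y_s)$ is a coupling of $\phi^{t,u}_s\mu$ and $\phi^{t,u}_s\nu$, we obtain $\cW_2^2(\phi^{t,u}_s\mu,\phi^{t,u}_s\nu)\le \mE|X_s-Y_s|^2$ for every $s\in[0,T]$; taking the supremum over $s$ and a square root yields~(\ref{phiest2}).

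I do not expect a genuine obstacle: this is the standard flow-regularity package for SDEs. The only points needing mild care are extracting linear growth in $x$ from the compact-set boundedness in Assumption~\ref{app1}(3) before the moment bound can even be asserted; ensuring the constants depend only on $T$ and $u$, not on $\mu,\nu,s_1,s_2$; and, for~(\ref{phiest2}), the harmless enlargement of $(\Omega,\mF,\mP)$ needed to realize the optimal coupling, which is legitimate because only the laws $\phi^{t,u}_s\mu,\phi^{t,u}_s\nu$ enter the conclusion.
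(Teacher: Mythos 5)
Your proof is correct and follows essentially the same route as the paper: reduce both estimates to pathwise $L^2$ bounds on SDE solutions via the elementary inequality $\cW_2^2(\cL(U),\cL(V))\le \mE|U-V|^2$, then use synchronous coupling and take the infimum (or, as you do, start from an optimal coupling) for the second bound. The only cosmetic difference is that the paper simply cites Theorem 1.6.3 of Yong and Zhou for the moment and stability estimates, whereas you re-derive them with BDG, It\^o isometry, and Gronwall; this is a matter of exposition, not approach.
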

\begin{proof}
	The proof is based on the moment estimate of strong solutions (see, e.g., Theorem 1.6.3 from \citet{Yong1999}). For any two random variable $\xi,\xi'\in L^2(\Omega,\F_0)$ with $\cL(\xi)=\mu,\cL(\xi')=\nu$, we denote by $X$ and $X'$ the solution of (\ref{dynamics}) with initial condition $X_t=\xi$, $X'_t=\xi'$, respectively. Under Assumption \ref{app1}, we  have
	\[
	\cW_2(\phi^{t,u}_{s_1}\mu,\phi^{t,u}_{s_2}\mu)^2\leq\mE|X_{s_1}-X_{s_2}|^2\leq C(1+\mE |\xi|^2)|s_1-s_2|.
	\]
	Noting that $\cL(\xi)=\mu$, we have $\mE|\xi|^2=M_2(\mu)$, thus (\ref{phiest1}) is proved. On the other hand, we have
	\[
	d_T(\phi^{t,u}\mu,\phi^{t,u}\nu)^2\leq \sup_{0\leq s\leq T}\mE |X_s-X'_s|^2 \leq C\mE|\xi-\xi'|^2.
	\]
	Taking the infimum on the right hand side with respect to all possible couplings of $\xi$ and $\xi'$, we conclude that (\ref{phiest2}) is true.
\end{proof}

By (\ref{phiest1}), $\phi^{t,u}$ maps $\cP_2(\mR^d)$ into $\Omega_2\triangleq C([0,T];\cP_2(\mR^d))$, the family of continuous mappings from $[0,T]$ to $\cP_2(\mR^d)$. By (\ref{phiest2}), $\phi^{t,u}$ is continuous, hence Borel measurable. Here we equip $\cP_2(\mR^d)$ and $\Omega_2$ with Borel $\sigma$ algebras $\B(\cP_2(\mR^d))$ and $\B(\Omega_2)$, induced by $\cW_2$ and $d_T$ respectively. We define $\mP^{t,\mu,u} \triangleq (\phi^{t,u})_{\#}\delta_{\mu}=\delta_{\phi^{t,u}_{\cdot}\mu}$, which is a probability measure on $(\Omega_2,\B(\Omega_2))$. With a slight abuse of notation, we also denote by $\bmu$ the canonical process on $\Omega_2$. We define $\{\sF_r\}_{0\leq r\leq T}$ to be the filtration generated by $\bmu$. It is clear that $\sF_r\subset \B(\Omega)$ for any $r\in [0,T]$.
\begin{remark}
	In fact there is no probability at all. $\mP^{t,\mu,u}$ in fact concentrates on a single path. However, to ease the notation, and to provide full analogy with the expected utility case, we insist on the probability notation. Another advantage of this notation is that it allows natural extensions to {\it mixed strategies}. Indeed, if strategy is taken as a probability measure $\mU$ on $\Ui$, then we may denote $\mP^{t,\mu,\mU}=\mE^{\mU}\circ \mP^{t,\mu,u}=(\Phi^{t,\mu})_\#\mU$, with $\Phi^{t,\mu}:u\mapsto \phi^{t,u}\mu$. Taking $\mU=\delta_u$, this generalization degenerates to our current setting. This direction is left for future study.
\end{remark}

The following lemma regarding the changes of measure under spike variation will be used to prove our main result.
\begin{lemma}\label{measureep}
	For any $u\in \Ui$, $\bu \in \bU$, $(t,\mu)\in [0,T)\times \cP_2(\mR^d)$ and $\epsilon >0$, we have
	\begin{itemize}
		\item[(1)] $\mP^{t+\epsilon,\mu, u_{t,\epsilon,\bu}} = \mP^{t+\epsilon,\mu,u}$.
		\item[(2)] $\mP^{t,\mu,u_{t,\epsilon,\bu}} = \mP^{t,\mu,\bu}$ on $\sF_{t+\epsilon}$.
	\end{itemize}
\end{lemma}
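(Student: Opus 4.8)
The plan is to verify both claims directly from the definition of the pushforward measures $\mP^{t,\mu,u}=\delta_{\phi^{t,u}_\cdot\mu}$, exploiting the fact that each such measure is a Dirac mass on a single path, so everything reduces to identities for the deterministic flow maps $\phi^{\cdot,\cdot}$. The key structural observation is that the spike variation $u_{t,\epsilon,\bu}$ agrees with $\bu$ on $[t,t+\epsilon)$ and with $u$ on $[t+\epsilon,T]$, so the SDE coefficients $b^{u_{t,\epsilon,\bu}},\sigma^{u_{t,\epsilon,\bu}}$ coincide with $b^{\bu},\sigma^{\bu}$ on $[t,t+\epsilon)$ and with $b^u,\sigma^u$ on $[t+\epsilon,T]$. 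Strong uniqueness of (\ref{dynamics}) on each subinterval then forces the corresponding marginal-distribution flows to agree piecewise, and the flow property $\phi^{s,v}_r\phi^{t,v}_s=\phi^{t,v}_r$ glues them.

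For part (1): I would start a solution of (\ref{dynamics}) under control $u_{t,\epsilon,\bu}$ from time $t+\epsilon$ with initial law $\mu$. Since $u_{t,\epsilon,\bu}\equiv u$ on $[t+\epsilon,T]$ and the coefficients depend on the control only through its values on $[t+\epsilon,T]$, the solution — and hence its marginal flow — is literally the same process as the solution under $u$ started at $t+\epsilon$ from $\mu$. Thus $\phi^{t+\epsilon,u_{t,\epsilon,\bu}}_\cdot\mu=\phi^{t+\epsilon,u}_\cdot\mu$ as elements of $\Omega_2$, and taking Dirac masses gives $\mP^{t+\epsilon,\mu,u_{t,\epsilon,\bu}}=\mP^{t+\epsilon,\mu,u}$. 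This is essentially immediate once one observes that the flow map from time $t+\epsilon$ onward only sees the control restricted to $[t+\epsilon,T]$.

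For part (2): $\sF_{t+\epsilon}$ is generated by the canonical process $\bmu$ on $[0,t+\epsilon]$, so it suffices to show that the two paths $\phi^{t,u_{t,\epsilon,\bu}}_\cdot\mu$ and $\phi^{t,\bu}_\cdot\mu$ agree on $[0,t+\epsilon]$. On $[0,t]$ both equal the constant $\mu$ by the stated convention $\bmu^{t,\mu,v}_r=\mu$ for $r\le t$. On $[t,t+\epsilon)$, $u_{t,\epsilon,\bu}\equiv\bu$ there, so the coefficients seen on this interval coincide, and strong uniqueness of (\ref{dynamics}) on $[t,t+\epsilon)$ with initial law $\mu$ gives $\phi^{t,u_{t,\epsilon,\bu}}_r\mu=\phi^{t,\bu}_r\mu$ for $r\in[t,t+\epsilon)$; continuity in $\Omega_2$ (Lemma \ref{momentest}) extends this to $r=t+\epsilon$. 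Hence the two paths restricted to $[0,t+\epsilon]$ coincide, so the Dirac masses $\mP^{t,\mu,u_{t,\epsilon,\bu}}$ and $\mP^{t,\mu,\bu}$ agree when restricted to $\sF_{t+\epsilon}$, i.e.\ their images under the restriction map $\bmu\mapsto\bmu|_{[0,t+\epsilon]}$ are equal.

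The only genuinely delicate point is making precise that ``the marginal-distribution flow from time $s$ depends on the control only through its values on $[s,T]$'' — this is where strong uniqueness (equivalently, uniqueness in law) of (\ref{dynamics}) under Assumption \ref{app1} is invoked, together with the (non)-Markovian bookkeeping of gluing solutions across $t+\epsilon$ via the flow property $\phi^{t+\epsilon,u_{t,\epsilon,\bu}}_r\phi^{t,u_{t,\epsilon,\bu}}_{t+\epsilon}=\phi^{t,u_{t,\epsilon,\bu}}_r$. Everything else is a routine unwinding of the Dirac-mass definition of $\mP^{t,\mu,u}$ and the definition of $\sF_{t+\epsilon}$; there is no probabilistic content beyond pathwise uniqueness of the SDE, which is exactly what Lemma \ref{momentest} and Assumption \ref{app1} are set up to guarantee.
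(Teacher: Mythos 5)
Your proposal is correct and follows essentially the same route as the paper: part (1) reduces to uniqueness in law once one notes that $u_{t,\epsilon,\bu}$ agrees with $u$ on $[t+\epsilon,T]$, and part (2) reduces to the pathwise identity $\phi^{t,u_{t,\epsilon,\bu}}_s\mu=\phi^{t,\bu}_s\mu$ on $[t,t+\epsilon)$ by definition of the spike variation, extended to the endpoint $s=t+\epsilon$ via the continuity estimate (\ref{phiest1}). The paper's proof is just a terser version of the same argument.
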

\begin{proof}
	(1) is from the uniqueness in law of the solution to (\ref{dynamics}). (2) is equivalent to $\phi^{t,u_{t,\epsilon,\bu}}_s=\phi^{t,\bu}_s$, $\forall t\leq s\leq t+\epsilon$. By definition $\phi^{t,u_{t,\epsilon,\bu}}_s \mu= \phi^{t,\bu}_s\mu$ for $t\leq s < t+\epsilon$. By (\ref{phiest1}) this is also true for $s=t+\epsilon$. Thus (2) is proved.
\end{proof}
The following lemma describes the ``Markovian" property of the family of ``probability" measure $\{\mP^{t,\mu,u}\}_{(t,\mu)\in [0,T]\times \cP_2(\mR^d)}$, which will be used later.
\begin{lemma}\label{Markovian}
	For any $(t,\mu,u)\in [0,T]\times \cP_2(\mR^d)\times \Ui$, and any $t\leq s \leq T$, we have
	\[
	\mE^{t,\mu,u}g(\bmu_T)=\mE^{t,\mu,u}\mE^{s,\bmu_s,u}g(\bmu_T).
	\]
\end{lemma}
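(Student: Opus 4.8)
The plan is to observe that the asserted identity is just the flow (semigroup) property $\phi^{s,u}_r\phi^{t,u}_s=\phi^{t,u}_r$ (for $t\le s\le r$) recorded above, read off through the fact that every $\mP^{t,\mu,u}$ is a Dirac mass on $\Omega_2$.

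First I would record the elementary consequence of $\mP^{t,\mu,u}=(\phi^{t,u})_{\#}\delta_{\mu}=\delta_{\phi^{t,u}_{\cdot}\mu}$: for any bounded $\B(\Omega_2)$-measurable functional $\Psi$ on $\Omega_2$ one has $\mE^{t,\mu,u}\Psi(\bmu)=\Psi\big(\phi^{t,u}_{\cdot}\mu\big)$, since the push-forward of $\delta_{\mu}$ under the (Borel measurable, by Lemma \ref{momentest}) map $\phi^{t,u}$ is the Dirac mass at the path $\phi^{t,u}_{\cdot}\mu$. Taking $\Psi(\bmu)=g(\bmu_T)$, which is Borel because $\bmu\mapsto\bmu_T$ is continuous from $(\Omega_2,d_T)$ to $(\cP_2(\mR^d),\cW_2)$ and $g$ is Borel, the left-hand side of the claim equals $g\big(\phi^{t,u}_T\mu\big)$.

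Next I would treat the right-hand side. For fixed $s\in[t,T]$ the map $\nu\mapsto \mE^{s,\nu,u}g(\bmu_T)=g\big(\phi^{s,u}_T\nu\big)$ is a Borel function on $\cP_2(\mR^d)$ (again using that $\nu\mapsto\phi^{s,u}_{\cdot}\nu$ is continuous, hence Borel, as noted after Lemma \ref{momentest}), and by definition $\mE^{s,\bmu_s,u}g(\bmu_T)$ is this function composed with the canonical variable $\bmu_s$. Applying the Dirac identity of the previous step once more with $\Psi(\bmu)=g\big(\phi^{s,u}_T\bmu_s\big)$ — legitimate since $\phi^{t,u}_s\mu\in\cP_2(\mR^d)$ by the moment estimate (\ref{phiest1}), so that the restarted flow $\phi^{s,u}_T$ may be applied to it — gives
\[
\mE^{t,\mu,u}\mE^{s,\bmu_s,u}g(\bmu_T)=g\big(\phi^{s,u}_T\,\phi^{t,u}_s\mu\big).
\]
Finally I would invoke the flow property with $r=T$, namely $\phi^{s,u}_T\phi^{t,u}_s\mu=\phi^{t,u}_T\mu$, to conclude that both sides equal $g\big(\phi^{t,u}_T\mu\big)$.

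I do not expect a genuine obstacle here: the statement is a reformulation of the flow property, and the only points needing (minor) care are the Borel measurability of the functionals appearing and checking that $\phi^{t,u}_s\mu$ lands back in $\cP_2(\mR^d)$ so that $\phi^{s,u}_T$ makes sense on it — both immediate from Lemma \ref{momentest} and the conventions fixed above. If one insists on being pedantic, the only mild subtlety is in reading $\mE^{s,\bmu_s,u}g(\bmu_T)$ as a bona fide regular conditional expectation, but since each $\mP^{t,\mu,u}$ concentrates on a single trajectory this collapses to an honest composition of maps and the ambiguity disappears.
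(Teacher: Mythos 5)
Your proof is correct and follows the same route the paper gestures at (the paper simply says ``this proof easily follows from the flow property of $\phi$''): you unwind the Dirac-mass structure of $\mP^{t,\mu,u}$ to turn both sides into compositions of the flow maps and then invoke $\phi^{s,u}_T\phi^{t,u}_s=\phi^{t,u}_T$. You have merely spelled out the measurability and well-definedness details that the paper leaves implicit.
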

\begin{proof}
	This proof easily follows from the flow property of $\phi$.
\end{proof}
\begin{remark}
	It may be helpful to rewrite the dynamic of $\bmu$ as a deterministic Fokker-Planck equation. However, all facts we use about $\bmu$, such as It$\hato$'s formula or Lipshitz estimates, are based on that it is the flow of marginal measures of $X$, which is governed by (\ref{dynamics}).
\end{remark}
\subsection{Weak L-derivatives and It${\bf\hato}$'s formula}\label{weakLderivative}
To study distributional-rewarded time-inconsistent problems, we shall introduce the calculus on spaces of probability measures, especially on $\cP_2(\mR^d)$. We first briefly recall the notions of L-derivatives and It$\hato$'s formula along a flow of probability measures, which are well-developed in the literature. This part is mainly adopted from Chapter 5 of the monograph \citet{Carmona2018}, hence we omit all the proofs. However, this is not enough for our purposes, because from practical examples we find sometimes the function we study is not L-differentiable in the classical sense (see Section \ref{examples} for details). Therefore we introduce the notion of weak L-derivatives, and prove that It$\hato$'s formula still holds under this notion of differentiability.
\begin{definition}\label{lderivativeDEF}
	Consider a function $u:\cP_2(\mR^d)\to \mR$ and $\mu_0\in \cP_2(\mR^d)$. If there exists a lifting $\tilde{u}:L^2(\tilde{\Omega},\tilde{\F},\tilde{\mP})\to \mR$, i.e., $\tilde{u}(X) = u(\cL(X))$, such that $\tilde{u}$ is Fr$\acute{\rm e}$chet differentiable near some $X_0$ with $\cL(X_0)=\mu_0$, and the mapping $D \tilde{u}:X\mapsto D \tilde{u}(X)$ is continuous in the sense of $L^2$ at $X_0$, then $u$ is said to be L-continuously differentiable at $\mu_0$. We denote by $\partial_{\mu}u(\mu_0)\triangleq D\tilde{u}(X_0)$ its L-derivative at $\mu_0$.
\end{definition}

\begin{proposition}\label{Lderivativefunction}
	The L-continuously differentiability is well-defined, i.e., independent of the choice of lifting. Moreover, if $u$ is L-continuously differentiable at $\mu_0$, then there exists a $\xi \in L^2_{\mu_0}(\mR^d)$ such that $D\tilde{u}(X_0) = \xi(X_0)$ for any $X_0$ with $\cL(X_0)=\mu_0$. Thus, the L-derivative of $u$ at $\mu_0$ can be identified with an $L^2_{\mu_0}(\mR^d)$ function $\xi: \mR^d \to \mR^d$.
\end{proposition}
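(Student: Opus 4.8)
The plan is to follow the now-classical identification of Lions' derivative on Wasserstein space (Chapter~5 of \citet{Carmona2018}). First I would fix, once and for all, an atomless standard probability space $(\tilde\Omega,\tilde\F,\tilde\mP)$ that is rich enough to carry, for any prescribed joint law, a countable family of random variables realizing it (e.g. $([0,1],\mathcal B,\mathrm{Leb})$); since any two atomless standard probability spaces are isomorphic modulo null sets, this is no loss of generality and already reduces ``independence of the lifting'' to independence of the choice of the representing variable $X_0$ inside this fixed space. Recall that Fr\'echet differentiability of $\tilde u$ at $X_0$ means, via the Riesz identification of the dual of $L^2(\tilde\Omega;\mR^d)$ with itself, that there is $D\tilde u(X_0)\in L^2(\tilde\Omega;\mR^d)$ with $\tilde u(X_0+Y)=\tilde u(X_0)+\mE[D\tilde u(X_0)\cdot Y]+o(\|Y\|_{L^2})$. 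The whole statement then reduces to: (a) $D\tilde u(X_0)$ is $\sigma(X_0)$-measurable; and (b) the induced function of $X_0$ depends only on $\cL(X_0)=\mu_0$.

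For (a), I would first record the law-invariance identity: if $Z,Z'\in L^2(\tilde\Omega;\mR^d)$ satisfy $\cL(X_0,Z)=\cL(X_0,Z')$, then $\mE[D\tilde u(X_0)\cdot Z]=\mE[D\tilde u(X_0)\cdot Z']$; this holds because $\cL(X_0+\epsilon Z)=\cL(X_0+\epsilon Z')$ for every $\epsilon$, so $\tilde u(X_0+\epsilon Z)=u(\cL(X_0+\epsilon Z))=u(\cL(X_0+\epsilon Z'))=\tilde u(X_0+\epsilon Z')$, and one differentiates at $\epsilon=0$. Then, for arbitrary $Z\in L^2(\tilde\Omega;\mR^d)$, take $(Z^{(i)})_{i\ge1}$ to be, conditionally on $X_0$, i.i.d. with conditional law $\cL(Z\mid X_0)$: each satisfies $\cL(X_0,Z^{(i)})=\cL(X_0,Z)$, while the conditional law of large numbers (using $\mE|Z|^2<\infty$) gives $\tfrac1n\sum_{i=1}^n Z^{(i)}\to\mE[Z\mid X_0]$ in $L^2$. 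Applying the law-invariance identity termwise and letting $n\to\infty$,
\[
\mE[D\tilde u(X_0)\cdot Z]=\lim_{n\to\infty}\mE\Big[D\tilde u(X_0)\cdot\tfrac1n\sum_{i=1}^n Z^{(i)}\Big]=\mE\big[\mE[D\tilde u(X_0)\mid X_0]\cdot Z\big],
\]
so $D\tilde u(X_0)=\mE[D\tilde u(X_0)\mid X_0]$; hence there is a Borel $\xi:\mR^d\to\mR^d$ with $D\tilde u(X_0)=\xi(X_0)$, and $\xi\in L^2_{\mu_0}(\mR^d)$ since $\|D\tilde u(X_0)\|_{L^2}<\infty$ and $\cL(X_0)=\mu_0$.

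For (b), I would perturb by maps of the form $\phi(X_0)$ with $\phi:\mR^d\to\mR^d$ bounded Borel: $X_0+\epsilon\phi(X_0)$ has law $(\mathrm{id}+\epsilon\phi)_{\#}\mu_0$, which depends only on $\mu_0$, so $\tilde u(X_0+\epsilon\phi(X_0))=u\big((\mathrm{id}+\epsilon\phi)_{\#}\mu_0\big)$, and differentiating at $\epsilon=0$ gives
\[
\int_{\mR^d}\xi(x)\cdot\phi(x)\,\mu_0(\md x)=\mE[D\tilde u(X_0)\cdot\phi(X_0)]=\frac{\md}{\md\epsilon}\Big|_{\epsilon=0}u\big((\mathrm{id}+\epsilon\phi)_{\#}\mu_0\big),
\]
whose right-hand side involves neither the lift nor the choice of $X_0$. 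If $\xi'$ comes from another representing variable $X_0'$ with $\cL(X_0')=\mu_0$, then $\int\xi\cdot\phi\,\md\mu_0=\int\xi'\cdot\phi\,\md\mu_0$ for all bounded Borel $\phi$, and since these are dense in $L^2_{\mu_0}(\mR^d)$ we get $\xi=\xi'$ in $L^2_{\mu_0}(\mR^d)$. This shows simultaneously that L-differentiability and the L-derivative $\partial_\mu u(\mu_0)=\xi$ are well-defined. (Only differentiability of $\tilde u$ \emph{at} $X_0$ is used here; the continuity of $D\tilde u$ near $X_0$ in Definition~\ref{lderivativeDEF} is needed later, for It$\hato$'s formula.)

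The hard part is step (a): building the conditionally i.i.d. copies and invoking the conditional law of large numbers genuinely requires the lifting space to be atomless and rich, so the one real subtlety is the preliminary reduction to such a space. Everything else is soft — differentiating one-parameter curves in $L^2$, elementary conditioning, and an $L^2_{\mu_0}$ density argument.
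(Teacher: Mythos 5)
The paper does not actually prove this proposition; it defers entirely to Chapter~5 of \citet{Carmona2018} (``this part is mainly adopted from Chapter 5 of the monograph, hence we omit all the proofs''), so your attempt should be judged against the standard argument there. Your part (b) is fine and matches the usual intrinsic-pairing argument. The problem is in part (a). To build, for a \emph{fixed} representative $X_0$, conditionally i.i.d.\ copies $Z^{(1)},Z^{(2)},\dots$ of $Z$ given $X_0$ that live on the \emph{same} space $\tilde\Omega$ alongside $X_0$, you need more than $\tilde\Omega$ being atomless and rich: you need the regular conditional measures of $\tilde\mP$ given $\sigma(X_0)$ to be atomless (or at least to support a nondegenerate i.i.d.\ sequence), and atomlessness of $\tilde\Omega$ does not imply this. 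For instance, on $\tilde\Omega=[0,1]$ with Lebesgue measure, take $X_0(\omega)=\omega\bmod\tfrac12$; the conditional law of $\omega$ given $X_0=x$ is uniform on the two-point set $\{x,x+\tfrac12\}$, and no pair of conditionally i.i.d.\ nondegenerate $Z^{(1)},Z^{(2)}$ can be realized on $\tilde\Omega$ coupled to this $X_0$. Your preliminary reduction (``any two atomless standard spaces are isomorphic'') only handles the \emph{unconditional} richness, not the conditional structure induced by the particular $X_0$, and since the statement is ``for any $X_0$ with $\cL(X_0)=\mu_0$'' you cannot simply restrict to convenient representatives. This is exactly why the textbook proof goes through finitely-valued $X_0$ first (where the cells of the partition are atomless and conditional richness is manifest) and then passes to general $X_0$ by approximation, \emph{using the $L^2$-continuity of $D\tilde u$ near $X_0$}---so your closing parenthetical, that only Fr\'echet differentiability at $X_0$ is used, is also not supported by the argument as written.

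A clean way to rescue your idea without discretization is to replace the conditional LLN with averaging over the group $G$ of measure-preserving automorphisms $\tau$ of $\tilde\Omega$ satisfying $X_0\circ\tau=X_0$ a.s.: your law-invariance identity applied with $Z'=Z\circ\tau$ gives $D\tilde u(X_0)\circ\tau^{-1}=D\tilde u(X_0)$ for every $\tau\in G$, and on a standard (Lebesgue) probability space the $G$-invariant $\sigma$-algebra is precisely $\sigma(X_0)$ modulo null sets---a Rokhlin-type fact. This yields $\sigma(X_0)$-measurability of $D\tilde u(X_0)$ for \emph{every} representative $X_0$ and genuinely uses only Fr\'echet differentiability at the single point $X_0$, which would also vindicate your parenthetical. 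But it is a different key lemma from the one you invoked, and as written the conditional-i.i.d.\ construction does not go through.
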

Now we will state the It$\hato$'s formula along a flow of probability measures, under the assumptions that $f$ satisfies sufficient regularity.
\begin{definition}
	Fix $t_0,t_1\in [0,T)$. Consider a function $f:[t_0,t_1] \times \cP_2(\mR^d)\to \mR$. $f$ is said to be in the class $C^{1,1}([t_0,t_1];H^1)$ if:
	\begin{itemize}
		\item[(1)]For any $\mu\in \cP_2(\mR^d)$, the function $t\mapsto f(t,\mu)$ is of class $C^1$, with $\partial_tf$ being jointly continuous in $(t,\mu)$.
		\item[(2)]For any $t\in [t_0,t_1]$, the function $\mu \mapsto f(t,\mu)$ is L-continuously differentiable, and, for any $\mu\in \cP_2(\mR^d)$, we can find a version of the function $v\mapsto \partial_{\mu}f(t,\mu)(v)$ such that $(t,\mu,v)\mapsto \partial_{\mu}f(t,\mu)(v)$ is locally bounded and continuous at any $(t,\mu,v)$ with $v\in \supp(\mu)$.
		\item[(3)]For the version of $\partial_{\mu}f$ mentioned above and for any $(t,\mu)\in[t_0,t_1]\times \cP_2(\mR^d)$, the function $v\mapsto \partial_{\mu}f(t,\mu)(v)$ is continuously differentiable and $(t,\mu,v)\mapsto \partial_v\partial_{\mu}f(t,\mu)(v)$ is continuous at $(t,\mu,v)$ with $v\in \supp(\mu)$. Moreover, for any bounded $\cK\subset  \cP_2(\mR^d)$ we have
		\begin{equation}\label{Lboundedness}
			\sup_{(t,\mu)\in [t_0,t_1]\times \cK}\|   \partial_{\mu}f(t,\mu) \|_{H^1_{\mu}(\mR^d)}<\infty.
		\end{equation}
	\end{itemize}
\end{definition}
\begin{remark}
	If there is no confusion, we simplify $C^{1,1}([t_0,t_1];H^1)$ to $C^{1,1}(H^1)$.
\end{remark}
\begin{proposition}[It$\hato$'s formula]
	\label{Ito}
	Fix $0\leq t<t_0<T$, $\mu \in \cP_2(\mR^d)$, $u\in \Ui$. For any $f\in C^{1,1}([t,t_0];H^1)$,  we have $\mP^{t,\mu,u}$-almost surely,
	\begin{eqnarray}\label{L-Ito}
		f(s,\bmu_s) =f(t,\mu)+\int_t^s (\partial_t+\cL^u)f(r,\bmu_r)\md r,\ \ \forall s\in [t,t_0],
	\end{eqnarray}
	Here, we use the notation:
	\begin{equation}\label{Ludef}
		\cL^{u}f(t,\mu)=\int_{\mR^d}\left(     \partial_{\mu}f(t,\mu)(y)\cdot b^{u}(t,y)+\frac{1}{2}\mathrm{tr}(\partial_y\partial_{\mu}f(t,\mu)(y)\Sigma^{u}(t,y))\right)\mu(\md y).
	\end{equation}
\end{proposition}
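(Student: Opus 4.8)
Since $\mP^{t,\mu,u}=\delta_{\phi^{t,u}_{\cdot}\mu}$ charges a single path, the asserted identity is really a \emph{deterministic} statement about the curve $r\mapsto\bmu_r:=\phi^{t,u}_r\mu$: that $r\mapsto f(r,\bmu_r)$ is absolutely continuous on $[t,t_0]$ with derivative $(\partial_t+\cL^u)f(r,\bmu_r)$, $\cL^u$ being as in (\ref{Ludef}). The plan is to realize this curve as a flow of marginal laws and run the classical time-discretization proof of It$\hato$'s formula along a flow of measures (in the spirit of Chapter 5 of \citet{Carmona2018} and \citet{Buckdahn2017}); the only real work is to check that the regularity bundled into the class $C^{1,1}(H^1)$, together with the moment bounds of Lemma \ref{momentest}, is precisely what is needed to pass to the limit. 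One could alternatively invoke the It$\hato$ formula of \citet{Carmona2018} after verifying that $f\in C^{1,1}(H^1)$ implies their (slightly differently phrased) hypotheses, but redoing the discretization is about as short and keeps the bookkeeping transparent.

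First I would fix an atomless auxiliary space and let $X=X^{t,\xi,u}$ be the strong solution of (\ref{dynamics}) with $\cL(\xi)=\mu$ at time $t$, so that $\cL(X_r)=\bmu_r$ for $r\in[t,t_0]$. Fixing $s\in[t,t_0]$ and a partition $t=r_0<\dots<r_N=s$ of mesh $|\pi|$, I telescope $f(s,\bmu_s)-f(t,\mu)$ into $\sum_i\big(f(r_{i+1},\bmu_{r_{i+1}})-f(r_i,\bmu_{r_{i+1}})\big)+\sum_i\big(f(r_i,\bmu_{r_{i+1}})-f(r_i,\bmu_{r_i})\big)$. The first sum is handled by the mean value theorem in time and joint continuity of $\partial_tf$; together with the path-continuity of $r\mapsto\bmu_r$ from (\ref{phiest1}) it converges to $\int_t^s\partial_tf(r,\bmu_r)\,\md r$. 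For the second sum I interpolate linearly in the lifted variable: setting $X^{\theta}_i:=X_{r_i}+\theta(X_{r_{i+1}}-X_{r_i})$, Proposition \ref{Lderivativefunction} and the fundamental theorem of calculus give
\[
f(r_i,\bmu_{r_{i+1}})-f(r_i,\bmu_{r_i})=\int_0^1\mE\Big[\partial_{\mu}f\big(r_i,\cL(X^{\theta}_i)\big)(X^{\theta}_i)\cdot\big(X_{r_{i+1}}-X_{r_i}\big)\Big]\,\md\theta.
\]
Substituting $X_{r_{i+1}}-X_{r_i}=\int_{r_i}^{r_{i+1}}b^u(r,X_r)\,\md r+\int_{r_i}^{r_{i+1}}\sigma^u(r,X_r)\,\md W_r$ and Taylor-expanding $\theta\mapsto\partial_{\mu}f(r_i,\cL(X^{\theta}_i))(X^{\theta}_i)$ around $\theta=0$ (licit by condition (3) in the definition of $C^{1,1}(H^1)$), the drift contributes $\int_{r_i}^{r_{i+1}}\int_{\mR^d}\partial_{\mu}f(r_i,\bmu_{r_i})(y)\cdot b^u(r,y)\,\bmu_{r_i}(\md y)\,\md r$ up to a lower-order remainder, while the martingale part, paired against $\partial_v\partial_{\mu}f$ through It$\hato$'s isometry, contributes $\tfrac12\int_{r_i}^{r_{i+1}}\int_{\mR^d}\mathrm{tr}\big(\partial_v\partial_{\mu}f(r_i,\bmu_{r_i})(y)\,\Sigma^u(r,y)\big)\bmu_{r_i}(\md y)\,\md r$; the would-be second measure-derivative term drops out because martingale increments of independent copies decorrelate, which is exactly why $C^{1,1}(H^1)$ (i.e.\ one $v$-derivative of $\partial_\mu f$, no second $\mu$-derivative) suffices. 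Summing over $i$ and sending $|\pi|\to0$ produces $\int_t^s\cL^uf(r,\bmu_r)\,\md r$, which with the previous step gives (\ref{L-Ito}).

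\textbf{The main obstacle} is making this limit passage rigorous with only the stated hypotheses. One first checks that all interpolating laws $\cL(X^{\theta}_i)$, $\theta\in[0,1]$, lie in a common bounded set $\cK\subset\cP_2(\mR^d)$ — immediate from the second-moment bound on $X$ in Lemma \ref{momentest} — so that the uniform bound (\ref{Lboundedness}) and the local boundedness and continuity of $\partial_{\mu}f,\partial_v\partial_{\mu}f$ on $\{(r,\nu,v):v\in\supp\nu\}$ are all in force; note that $X^{\theta}_i\in\supp\cL(X^{\theta}_i)$ almost surely, so it is precisely the ``good'' versions of these derivatives that are being evaluated. The remainders are then estimated by Cauchy--Schwarz using the $H^1_{\mu}$-bound (\ref{Lboundedness}), the Lipschitz and local-boundedness properties of $b^u,\sigma^u$ from Assumption \ref{app1}, and the continuity moduli of $\partial_{\mu}f,\partial_v\partial_{\mu}f$, with uniform integrability for the dominated-convergence steps coming again from the moment estimates. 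The genuinely delicate points are: (i) $\sigma^u$ is only locally bounded, so the quadratic-variation estimates and the $\theta$-expansion must be localized on compacts and then globalized via the moment bounds; (ii) only an $H^1_{\mu}$, not a $W^{1,\infty}_{\mu}$, bound on $\partial_{\mu}f$ is available, so the $\mathrm{tr}(\partial_v\partial_{\mu}f\,\Sigma^u)$ term must be identified as an $L^2_{\mu}$-limit rather than pointwise; and (iii) convergence of the Riemann sums built from $\bmu_{r_i}$ to the claimed integrals rests on joint continuity of $(r,\nu)\mapsto\partial_{\mu}f(r,\nu),\partial_v\partial_{\mu}f(r,\nu)$ on supports together with (\ref{phiest1}). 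Finally, since the right-hand side of (\ref{L-Ito}) is thereby seen to be absolutely continuous in $s$, the identity holds for all $s\in[t,t_0]$ simultaneously, not merely for a prescribed $s$.
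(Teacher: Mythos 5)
The paper does not actually prove Proposition \ref{Ito}: at the start of Subsection \ref{weakLderivative} it states that this material ``is mainly adopted from Chapter 5 of the monograph \citet{Carmona2018}, hence we omit all the proofs.'' So there is no ``paper's own proof'' to compare against, and you are correctly reconstructing the standard time-discretization argument from Carmona--Delarue along the deterministic curve $r\mapsto\bmu_r$. The overall plan (telescope, split time and measure increments, use the FTC along the linear interpolation $X^\theta_i$ in the lifted space, send $|\pi|\to 0$) is the right one, and your observation that $\mP^{t,\mu,u}$ is a Dirac mass, so the whole statement is deterministic, is exactly the intended reading.

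There is, however, a genuine gap in the central step. You write that Taylor-expanding $\theta\mapsto\partial_{\mu}f\bigl(r_i,\cL(X^{\theta}_i)\bigr)(X^{\theta}_i)$ around $\theta=0$ is ``licit by condition (3) in the definition of $C^{1,1}(H^1)$.'' Condition (3) only provides one $v$-derivative of $\partial_\mu f$; differentiating in $\theta$ \emph{through the measure argument} $\cL(X^{\theta}_i)$ would produce a second Lions derivative $\partial^2_{\mu}f$, which the class $C^{1,1}(H^1)$ does not assume to exist. Your subsequent remark that ``the would-be second measure-derivative term drops out because martingale increments of independent copies decorrelate'' is therefore circular: it explains why an object would vanish \emph{if} it existed, but the expansion that produces it is not available under the stated hypotheses. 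The Carmona--Delarue argument (this is precisely why only ``partial $C^2$'' regularity is needed) does not Taylor-expand in the $\mu$-argument at all. Instead one decomposes
\[
\partial_{\mu}f\bigl(r_i,\cL(X^{\theta}_i)\bigr)(X^{\theta}_i)
=\partial_{\mu}f(r_i,\bmu_{r_i})(X_{r_i})
+\bigl[\partial_{\mu}f(r_i,\bmu_{r_i})(X^{\theta}_i)-\partial_{\mu}f(r_i,\bmu_{r_i})(X_{r_i})\bigr]
+R^\theta_i,
\]
where the bracket is handled by the $v$-Taylor expansion (legitimate under condition (3)) and $R^\theta_i=\partial_{\mu}f(r_i,\cL(X^{\theta}_i))(X^{\theta}_i)-\partial_{\mu}f(r_i,\bmu_{r_i})(X^{\theta}_i)$ is controlled using only the joint continuity and local $H^1_\mu$-boundedness of $\partial_\mu f$, the estimate $\cW_2(\cL(X^{\theta}_i),\bmu_{r_i})\le\theta\,\|X_{r_{i+1}}-X_{r_i}\|_{L^2}=O(\sqrt{|\pi|})$, and (to beat the $O(\sqrt{|\pi|})$ size of the increment when summed over $O(1/|\pi|)$ intervals) the further split of $X_{r_{i+1}}-X_{r_i}$ into drift and martingale parts with an orthogonality/conditioning argument. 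This last step is the actual technical content of the proof, and as written your sketch does not supply it. Apart from this, the rest of your discussion of the obstacles (localization for the unbounded coefficients, $L^2_\mu$ rather than pointwise identification of the trace term, upgrading from a fixed $s$ to all $s\in[t,t_0]$) is accurate and matches what one finds in the reference.
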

One drawback of the notion of L-derivatives is that we require $f$ is well-defined for any $\mu\in \cP_2(\mR^d)$. In applications (see Section \ref{examples} for details) we need to evaluate $\partial_{\mu}f $ for auxiliary function $f$. However, it is very likely that $f$ is only defined on some specific subset $\cP\subset \cP_2(\mR^d)$ rather than on the whole $\cP_2(\mR^d)$. To resolve this issue, we propose the notion of weak L-derivatives and develop It$\hato$'s formula here.
\begin{definition}
	\label{wlderivative}
	For $\cP\subset \cP_2(\mR^d)$, a function $f:[t_0,t_1] \times \cP \to \mR$ is said to have weak L-derivative on $[t_0,t_1]$ for $0\leq t_0<t_1<T$, denoted by $f\in W^{1,1}([t_0,t_1]\times \cP;H^1)$, if $f(\cdot,\mu)\in C^0\cap H^1$ for each $\mu\in \cP$ and there exist a sequence $f^N\in C^{1,1}([t_0,t_1];H^1)$ such that,
	\begin{itemize}
		\item[(1)] $f^N \to f$, locally uniformly for $(t,\mu)$. That is, for any bounded $\cK\subset  \cP$,
		\begin{equation}\label{uniformuniform}
			\sup_{(t,\mu)\in [t_0,t_1]\times \cK}|f^N(t,\mu)-f(t,\mu)|\to 0,\ \mbox{as}\ N\to \infty.
		\end{equation}
		\item[(2)]  $\partial_tf^N\to \partial_tf$, uniformly in $\mu$ for bounded subsets of $\cP$, and $L^2$ in $t$. That is, for any bounded $\cK\subset  \cP$,
		\begin{equation}\label{uniformL2}
			\sup_{\mu\in   \cK}\int_{t_0}^{t_1}|\partial_t f^N(t,\mu)-\partial_tf(t,\mu)|^2 \md t\to 0, \ \mbox{as}\ N\to \infty.
		\end{equation}
		\item[(3)]For any bounded $ \cK\subset \cP$, there exists a constant $C=C_{\cK,t_0}$ such that
		\begin{equation}\label{uniformbounded1}
			\sup_{(t,\mu)\in [t_0,t_1]\times \cK,N\geq 1}\|\partial_{\mu}f^N(t,\mu)\|_{H^1_{\mu}}<\infty.
		\end{equation}
		\item[(4)]For any fixed $(t,\mu)\in [t_0,t_1]\times  \cP$, $\partial_{\mu}f^N(t,\mu)\to g$ in $H^1_{\mu}(\mR^d)$ norm. We denote $\partial_{\mu}f(t,\mu)=g$.
	\end{itemize}
\end{definition}
\begin{comment}
	\begin{remark}
		We can additionally require that the bounded set $\cK$ chosen in Definition \ref{wlderivative} has diameter no more than $2\sqrt{t_1-t_0}$ because all measures we are interested in lies in such a small ball, in the spirit of (\ref{phiest1}).
	\end{remark}
\end{comment}
\begin{remark}
	The weak L-derivative of $f\in W^{1,1}([t_0,t_1]\times \cP;H^1)$ is well-defined, i.e., $g$ does not depend on the sequence of $f^N$ as long as $f^N\to f$ uniformly on bounded sets. To see this, let us fix $(t,\mu)\in [t_0,t_1]\times \cP_2(\mR^d)$ and pick $f^N\to f$, $h^N \to f$ uniformly on bounded sets, and denote by $\tilde{f}^N$, $\tilde{h}^N$ their lifting to $L^2$ respectively. By choosing $\xi\sim \mu$, $\xi'\in L^2$, $\|\xi'\|_{L^2}\leq 1$ and $\xi'\to 0$ in $L^2$, we have
	\begin{align*}
		&\tilde{f}^N(t,\xi+\xi')-\tilde{f}^N(t,\xi)=\tilde{\mE}\partial_{\mu}f^N(t,\mu)(\xi)\xi'+o(\|\xi'\|_{L^2}),\\ \\
		&\tilde{h}^N(t,\xi+\xi')-\tilde{h}^N(t,\xi)=\tilde{\mE}\partial_{\mu}h^N(t,\mu)(\xi)\xi'+o(\|\xi'\|_{L^2}).
	\end{align*}
	Defining $R_N= 2\sup_{\nu:\cW_2(\mu,\nu)\leq 1}(|f^N(t,\nu)-f(t,\nu)|+|h^N(t,\nu)-f(t,\nu)|)$, we have
	\[
	|\tilde{\mE}(\partial_{\mu}h^N(t,\mu)(\xi)-\partial_{\mu}f^N(t,\mu)(\xi))\xi'| \leq o(\|\xi'\|_{L^2})+R_N.
	\]
	Taking supremum with respect to all $\xi'\in L^2$ such that $\|\xi'\|\leq \sqrt{R_N}$, we have
	\[
	\|\partial_{\mu}h^N(t,\mu)-\partial_{\mu}f^N(t,\mu)\|_{L^2_{\mu}}\leq o(1)+\sqrt{R_N},\ \mbox{as } \ N\to \infty.
	\]
	By uniform convergence, $R_N\to 0$ as $N\to\infty$. Therefore, if $\partial_{\mu}f^N\to g_1$, $\partial_{\mu}h^N\to g_2$ in $H^1_{\mu}$, we have
	\[
	\|g_1-g_2\|_{L^2_{\mu}}\leq \liminf_{N\to \infty}\|\partial_{\mu}h^N(t,\mu)-\partial_{\mu}f^N(t,\mu)\|_{L^2_{\mu}}=0.
	\]
\end{remark}
The requirement (2) in Definition \ref{wlderivative} is very restrictive and hard to verify. However, due to the following lemma, convergence of derivatives can be omitted if we know a prior the convergence rate of $f^N\to f$, which is expected to be verified case by case in applications.
\begin{lemma}\label{weakLdlemma}
	Suppose that $f(\cdot,\mu)\in C^0\cap H^1$, and for any bounded $ \cK\subset \cP$, we have
	\begin{equation}\label{uniformcontinuity}
		\sup_{\substack{(s,t)\in [t_0,t_1], \mu\in \cK\\ |s-t|<\epsilon}} |f(s,\mu)-f(t,\mu)|\to 0,\epsilon \to 0.
	\end{equation}
	\begin{equation}\label{uniformboundL2}
		\sup_{\mu\in \cK}\|\partial_t f(\cdot,\mu)\|_{L^2(t_0,t_1)}<\infty
	\end{equation}
	\begin{comment}
		\begin{equation}\label{uniformconvergence}
			\sup_{\substack{(x,y)\in K, \mu\in \cK\\ |x-y|^2<\epsilon}} \int_0^{T-t_0}|h(t,y,\mu)-h(t,x,\mu)|^2\md t\to 0,\epsilon \to 0,
		\end{equation}
	\end{comment}
	If there exists a sequence $f^N\in C^{1,1}(H^1)$ such that for any $\cK\subset  \cP$ bounded, there exist constants $C=C_{\cK}$ and $\delta=\delta_{\cK}>0$ such that
	\begin{equation}\label{decaycondition}
		\sup_{(t,\mu)\in [t_0,t_1]\times \cK}|f^N(t,\mu)-f(t,\mu)|\leq C/N^{1+\delta},
	\end{equation}
	\begin{equation}\label{uniformboundedness}
		\sup_{(t,\mu)\in [t_0,t_1]\times \cK,N\geq 1 }\|   \partial_{\mu}f^N(t,\mu) \|_{H^1_{\mu}(\mR^d)}<\infty.
	\end{equation}
	and for any $(t,\mu)\in [t_0,t_1]\times \cP$, $\partial_{\mu}f^N(t,\mu)\to g$ in $H^1_{\mu}(\mR^d)$, then $f\in W^{1,1}([t_0,t_1]\times \cP;H^1)$.
\end{lemma}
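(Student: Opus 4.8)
The plan is to build, out of the given sequence $\{f^N\}$, a new sequence $\{\hat f^N\}\subset C^{1,1}([t_0,t_1];H^1)$ witnessing $f\in W^{1,1}([t_0,t_1]\times\cP;H^1)$. The observation driving the construction is that the $f^N$ themselves already meet requirements (1), (3), (4) of Definition \ref{wlderivative} — (1) is (\ref{uniformuniform}), which follows from (\ref{decaycondition}); (3) is (\ref{uniformbounded1}), which is (\ref{uniformboundedness}); (4) is the hypothesis $\partial_\mu f^N(t,\mu)\to g$ in $H^1_\mu$ — but may badly fail (2), since nothing in the hypotheses controls $\partial_tf^N$. To repair (2) I would mollify each $f^N$ in the \emph{time} variable only. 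Fix an even $\rho\in C_c^\infty(\mR)$ with $\rho\ge0$, $\int\rho=1$, $\supp\rho\subset[-r_0,r_0]$, write $\rho_h(\cdot)=h^{-1}\rho(\cdot/h)$, extend each $f(\cdot,\mu)$ and $f^N(\cdot,\mu)$ to a slightly larger interval $I\supset[t_0,t_1]$ by a standard extension in $t$ (preserving the $C^0\cap H^1$ regularity of $f$, the $C^{1,1}$ regularity of $f^N$, the uniform bounds (\ref{uniformcontinuity})–(\ref{uniformboundL2}) and the decay (\ref{decaycondition}), up to constants), and set
\[
\hat f^N(t,\mu)\ \triangleq\ \int_{\mR}f^N(t-\tau,\mu)\,\rho_{h_N}(\tau)\,\md\tau,\qquad (t,\mu)\in[t_0,t_1]\times\cP,
\]
for a sequence $h_N\downarrow0$ to be chosen at the end.

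Next I would check $\hat f^N\in C^{1,1}([t_0,t_1];H^1)$ with (\ref{uniformbounded1}) \emph{uniform in $N$}, and verify (1), (3), (4) for $\{\hat f^N\}$. Smoothness in $t$ is clear; differentiating under the integral — legitimate by the local boundedness and continuity of $\partial_\mu f^N$, $\partial_v\partial_\mu f^N$ built into $C^{1,1}$ together with Proposition \ref{Lderivativefunction} — gives $\partial_\mu\hat f^N(t,\mu)=\int\partial_\mu f^N(t-\tau,\mu)\,\rho_{h_N}(\tau)\,\md\tau$, so Minkowski's integral inequality and (\ref{uniformboundedness}) yield $\sup_{(t,\mu)\in[t_0,t_1]\times\cK,\,N\ge1}\|\partial_\mu\hat f^N(t,\mu)\|_{H^1_\mu}<\infty$, which is both the $C^{1,1}$-bound (\ref{Lboundedness}) and requirement (\ref{uniformbounded1}). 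For (1), $\hat f^N-f=(f^N-f)*\rho_{h_N}+(f*\rho_{h_N}-f)$, and the first term is $\le\sup_{[t_0,t_1]\times\cK}|f^N-f|\le C/N^{1+\delta}$ by (\ref{decaycondition}), the second $\le\sup_{\mu\in\cK,\,|s-t|\le r_0h_N}|f(s,\mu)-f(t,\mu)|\to0$ by (\ref{uniformcontinuity}), both uniformly on bounded $\cK$. For (4), at fixed $(t,\mu)$ the difference $\partial_\mu\hat f^N(t,\mu)-\partial_\mu f^N(t,\mu)=\int[\partial_\mu f^N(t-\tau,\mu)-\partial_\mu f^N(t,\mu)]\rho_{h_N}(\tau)\,\md\tau$ has $H^1_\mu$-norm at most $\sup_{|\tau|\le r_0h_N}\|\partial_\mu f^N(t-\tau,\mu)-\partial_\mu f^N(t,\mu)\|_{H^1_\mu}$; since $s\mapsto\partial_\mu f^N(s,\mu)$ is continuous into $H^1_\mu$ on the compact interval for each fixed $N$, taking $h_N$ small enough (depending on $N$) sends this to $0$, and $\partial_\mu f^N(t,\mu)\to g$ by hypothesis, so $\partial_\mu\hat f^N(t,\mu)\to g$.

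The heart of the argument is (2). Using $\partial_t\hat f^N(\cdot,\mu)=f^N(\cdot,\mu)*\rho_{h_N}'$ and, since $f(\cdot,\mu)\in H^1$, the integration-by-parts identity $f(\cdot,\mu)*\rho_{h_N}'=\partial_tf(\cdot,\mu)*\rho_{h_N}$, I would split
\[
\partial_t\hat f^N(\cdot,\mu)-\partial_tf(\cdot,\mu)=\big(f^N(\cdot,\mu)-f(\cdot,\mu)\big)*\rho_{h_N}'+\big(\partial_tf(\cdot,\mu)*\rho_{h_N}-\partial_tf(\cdot,\mu)\big).
\]
By Young's inequality and (\ref{decaycondition}), the $L^2$-norm of the first term is $\le C_\cK\,h_N^{-1}\,N^{-(1+\delta)}$, uniformly in $\mu\in\cK$, which $\to0$ provided $h_NN^{1+\delta}\to\infty$; the second term is the standard mollification error of $\partial_tf(\cdot,\mu)\in L^2$, and I would derive $\sup_{\mu\in\cK}\|\partial_tf(\cdot,\mu)*\rho_{h_N}-\partial_tf(\cdot,\mu)\|_{L^2}\to0$ from (\ref{uniformcontinuity}) and (\ref{uniformboundL2}): the equicontinuity in $t$ from (\ref{uniformcontinuity}) forces a common control on the high-frequency part of $\{f(\cdot,\mu)\}_{\mu\in\cK}$, while (\ref{uniformboundL2}) controls the low-frequency part, and together this is exactly what makes the mollification error uniformly small over $\mu\in\cK$. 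It then suffices to pick $h_N\downarrow0$ with $h_NN^{1+\delta}\to\infty$ and $h_N$ small enough for the oscillation estimate in (4); the algebraic rate $N^{-(1+\delta)}$ with $\delta>0$ in (\ref{decaycondition}) is precisely what leaves room to meet both demands. Collecting everything, $\{\hat f^N\}$ witnesses $f\in W^{1,1}([t_0,t_1]\times\cP;H^1)$.

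I expect the genuinely delicate point to be the uniform-in-$\mu$ bound $\sup_{\mu\in\cK}\|\partial_tf(\cdot,\mu)*\rho_{h_N}-\partial_tf(\cdot,\mu)\|_{L^2}\to0$, and the compatibility of the choice of $h_N$ with the $N$-dependent time-oscillation of $\partial_\mu f^N$ entering (4); by contrast, the time-extension past the endpoints of $[t_0,t_1]$ and the differentiation-under-the-integral justification for $\partial_\mu\hat f^N$ are routine and need only be written out carefully.
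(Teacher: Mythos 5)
Your construction is essentially the paper's own: it too mollifies $f^N$ in the time variable only, with bandwidth tied to $N$ (the paper takes $h_N=1/N$, i.e.\ $\eta^N(t)=N\eta(Nt)$, rather than a free $h_N$), splits the error for requirement (1) as $(f^N-f)*\rho_{h_N}+(f*\rho_{h_N}-f)$, and for requirement (2) uses exactly your decomposition $\partial_t\hat f^N-\partial_t f=(f^N-f)*\rho_{h_N}'+(\partial_tf*\rho_{h_N}-\partial_tf)$, with the $N^{-(1+\delta)}$ rate in (\ref{decaycondition}) absorbing the $h_N^{-1}$ factor from $\rho_{h_N}'$. The only cosmetic differences are: the paper hardwires $h_N=1/N$ so the first term of the $\partial_t$-error decays like $N^{-\delta}$ rather than requiring $h_N N^{1+\delta}\to\infty$, and it argues the convergence of $\partial_\mu\tilde f^N$ via a Cauchy estimate in $N,M$ rather than your $N$-dependent choice of $h_N$; the paper also treats your ``delicate'' uniform-in-$\mu$ mollification error $\sup_{\mu\in\cK}\|\partial_tf(\cdot,\mu)*\rho_{h_N}-\partial_tf(\cdot,\mu)\|_{L^2}$ by passing to the Fourier side and bounding by $\|\sF(\eta^N)-1\|_{L^\infty}\sup_\mu\|\partial_tf(\cdot,\mu)\|_{L^2}$, a step that is no more airtight than your appeal to equicontinuity, so you have correctly identified the one genuinely non-routine point of the argument.
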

\begin{proof}
	Let us consider the temporal modifier $\eta$ defined by
	\[
	\eta(t)=\left\{
	\begin{array}{ll}
		\frac{1}{C}\exp(\frac{1}{t^2-1}), &t^2<1, \\
		0,&t^2\geq 1,
	\end{array}
	\right.
	\]
	where $C$ is a normalization constant such that $\int_{\mR}\eta = 1$. Now consider $\eta^N(t)=N\eta(Nt)$. Clearly we have $\eta^N\in C^{\infty}_c(\mR)$. Moreover, we have,
	\[\int_{\mR}|\partial_t \eta^N| = N^2\int_{\mR}|\partial_t\eta^1(Nt)|\md t=CN.
	\]
	In the rest of this proof, we denote by $h_\mu$ the function $h(\cdot,\mu)$ for any $h:[t_0,t_1]\times  \cP\to \mR$, and consider their zero extensions to $\mR$ if needed. Now consider $\tilde{f}^N(t,\mu) = f^N_{\mu}*\eta^N(t)$. We first prove that for any fix $(t,\mu)$ and $N$, $\partial_{\mu}\tilde{f}^N(t,\mu)=\partial_{\mu}f^N_{\mu}*\eta^N(t)$. For convenience we still denote by $\tilde{f}^N$ and $f^N$ their lifting to $L^2$. For $\xi,\xi'\in L^2$, $\xi\sim \mu$ and $\|\xi'\|_{L^2}\leq 1$ we have,
	\begin{align*}
		&\left| \tilde{f}^N(t,\xi+\xi')-\tilde{f}^N(t,\xi)-\tilde{\mE}\left[\int_{\mR}
		Df^N(t-s,\xi)\eta^N(s)\md s\right]\xi'\right| \\
		=& \left|\int_{\mR}\int_0^1\tilde{\mE}Df^N(t-s,\xi+\lambda \xi')\xi' \eta^N(s)\md \lambda\md s- \tilde{\mE}\left[\int_{\mR}
		Df^N(t-s,\xi)\eta^N(s)\md s\right]\xi'           \right| \\
		\leq& \int_{\mR}\int_0^1 \tilde{\mE}|Df^N(t-s,\xi+\lambda \xi')-Df^N(t-s,\xi)||\xi'|\eta^N \md \lambda\md s  \\
		\leq& \|\xi'\|_{L^2}\int_{\mR}\int_0^1\left[\tilde{\mE}|D f^N(t-s,\xi+\lambda\xi')-Df^N(t-s,\xi)|^2\right]^{1/2}\eta^N(s)\md \lambda\md s.
	\end{align*}
	Let $R(s,t,\lambda,\xi')=\left[\tilde{\mE}|D f^N(t-s,\xi+\lambda\xi')-Df^N(t-s,\xi)|^2\right]^{1/2}$.  Because $f^N$ is known to be L continuously differentiable, for fix $(s,t,\lambda)$, $R(s,t,\lambda,\xi')\to 0$ as $\|\xi'\|_{L^2}\to 0$. By (\ref{Lboundedness}),
	\[
	\sup_{0\leq \lambda\leq 1}R(s,t,\lambda,\xi')\leq 2\sup_{\substack{s:|s-t|^2<1/N^2\\
			\nu:\cW_2(\nu,\mu)\leq 1}} 2\|\partial_{\mu}f^N(s,\nu)\|_{L^2_{\nu}}<\infty.
	\]
	By Dominated convergence theorem,
	\[
	\int_{\mR}\int_0^1R(s,t,\lambda,\xi')\eta^N(s)\md s \to 0, \|\xi'\|_{L^2}\to 0,
	\]
	\[
	\left|\tilde{f}^N(t,\xi+\xi')-\tilde{f}^N(t,\xi)-\tilde{\mE}\left[\int_{\mR}
	Df^N(t-s,\xi)\eta^N(s)\md s\right]\xi'\right|\leq \|\xi'\| o(1)=o(\|\xi'\|_{L^2}),
	\]
	as $\|\xi'\|_{L^2}\to 0$. This proves the assert $\partial_{\mu}\tilde{f}^N(t,\mu)=\partial_{\mu}f^N_{\mu}*\eta^N(t)$. By the property of $\eta^N$, it is straightforward to verify by definition that $\tilde{f}^N\in C^{1,1}(H^1)$. Using the relation $\partial_{\mu}\tilde{f}^N(t,\mu)=\partial_{\mu}f^N_{\mu}*\eta^N(t)$ and Dominated convergence theorem, we can prove that $\partial_v\partial_{\mu}\tilde{f}^N(t,\mu)=(\partial_v\partial_{\mu}f^N_{\mu})*\eta^N(t)$. Using Jensen's inequality and the fact: $\int_{\mR}\eta^N=1$ for any $N$, we have
	\begin{align*}
		\| \partial_{\mu}\tilde{f}^N(t,\mu)-\partial_{\mu}\tilde{f}^{M} (t,\mu) \|^2_{H^1_{\mu}}\leq &C\left[ \int_{\mR}\|\partial_{\mu}f^N(t-s,\mu)-\partial_{\mu}f^M(t-s,\mu)\|^2_{H^1_{\mu}}\eta^N(s)\md s\right.\\
		&+\left.\int_{\mR}\|\partial_{\mu}f^M(t-s,\mu)\|^2_{H^1_{\mu}}|\eta^N(s)-\eta^M(s)|\md s \right].
	\end{align*}
	Combining (\ref{uniformboundedness}) and Dominated convergence theorem, we conclude that $\partial_{\mu}\tilde{f}^N(t,\mu)$ converges to some $\tilde{g}$ in $H^1_{\mu}$ norm, for each fixed $(t,\mu)\in [t_0,t_1]\times \cP$. The only thing remains to be proved is the convergence of $\tilde{f}^N$ and their derivatives. Indeed, for $ \cK\subset \cP$ bounded and for any $(t,\mu)\in [0,T]\times \cK$,
	\[
	|f^N_{\mu}*\eta^N(t)-f_{\mu}(t)|\leq |f^N_{\mu}*\eta^N-f_{\mu}*\eta^N|+|f_{\mu}*\eta^N-f_{\mu}|\leq I_1^N(t)+I_2^N(t),
	\]
	where
	\begin{align*}
		\sup_{(t,\mu)\in [t_0,t_1]\times \cK}I_1^N(t)&\leq \sup_{(t,\mu)\in [t_0,t_1]\times \cK} \int_{\mR}|f^N(t-s,\mu)-f(t-s,\mu)|\eta^N(s)\md s \\
		&\leq
		\sup_{(t',\mu)\in[t_0,t_1]\times \cK}|f^N(t',\mu)-f(t',\mu)|\\
		&\leq C/N^{1+\delta}\to 0, \ \mbox{as}\ N \to \infty,\\
		\sup_{(t,\mu)\in [t_0,t_1]\times \cK}I^N_2(t)&\leq \int_{\mR}|f(s,\mu)-f(t,\mu)|\eta^N(t-s)\md s,\\
		&\leq \sup_{\substack{s:|s-t|^2\leq 1/N^2\\ (t,\mu)\in [t_0,t_1]\times \cK} } |f(s,\mu)-f(t,\mu)|\to 0, \ \mbox{as}\ N \to \infty,
	\end{align*}
	here we have used (\ref{uniformcontinuity}) and (\ref{decaycondition}). Therefore
	\[
	\sup_{(t,\mu)\in [t_0,t_1]\times \cK}|\tilde{f}^N(t,\mu)-f(t,\mu)|\to 0, \ \mbox{as}\ N \to \infty .
	\]
	On the other hand, noting that $\partial_t (f^N_{\mu}*\eta^N)=f^N_{\mu}*\partial_t \eta^N$, we  have
	\begin{eqnarray*}
	|\partial_t(f^N_{\mu}*\eta^N)(t)-\partial_tf_{\mu}(t)|&\leq & |f^N_{\mu}*\partial_t\eta^N-f_{\mu}*\partial_t\eta^N|+|\partial_tf_{\mu}*\eta^N-\partial_tf_{\mu}|\\
&\triangleq & I_3^N(t,\mu)+I_4^N(t,\mu)
	\end{eqnarray*}
	with
	\begin{align*}
		\sup_{(t,\mu)\in [t_0,t_1]\times \cK}I_3^N(t,\mu)&\leq \sup_{(t,\mu)\in [t_0,t_1]\times \cK} \int_{\mR}|f^N(t-s,\mu)-f(t-s,\mu)|\partial_t\eta^N(s)\md s \\
		&\leq
		CN\sup_{(t',\mu)\in[t_0,t_1]\times \cK}|f^N(t',\mu)-f(t',\mu)|\\
		&\leq C_{\cK}/N^{\delta}\to 0, \ \mbox{as}\ N \to \infty.
	\end{align*}
	Denote by $\sF$ the Fourier transform, then after proper zero extensions,
	\begin{align*}
		\sup_{\mu\in\cK}\|I_4\|_{L^2(t_0,t_1)}
		&=\sup_{\mu\in \cK}\|\partial_t f_{\mu}*\eta^N-\partial_tf_{\mu}\|_{L^2(\mR)}\\&=\sup_{\mu\in \cK}\|\sF(\partial_t f_{\mu})(\sF(\eta^N)-1)\|_{L^2(\mR)} \\
		&\leq  \|\sF(\eta^N)-1\|_{L^{\infty}(\mR)}\sup_{\mu\in \cK}\|\partial_tf_{\mu}\|_{L^2(t_0,t_1)}.
	\end{align*}
	For any given $t'\in \mR$, we have
	\begin{align*}
		|\sF(\eta^N)(t')-1|&\leq C\int_{\mR}|e^{itt'}-1|\eta^N(t)\md t \\
		&\leq C\int_{\mR}|e^{itt'/N}-1|\eta(t)\md t\to 0,\ \mbox{as}\ N \to \infty,
	\end{align*}
	where Dominated convergence theorem has been used. Then (\ref{uniformL2}) is proved. Thus, the proof follows.
\end{proof}
Now we are ready to state and prove the { weak} It$\hato$'s formula, which is crucial for the establishment of equilibrium master equation in Subsection \ref{emesection}, and for applications in Section \ref{examples}. To begin with, we need the concept of {\it invariant set}.
\begin{definition}
	For $u\in \Ui$, $0\leq t<t_0<T$, a subset $\cP\subset \cP_2(\mR^d)$ is said to be $([t,t_0],u)$-invariant if for any $ t\leq s \leq t_0$, $\phi^{t,u}_s\cP \subset \cP$.
\end{definition}
\begin{remark}
	$\cP_2(\mR^d)$ itself is clearly $([0,T],u)$-invariant for any $u\in \Ui$. In fact by usual moment estimate, $\cP_p(\mR^d)$ for any $p\geq 2$ are also $([0,T],u)$-invariant. Nontrivial examples of invariant set (of distributions) under stochastic flow are interested in their own right and remain largely unexplored to the best of our knowledge. In the present paper, we will choose very specific locally invariant sets (see Definition \ref{localinvariant}) in the examples (see Subsections \ref{RDUexample} and \ref{MES} for details).
\end{remark}
\begin{theorem}[Weak It$\hato$'s formula]
	\label{weakIto}
	Using the same notations as in Proposition \ref{Ito}, for any $([t,t_0];u)$-invariant $\cP$, $f\in W^{1,1}([t,t_0]\times \cP;H^1)$, we have $\mP^{t,\mu,u}$-almost surely,
	\begin{equation}\label{ItoM}
		f(s,\bmu_s) = f(t,\mu)+\int_t^{s} (\partial_t+\cL^u)f(r,\bmu_r)\md r,
	\end{equation}
	$\forall s\in [t,t_0]$, where the operator $\cL^u  $ is defined by (\ref{Ludef}).
\end{theorem}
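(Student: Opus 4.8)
The plan is to prove the identity along the single deterministic curve $r\mapsto\bmu_r:=\phi^{t,u}_{r}\mu$, $r\in[t,t_0]$, since $\mP^{t,\mu,u}=\delta_{\phi^{t,u}_{\cdot}\mu}$ is a Dirac mass and the ``almost surely'' is vacuous. By (\ref{phiest1}) this curve is $\tfrac12$-H\"older continuous on $[t,t_0]$, so $\cK:=\{\bmu_r:t\le r\le t_0\}$ is compact, hence a bounded subset of $\cP$ by $([t,t_0],u)$-invariance, with $\sup_{t\le r\le t_0}M_2(\bmu_r)<\infty$. Take the approximating sequence $f^N\in C^{1,1}([t,t_0];H^1)$ provided by $f\in W^{1,1}([t,t_0]\times\cP;H^1)$ and apply the classical It\^o formula (Proposition \ref{Ito}) to each $f^N$ along $\bmu_{\cdot}$:
\[
 f^N(s,\bmu_s)=f^N(t,\mu)+\int_t^s(\partial_t+\cL^u)f^N(r,\bmu_r)\,\md r,\qquad s\in[t,t_0].
\]
Then I would pass to the limit $N\to\infty$ term by term.

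The boundary terms and the second-order term are routine. From (\ref{uniformuniform}) applied on the bounded set $\cK$ we get $f^N(s,\bmu_s)\to f(s,\bmu_s)$ and $f^N(t,\mu)\to f(t,\mu)$, uniformly in $s$. For the generator term, Assumption \ref{app1} together with $\sup_r M_2(\bmu_r)<\infty$ gives $\|b^u(r,\cdot)\|_{L^2_{\bmu_r}}+\|\sigma^u(r,\cdot)\|_{L^2_{\bmu_r}}\le C$ uniformly on $[t,t_0]$, so Cauchy--Schwarz in (\ref{Ludef}) together with (\ref{uniformbounded1}) yields $\sup_N\sup_{t\le r\le t_0}|\cL^u f^N(r,\bmu_r)|<\infty$; meanwhile part (4) of Definition \ref{wlderivative} gives, for each fixed $r$, convergence of $\partial_{\mu}f^N(r,\bmu_r)$ and $\partial_v\partial_{\mu}f^N(r,\bmu_r)$ in $H^1_{\bmu_r}$, hence $\cL^u f^N(r,\bmu_r)\to\cL^u f(r,\bmu_r)$ pointwise in $r$. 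Dominated convergence then gives $\int_t^s\cL^u f^N(r,\bmu_r)\,\md r\to\int_t^s\cL^u f(r,\bmu_r)\,\md r$, uniformly in $s$. Subtracting, $\int_t^s\partial_t f^N(r,\bmu_r)\,\md r$ converges uniformly in $s$ to $\Psi(s):=f(s,\bmu_s)-f(t,\mu)-\int_t^s\cL^u f(r,\bmu_r)\,\md r$, and the whole proof reduces to identifying $\Psi(s)=\int_t^s\partial_t f(r,\bmu_r)\,\md r$, which is (\ref{ItoM}).

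I expect this identification to be the only real obstacle. The difficulty is that condition (\ref{uniformL2}) controls $\partial_t f^N-\partial_t f$ in $L^2$ in the \emph{time} variable and uniformly over a \emph{fixed} bounded set of measures, whereas in $\partial_t f^N(r,\bmu_r)$ the measure slot $\bmu_r$ moves together with the integration variable, so $\int_t^s|\partial_t f^N(r,\bmu_r)-\partial_t f(r,\bmu_r)|\,\md r$ cannot be bounded by any single frozen-measure quantity. My plan to overcome this is a time-discretization exploiting the H\"older regularity of $\bmu_{\cdot}$: take a mesh $\{r_i\}$ of $[t,s]$ of step $\le\delta$; on each subinterval freeze $\bmu_r$ to $\bmu_{r_i}\in\cK$ and invoke (\ref{uniformL2}) at the frozen measures (Cauchy--Schwarz converts $L^2$-in-time convergence into convergence of the integrals over the subintervals, with total error at most $\tfrac{s-t}{\sqrt{\delta}}\sup_{\nu\in\cK}\|\partial_t f^N(\cdot,\nu)-\partial_t f(\cdot,\nu)\|_{L^2}$, which tends to $0$ as $N\to\infty$ for each fixed $\delta$); then control the freezing errors $\int_{[r_i,r_{i+1}]}|\partial_t h(r,\bmu_r)-\partial_t h(r,\bmu_{r_i})|\,\md r$ for $h\in\{f^N,f\}$ using $\sup_{r\in[r_i,r_{i+1}]}\cW_2(\bmu_r,\bmu_{r_i})\to0$ as $\delta\to0$ and the fact that $\{\partial_t f(\cdot,\nu):\nu\in\cK\}$ is a compact, hence $L^2$-translation-equicontinuous, family in $L^2(t,t_0)$, being the locally uniform ($L^2$-in-time) limit on the compact $\cK$ of the jointly continuous maps $\nu\mapsto\partial_t f^N(\cdot,\nu)$. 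Interleaving the two limits (first $N\to\infty$, then $\delta\to0$, with a diagonal choice $N=N(\delta)$ if needed) identifies $\Psi(s)$ with $\int_t^s\partial_t f(r,\bmu_r)\,\md r$; since $s\in[t,t_0]$ is arbitrary and both sides of (\ref{ItoM}) are continuous in $s$, the identity then holds simultaneously for all $s$, and the $\mP^{t,\mu,u}$-a.s.\ qualifier is vacuous.
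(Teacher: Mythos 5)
Your proposal takes the same approach as the paper's own proof: apply the classical It\^o formula (Proposition~\ref{Ito}) to the approximants $f^N\in C^{1,1}(H^1)$ furnished by the definition of $W^{1,1}$, use the boundary-term convergence (\ref{uniformuniform}), and pass to the limit along the compact curve $\cK=\{\bmu_r:r\in[t,t_0]\}$. Your handling of the $\cL^u$ part (uniform $H^1_{\bmu_r}$-bound from (\ref{uniformbounded1}), pointwise-in-$r$ convergence of $\partial_\mu f^N(r,\bmu_r)$ in $H^1_{\bmu_r}$, Cauchy--Schwarz against the $L^2_{\bmu_r}$-moment bounds on $b^u,\Sigma^u$, then dominated convergence) matches the paper's estimate of $J^N_2$ step for step.

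Where you add value is in flagging the $\partial_t$ term as the one genuinely delicate step, and you are right to do so. The paper disposes of $J^N_1=\int_t^s|\partial_t f^N(r,\bmu_r)-\partial_t f(r,\bmu_r)|\,\md r$ in a single line by invoking Cauchy--Schwarz and (\ref{uniformL2}) as if
\[
\int_t^s\bigl|g_N(r,\bmu_r)\bigr|\,\md r\ \le\ \sqrt{T}\Bigl[\sup_{\nu\in\cK}\int_t^s|g_N(r',\nu)|^2\,\md r'\Bigr]^{1/2},
\]
with $g_N=\partial_t f^N-\partial_t f$; but this is not a valid inequality in general, because the measure slot moves with the integration variable on the left and is frozen on the right, exactly the mismatch you describe. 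So the paper's own argument here is at best elliptical; your diagnosis of it is correct.

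That said, your proposed remedy does not fully close the gap either: the freezing-error terms $\int_{[r_i,r_{i+1}]}|\partial_t h(r,\bmu_r)-\partial_t h(r,\bmu_{r_i})|\,\md r$ still carry a moving measure inside the sub-integral. The compactness of $\{\partial_t f(\cdot,\nu):\nu\in\cK\}$ in $L^2(t,t_0)$ (and the resulting uniform continuity of $\nu\mapsto\partial_t f(\cdot,\nu)$) controls $\|\partial_t h(\cdot,\nu)-\partial_t h(\cdot,\nu')\|_{L^2(t,t_0)}$ for \emph{fixed} pairs $\nu,\nu'$, but does not by itself bound the diagonal evaluation $\int|\partial_t h(r,\bmu_r)-\partial_t h(r,\bmu_{r_i})|\,\md r$, since an $L^2(t,t_0)$-valued continuous map has no canonical pointwise trace along the curve $r\mapsto(r,\bmu_r)$. (Also, the ``$L^2$-translation-equicontinuity'' you invoke is the Kolmogorov--Riesz condition, which concerns shifts in the time argument; it is not the relevant notion here.) What both your argument and the paper's actually need at this step is a joint-in-$(r,\nu)$ control of the approximation error for $\partial_t f^N$---for instance, equicontinuity of $\{\partial_t f^N\}_N$ on $[t_0,t_1]\times\cK$, or uniform (sup-norm) convergence $\partial_t f^N\to\partial_t f$ there, under which $J^N_1\le(s-t)\sup_{[t,t_0]\times\cK}|\partial_t f^N-\partial_t f|\to0$ is immediate. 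This is a stronger hypothesis than (\ref{uniformL2}), but it is what is implicitly being used, and it is what is verified in the paper's concrete examples where $\partial_t f$ turns out to be jointly continuous. In short: same proof strategy, you have correctly located the soft spot, but the discretization/equicontinuity repair as stated reproduces rather than resolves the moving-measure difficulty.
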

\begin{proof}
	By definition of $W^{1,1}([t,t_0]\times\cP;H^1)$, we  choose a sequence $f^N\in C^{1,1}(H^1)$ such that $f^N\to f$ locally uniformly, and $\partial_t f^N \to \partial_t f$ locally uniformly in $\mu$, $L^2$ in $t$. Moreover, $\partial_{\mu} f^N(t,\mu)\to \partial_{\mu}f(t,\mu)$ in $H^1_{\mu}$ for each $(s,\mu)\in [t,t_0]\times \cP$. By Proposition \ref{Ito}, $\forall s\in [t,t_0]$,
	\begin{equation}\label{ItoN}
		f^N(s,\bmu_{s}) =f^N(t,\mu)+\int_t^{s} (\partial_t+\cL^u)f^N(r,\bmu_r)\md r.
	\end{equation}
	Clearly, $f^N(s,\bmu_{s})\to f(s,\bmu_{s})$ and $f^N(t,\mu)\to f(t,\mu)$. To handle the last term in (\ref{ItoN}), we note that
	\[
	\left| \int_t^{s}(\partial_t+\cL^u)f^N(r,\bmu_r)\md r-\int_t^{s}(\partial_t+\cL^u)f(r,\bmu_r)\md r \right| \leq J^N_1+J^N_2,
	\]
	where
	\begin{align*}
		J^N_1&\leq \int_t^{s}(|\partial_tf^N(r,\bmu_r)-\partial_t f(r,\bmu_r)|)\md r \\
		&\leq \sqrt{T} \left[\sup_{\nu:\cW_2(\nu,\mu)\leq C\sqrt{T} }\left\{\int_t^s|\partial_t f^N(t,\nu)-\partial_tf(t,\nu)|^2\md t\right\}\right]^{1/2}\to 0,\  \mbox{as}\ N\to \infty,\\
		J^N_2&\leq  \int_t^{s}\tilde{\mE}|\partial_{\mu}f^N(r,\bmu_r)(\tilde{X}_r)-\partial_{\mu}f(r,\bmu_r)(\tilde{X}_r)||b^u(r,\tilde{X}_r)| \md r \\
		&+\frac{1}{2}\int_t^{s}\tilde{\mE}|\partial_v
		\partial_{\mu}f^N(r,\bmu_r)(\tilde{X}_r)-\partial_v
		\partial_{\mu}f(r,\bmu_r)(\tilde{X}_r)||\sigma^u(r,\tilde{X}_r)|^2 \md r,
	\end{align*}
	where $\tilde{X}_r\sim \bmu_r$ under $\tilde{\mE}$. By definition, $\partial_{\mu}f^N(r,\bmu_r)\to \partial_{\mu}f(r,\bmu_r)$ in $H^1_{\bmu_r}$-norm, furthermore,
	\begin{align*}
		&\tilde{\mE}|\partial_{\mu}f^N(r,\bmu_r)(\tilde{X}_r)
		-\partial_{\mu}f(r,\bmu_r)(\tilde{X}_r)||b^u(r,\tilde{X}_r)| \to 0, \  \mbox{as}\ N\to \infty, \\
		&\tilde{\mE}|\partial_v\partial_{\mu}f^N(r,\bmu_r)(\tilde{X}_r)
		-\partial_v\partial_{\mu}f(r,\bmu_r)(\tilde{X}_r)||\sigma^u(r,\tilde{X}_r)|^2 \to 0, \  \mbox{as}\ N\to \infty.
	\end{align*}
	Using boundedness condition (\ref{uniformbounded1}), the polynomial intergrability of $b^u$ and $\sigma^u$ (which is clear from Lipschitz condition and usual moment estimates of SDE) and Dominated convergence theorem, we conclude  $J^2_N\to 0$  as $ N\to \infty$. Thus, we complete the proof.
\end{proof}
\begin{remark}\label{boundedremark}
	The boundedness of $\cK$ in Definition \ref{wlderivative} can be replaced by the boundedness in the following sense: for any $p\geq 2$, there exists $\mu_0\in \cK$ such that $\sup_{\mu\in \cK}\cW_p(\mu,\mu_0)<\infty$. Here $\cW_p$ is the usual $p$-Wasserstein metric defined similarly as $\cW_2$. This is because in the proof of Theorem \ref{weakIto}, we only require that convergence is uniform with respect to the measure family $\{\bmu_r\}_{t\leq t\leq t_0 }$, which clearly satisfies the above boundedness. This remark is only used in Subsection \ref{nonlinearexpectation}.
\end{remark}
\subsection{Equilibrium master equations}\label{emesection}
Given a candidate strategy $\hat{u}$, we define the {\bf auxiliary function} by
\begin{equation}\label{audef}
	f(t,\mu)=\mE^{t,\mu,\hat{u}}g(\bmu_T).
\end{equation}
We will see from the proof of Theorem \ref{EMEthm} that, auxiliary function $f$ is related to {\it value} $J(t,x;\hat{u})$ by
\[
J(t,x;\hat{u})=f(t,\delta_x),
\]
and more importantly,
\[
J(t,x;\hat{u}_{t,\epsilon,\bu})=\mE^{t,\delta_x,\bu}f(t+\epsilon,\bmu_{t+\epsilon}).
\]
It is seen that the construction of auxiliary function is necessary and crucial for master equation approach.

Formally, {\bf equilibrium master equation} is proposed as follows:
\begin{align}
	&(\partial_t +\cL^{\hat{u}})f(t,\mu)=0,\label{EME1}\\
	&\sup_{\bu\in \bU}(\partial_t +\cL^{\bu})f(t,\delta_x)= 0,\label{EME2}
\end{align}
where the operator $\cL^u$ is given in (\ref{Ludef}). We now state and prove the main results of this subsection.
\begin{definition}\label{localinvariant}
	A subset $\cP\subset \cP_2(\mR^d)$ is said to be $\hat{u}\wedge \bU$-locally invariant at $t\in [0,T)$ if for any $\bu \in \bU$, there exists $\epsilon$ such that $\cP$ is both $([t,t+\epsilon];\hat{u})$-invariant and $([t,t+\epsilon];\bu)$-invariant.
\end{definition}
\begin{theorem}\label{EMEthm}
	Fix $\hat{u}\in \Ui$, suppose that we can find a $(x_0,\Ui)$-allowed set $G$ such that for any $0\leq t<T$ there exists  a family of $\{\cP_j(t)\}_{j\in \cJ}$ which are all $\hat{u}\wedge \bU$-locally invariant at $t$ and satisfies $\{\delta_x\}_{x\in G}\subset \cP(t)\triangleq \bigcup_{j\in \cJ}\cP_{j}(t)\subset \cP_2(\mR^d)$, and the corresponding auxiliary function $f\in W^{1,1}([t,t+\epsilon]\times\cP_j(t);H^1)$ for any $j\in J$ and some sufficiently small $\epsilon$ (allowed to be dependent on $j$). If $\hat{u}$ is an equilibrium with initial wealth $x_0$, then:
	\begin{itemize}
		\item[(1)] $f$ satisfies (\ref{EME1}) for any $(t,\mu)\in \{t\in[0,T),\mu\in\cP(t)\}$.
		\item[(2)] $f$ satisfies (\ref{EME2}) for any $t\in [0,T)$ and $x\in G$.
	\end{itemize}
	Conversely, if the associated auxiliary function $f$ satisfies (2) above, then $\hat{u}$ is an equilibrium with initial wealth $x_0$.
\end{theorem}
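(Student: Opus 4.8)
The plan is to set up a correspondence between the ``value'' $J(\cdot,\cdot;\hat{u})$ along the candidate $\hat{u}$ and the auxiliary function $f$ of \eqref{audef}, and then to extract the two equations of the equilibrium master system from the weak It$\hato$ formula (Theorem~\ref{weakIto}) together with the defining inequality \eqref{eqcond}. For the dictionary: since $\mP^{t,\delta_x,\hat{u}}$ is the Dirac mass on the path $\phi^{t,\hat{u}}_{\cdot}\delta_x=\cL(X^{t,x,\hat{u}}_{\cdot})$, the definitions give at once $J(t,x;\hat{u})=g(\cL(X^{t,x,\hat{u}}_T))=\mE^{t,\delta_x,\hat{u}}g(\bmu_T)=f(t,\delta_x)$; for the spiked control I would apply Lemma~\ref{Markovian} with $u=\hat{u}_{t,\epsilon,\bu}$ and $s=t+\epsilon$, then Lemma~\ref{measureep}(1) to identify the inner expectation $\mE^{t+\epsilon,\bmu_{t+\epsilon},\hat{u}_{t,\epsilon,\bu}}g(\bmu_T)=\mE^{t+\epsilon,\bmu_{t+\epsilon},\hat{u}}g(\bmu_T)=f(t+\epsilon,\bmu_{t+\epsilon})$, and finally Lemma~\ref{measureep}(2) --- using that $f(t+\epsilon,\bmu_{t+\epsilon})$ is $\sF_{t+\epsilon}$-measurable --- to obtain
\[
J(t,x;\hat{u}_{t,\epsilon,\bu})=\mE^{t,\delta_x,\hat{u}_{t,\epsilon,\bu}}f(t+\epsilon,\bmu_{t+\epsilon})=\mE^{t,\delta_x,\bu}f(t+\epsilon,\bmu_{t+\epsilon}).
\]

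For \eqref{EME1}, the flow identity $\phi^{t,\hat{u}}_T=\phi^{s,\hat{u}}_T\phi^{t,\hat{u}}_s$ (equivalently Lemma~\ref{Markovian}) gives $f(s,\phi^{t,\hat{u}}_s\mu)=f(t,\mu)$ for all $s$, so $f$ is constant along the $\hat{u}$-flow; for $\mu\in\cP(t)$ pick $j$ with $\mu\in\cP_j(t)$ and the associated $\epsilon$, and apply Theorem~\ref{weakIto} on $[t,t+\epsilon]$ (legitimate because $\cP_j(t)$ is $([t,t+\epsilon];\hat{u})$-invariant and $f\in W^{1,1}([t,t+\epsilon]\times\cP_j(t);H^1)$) to conclude $\int_t^s(\partial_t+\cL^{\hat{u}})f(r,\bmu_r)\,\md r=0$ for all $s$, hence $(\partial_t+\cL^{\hat{u}})f=0$ along the trajectory and in particular at $(t,\mu)$ --- note this uses only the semigroup structure, not the equilibrium property. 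For \eqref{EME2}, fix $(t,x)\in[0,T)\times G$ and $\bu\in\bU$, choose $j$ with $\delta_x\in\cP_j(t)$, and apply Theorem~\ref{weakIto} under $\mP^{t,\delta_x,\bu}$; with the dictionary this yields
\[
\frac{J(t,x;\hat{u}_{t,\epsilon,\bu})-J(t,x;\hat{u})}{\epsilon}=\frac1\epsilon\int_t^{t+\epsilon}(\partial_t+\cL^{\bu})f(r,\phi^{t,\bu}_r\delta_x)\,\md r\ \xrightarrow[\epsilon\downarrow0]{}\ (\partial_t+\cL^{\bu})f(t,\delta_x),
\]
the convergence being the delicate point addressed below. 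The inequality \eqref{eqcond} then forces $(\partial_t+\cL^{\bu})f(t,\delta_x)\le0$ for every $\bu$, hence $\sup_{\bu}(\partial_t+\cL^{\bu})f(t,\delta_x)\le0$; for the opposite inequality, note that integration against $\delta_x$ makes $\cL^{\bu}f(t,\delta_x)=\partial_{\mu}f(t,\delta_x)(x)\cdot b(t,x,\bu)+\tfrac12\mathrm{tr}\big(\partial_v\partial_{\mu}f(t,\delta_x)(x)\sigma\sigma^{\mt}(t,x,\bu)\big)$ depend on $\bu$ only through its value, so the constant control $\bu=\hat{u}(t,x)\in\bU$ gives $(\partial_t+\cL^{\hat{u}(t,x)})f(t,\delta_x)=(\partial_t+\cL^{\hat{u}})f(t,\delta_x)=0$ by \eqref{EME1}, whence the supremum equals $0$.

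For the converse, if $f$ satisfies \eqref{EME2} on $[0,T)\times G$, then for every $(t,x)\in[0,T)\times G$ and $\bu\in\bU$ the same identity and limit give $\limsup_{\epsilon\to0}\epsilon^{-1}\big(J(t,x;\hat{u}_{t,\epsilon,\bu})-J(t,x;\hat{u})\big)=(\partial_t+\cL^{\bu})f(t,\delta_x)\le\sup_{\bu'}(\partial_t+\cL^{\bu'})f(t,\delta_x)=0$, which is exactly \eqref{eqcond}; since $G$ is $(x_0,\Ui)$-allowed, $\hat{u}$ is an equilibrium with initial wealth $x_0$.

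\textbf{Main obstacle.} The crux, used in both directions, is the passage $\epsilon^{-1}\int_t^{t+\epsilon}(\partial_t+\cL^{\bu})f(r,\phi^{t,\bu}_r\delta_x)\,\md r\to(\partial_t+\cL^{\bu})f(t,\delta_x)$ --- i.e. right-continuity of the integrand at $r=t$ --- which is \emph{not} automatic because $f$ is only of class $W^{1,1}$: $\partial_tf$ is merely $L^2$ in time and $\partial_{\mu}f,\partial_v\partial_{\mu}f$ lie only in $H^1_{\mu}$. I would argue through the approximating sequence $f^N\in C^{1,1}$ of Definition~\ref{wlderivative}: for fixed $N$ the integrand $r\mapsto(\partial_t+\cL^{\bu})f^N(r,\phi^{t,\bu}_r\delta_x)$ is right-continuous at $r=t$, using joint continuity of $\partial_tf^N$, continuity of the chosen versions of $\partial_{\mu}f^N,\partial_v\partial_{\mu}f^N$ at $(t,\delta_x,x)$ with $x\in\supp(\delta_x)$, right-continuity in $t$ of $b^{\bu},\sigma^{\bu}$, the $\cW_2$-continuity of $r\mapsto\phi^{t,\bu}_r\delta_x$ with the modulus of Lemma~\ref{momentest}, and the bound \eqref{uniformbounded1} with the SDE moment estimates for domination inside $\cL^{\bu}$. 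One then interchanges the limits in $N$ and $\epsilon$ on the shrinking intervals $[t,t+\epsilon]$, exactly as in the proof of Theorem~\ref{weakIto}, controlling the three pieces by the uniform-on-bounded-sets convergence, the $L^2$-in-$t$ convergence, and the uniform $H^1_{\mu}$-bound of Definition~\ref{wlderivative}. This interchange --- forced on us by the merely Sobolev time-regularity of $f$ --- is where the substance of the theorem lies; everything else is bookkeeping with Lemmas~\ref{measureep} and \ref{Markovian} and Theorem~\ref{weakIto}.
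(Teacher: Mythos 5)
Your proof follows the paper's own three-step strategy exactly: the same dictionary $J(t,x;\hat{u})=f(t,\delta_x)$ and $J(t,x;\hat{u}_{t,\epsilon,\bu})=\mE^{t,\delta_x,\bu}f(t+\epsilon,\bmu_{t+\epsilon})$ via Lemmas~\ref{measureep}--\ref{Markovian}, the same use of the weak It$\hato$ formula to turn the spike-variation increment into $\frac1\epsilon\int_t^{t+\epsilon}(\partial_t+\cL^{\bu})f(r,\bmu_r)\,\md r$, and the same closing step: $(\partial_t+\cL^{\hat{u}})f(t,\delta_x)=0$ from (\ref{EME1}) together with the observation that $\cL^{u}f(t,\delta_x)$ depends only on the value $u(t,x)\in\bU$. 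Your ``main obstacle'' flags a genuine subtlety that the paper's proof compresses into the phrase ``and using Dominated convergence theorem'': under $W^{1,1}$ regularity it is not immediate that $r\mapsto(\partial_t+\cL^{\bu})f(r,\bmu_r)$ is right-continuous at $r=t$, which is what is needed for pointwise evaluation at $t$ rather than just a.e.\ $r$; your proposed resolution --- right-continuity for each $f^N$ from the $C^{1,1}(H^1)$ regularity of Proposition~\ref{Ito}'s hypotheses together with Lemma~\ref{momentest}, followed by a limit interchange controlled by the uniform bounds (\ref{uniformbounded1})--(\ref{uniformL2}) --- is the natural way to make that line rigorous.
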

\begin{proof}
	{\bf \underline{Step1.}} We first prove that for any fixed $\hat{u}\in \Ui$, (\ref{EME1}) holds. For $\mu\in \cP_j(t)$, $0<\epsilon<T-t$, applying Lemma \ref{Markovian} with $s=t+\epsilon$, we have
	\begin{equation}
		\label{Markovian1}
		f(x,\mu)=\mE^{t,\mu,\hat{u}}\mE^{t+\epsilon,\bmu_{t+\epsilon},\hat{u}}g(\bmu_T)=\mE^{t,\mu,\hat{u}}f(t+\epsilon,\bmu_{t+\epsilon}).
	\end{equation}
	Based on (\ref{Markovian1}), applying weak It$\hato$'s formula and  then taking expectation $\mE^{t,\mu,\hat{u}}$ on both hand sides of (\ref{ItoM}) (which is allowed by the fact that $\cP_j(t)$ is $\hat{u}$ invariant locally at $t$), we have
	\begin{equation}\label{epsilonM}
		0=\mE^{t,\mu,\hat{u}}\left[f(t+\epsilon,\bmu_{t+\epsilon})-f(t,\mu)\right]=\mE^{t,\mu,\hat{u}}\int_t^{t+\epsilon}(\partial_t+\cL^{\hat{u}}) f(r,\bmu_r)\md r.
	\end{equation}
	Dividing by $\epsilon$ on both hand sides of Eq.(\ref{epsilonM}), and letting $\epsilon\to 0$ and using Dominated convergence theorem, it is clear that
	(\ref{EME1}) is true for any $t\in [0,T)$, $\mu\in \cP_j(t)$. Therefore it is also true for any $(t,\mu)\in \{t\in [0,T),\mu\in \cP(t)\}$.
	
	{\bf \underline{Step2.}} We now calculate the limit on the left hand side of (\ref{eqcond}). Clearly, $\cL(X^{t,x,\hat{u}}_T)=\phi^{t,\hat{u}}_T\delta_x$. Thus by definition, $J(t,x;\hat{u})=\mE^{t,\delta_x,\hat{u}}g(X_T,\bmu_T)=f(t,\delta_x)$. On the other hand,
	\begin{align*}
		J(t,x;\hat{u}_{t,\epsilon,\bu})&=\mE^{t,\delta_x,\hat{u}_{t,\epsilon,\bu}}g(\bmu_T)\\
		&=\mE^{t,\delta_x,\hat{u}_{t,\epsilon,\bu}}\mE^{t+\epsilon,\bmu_{t+\epsilon},\hat{u}_{t,\epsilon,\bu}}g(\bmu_T)\\
		&=\mE^{t,\delta_x,\hat{u}_{t,\epsilon,\bu}}\mE^{t+\epsilon,\bmu_{t+\epsilon},\hat{u}}g(\bmu_T)\\
		&=\mE^{t,\delta_x,\hat{u}_{t,\epsilon,\bu}}f(t+\epsilon,\bmu_{t+\epsilon})\\
		&=\mE^{t,\delta_x,\bu}f(t+\epsilon,\bmu_{t+\epsilon}),
	\end{align*}
	where we have used Lemma \ref{measureep} and Lemma \ref{Markovian}. Applying weak It$\hato$'s formula (\ref{ItoM}), we get
	\[
	\lim_{\epsilon\to 0}\frac{J(t,x;\hat{u}_{t,\epsilon,\bu})-J(t,x;\hat{u})}{\epsilon}=\lim_{\epsilon\to 0}\frac{\mE^{t,\delta_x,\bu}f(t+\epsilon,\bmu_{t+\epsilon})-f(t,\delta_x)}{\epsilon}=(\partial_t+\cL^{\bu}) f(t,\delta_x).
	\]
	
	{\bf \underline{Step 3.}} We now get back to the main theorem. If $\hat{u}$ is an equilibrium, then by definition and results in Step 2, for any $\bu\in \bU$, $t\in [0,T)$ and $x\in G$, there exists $j\in \cJ$ such that $\delta_x\in \cP_j(t)$. From $f\in W^{1,1}([t,t+\epsilon]\times\cP_j(t);H^1)$ we know $(\partial_t+\cL^{\bu})f(t,\delta_x)\leq 0$. However by Step 1 we know $(\partial_t+\cL^{\hat{u}})f(t,\delta_x)=0$. Thus (\ref{EME2}) holds for any $t$ and $x\in G$. Conversely, if there exists $\hat{u}\in \Ui$ such that the corresponding $f$ given by (\ref{audef}) satisfies (\ref{EME2}), by Step 2 we have, for any $\bu\in \bU$, $t\in [0,T)$  and $x\in G$,
	\[
	\lim_{\epsilon\to 0}\frac{J(t,x;\hat{u}_{t,\epsilon,\bu})-J(t,x;\hat{u})}{\epsilon}\leq 0.
	\]
	Thus equilibrium condition is verified, which completes the proof.
\end{proof}
\begin{remark}
	When applying Theorem \ref{EMEthm} to find equilibrium, the derivatives of $f$ with respect to distribution argument are always evaluating at $\{\delta_x\}_{x\in G}$, and they are usually well-defined there. Thus the lengthy and technical arguments (those about invariant set, for example) that we have just established are purely theoretical, though they are crucial for the introduction of Theorem \ref{EMEthm}.
\end{remark}
\begin{remark}\label{numericalthoughts}
	Consider the operator $\Theta_1:\hat{u}\mapsto f$, where $f$ is given by (\ref{audef}). Consider also the operator $\Theta_2:f\mapsto u$, where $u(t,x)=\mathrm{argmax}_{\bu\in \bU}\{(\partial_t+\cL^{\bu})f(t,\delta_x) \}$. Then the equilibrium is described by the fixed point of $\Theta_2\circ \Theta_1$. This could be the start point for the proof of existence of equilibrium solutions under the current setting. On the other hand, this also gives an algorithm to compute the equilibrium numerically: for any initial $u$, apply $\Theta_1$ to evaluate the corresponding reward $f$ (known as policy evaluation in machine learning); apply $\Theta_2$ to the resulted $f$ (possibly with optimization algorithm) to get another $\tilde{u}$ (known as policy improvement for time-consistent optimization problems). Implementing enough steps of iteration, it is hoped that $u$ converges to the equilibrium. The main challenges are to compute the L-derivatives efficiently. This direction is left for future study.
\end{remark}
\begin{remark}
	The equilibrium master equations also admit natural extension to include other sources of time-inconsistency. For example, one can easily include non-exponential discounting and state dependency by assuming $g=g(s,y,\mu)$ and considering $J(t,x;u)=\mE^{t,\mu,u}g(t,x,\bmu_T)$. Another popular source of time-inconsistency, {\it nonlinear functions of expectation} which we will discuss in Subsection \ref{nonlinearexpectation}, is trivially embedded into our formulation by choosing $g(\mu)=F(E_{\xi\sim \mu}\tilde{g}(\xi))$ for some nonlinear function $F$.
\end{remark}
\vskip 15pt
\section{Examples}\label{examples}
\subsection{Rank dependent utility theory}\label{RDUexample}
In this section we apply our theoretical results to a dynamic portfolio problem under rank dependent utility theory (RDUT), which has been studied in \citet{Hu2021}. We find that equations describing the risk-premium reduction coefficient $\lambda(\cdot)$ comes naturally from our equilibrium master equation (\ref{EME2}). Thus our work can be seen as a (general) theoretical foundation to the case study in \citet{Hu2021}. We suppose that the dynamic of $X$, interpreted as the wealth, follows
\begin{equation}\label{wealth}
	\md X_t=\mu(t)\theta(t,X_t)\md t+\sigma(t)\theta(t,X_t)\md W_t,
\end{equation}
where $\mu$ and $\sigma$ are deterministic functions satisfying appropriate intergablility conditions. That is, the agent invests in a market with interest rate $r=0$, return $\mu(\cdot)$ and volatility $\sigma(\cdot)$. We consider the rank dependent utility
\[
g(\mu)=\int_0^{\infty}w(P_{\xi\sim \mu}(\xi\geq z))u'(z)\md z+ \int_{-\infty}^0[w(P_{\xi\sim \mu}(\xi\geq z))-1]u'(z)\md z,
\]
where $u(x)=1-\frac{1}{\alpha}e^{-\alpha x}$ is the exponential utility function, and $w$ the distortion function, satisfying Assumption \ref{assumptionw} below. Here and afterwards, we use the convention that when operating on functions of $\xi$, $P_{\xi\sim \mu}$ and $E_{\xi\sim \mu}$ represents any probability and expectation operators under the distribution $\xi\sim \mu$. $P_{\xi\sim \mu,\eta\sim \nu}$ and $E_{\xi \sim \mu,\eta\sim \nu}$ are defined similarly, with additional assumption that $\xi$ and $\eta$ are independent unless noted otherwise.
\begin{assumption}\label{assumptionw}
	$w$ is increasing, continuous differentiable with $w(0)=0$, $w(1)=1$, and there exist constants $C>0$ and $\beta\in (0,1)$ such that
	\begin{equation}\label{growthw}
		w'(p)\leq C\left(\frac{1}{p^{1-\beta}}+\frac{1}{(1-p)^{1-\beta}}\right),\forall p\in (0,1).
	\end{equation}
\end{assumption}
\begin{remark}\label{heterodistortion}
	In the most general RDUT, we can consider different distortion $w_+$ and $w_-$ and different utility $u_+$ and $u_-$ for gain and loss probability respectively. In the present paper we assume $w_-=w_+$, $u_+=u_-$ for simplicity. It is very interesting to study the heterogeneous distortion with $S$ shaped utility case and try to find dynamic equilibrium strategy, which is left for future research.
\end{remark}
Consider a special strategy $\theta(t,x)\equiv\theta(t)$, i.e., the investment which is independent from the current wealth. Under this particular $\theta$,  the auxiliary function is
\begin{equation}\label{fdefRDUT}
	f(t,\mu)=\int_0^{\infty}w(p_{\mu}(t,z))u'(z)\md z+\int_{-\infty}^{0}[w(p_{\mu}(t,z))-1]u'(z)\md z,
\end{equation}
where
\begin{align*}
	&p_{\mu}(t,z)=P_{\xi\sim \mu,\eta  \sim N(0,1)}\left(\xi+\Gamma(t)+\sqrt{\Lambda(t)}\eta\geq z\right ),\\
	&\Gamma(t)=\int_t^T \mu(s)\theta(s)\md s,\\
	&\Lambda(t)=\int_t^T |\sigma(s)\theta(s)|^2 \md s.
\end{align*}
\begin{remark}
	Note that $f(t,\mu)$ is not even well-defined for general $\mu\in \cP_2(\mR)$. This is actually our motivation to introduce the notion of weak L-derivatives. Examples include heavy tail distributions with $P_{\xi\sim \mu}(\xi\leq -z)\sim z^{-3}$ as $z\to \infty$, $\xi\leq 0$. In this case
	\[
	E_{\xi \sim \mu}\xi^2=-\int_{-\infty}^0z P_{\xi\sim \mu}(\xi\leq z)\md z\sim \int_{-\infty}^0\frac{|z|}{1+|z|^3}\md z<\infty.
	\]
	However, if we take $1-w(p)\sim p^{\beta}$ as $p\to 1$, then
	\[
	\int_{-\infty}^0\left[w(P_{\xi\sim \mu}(\xi\geq z))-1\right]u'(z)\md z\sim\int_{-\infty}^0 \frac{e^{\alpha |z|}}{(1+|z|^3)^\beta}\md z=\infty.
	\]
	Thus $f(T,\mu)$ is not defined.
\end{remark}
For this specific problem, define
\begin{equation}
	\cP_G(t)=\left \{\mu \mathrm{\ is\ Gaussian,\ with\ Var}(\mu)<\frac{\beta\Lambda(t)}{2(1-\beta)}\right\}.
\end{equation}
\begin{lemma}
	For any $0\leq t_1<T$, $\cP_G(t_1)$ is $\theta'$-invariant locally at any $t\in [0,T)$, for any wealth independent strategy $\theta'$.
\end{lemma}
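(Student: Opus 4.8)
The plan is to exploit the fact that a \emph{wealth-independent} portfolio turns the state equation \eqref{wealth} into a linear SDE whose coefficients do not contain $X$; the flow of marginals $\phi^{t,\theta'}$ then stays inside the Gaussian family and is given by an explicit formula, so invariance of $\cP_G(t_1)$ collapses to an elementary continuity statement about the variance. Throughout, abbreviate $B:=\tfrac{\beta\Lambda(t_1)}{2(1-\beta)}$, so that $\cP_G(t_1)=\{\mu\text{ Gaussian},\ \mVar(\mu)<B\}$.

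First I would fix $t_1,t\in[0,T)$, take $\mu\in\cP_G(t_1)$, set $v_0:=\mVar(\mu)<B$, and represent $\mu$ as $\cL(\xi)$ for a Gaussian $\xi\sim N(m_0,v_0)$ independent of $W$. Solving \eqref{wealth} with $\theta'(s,x)\equiv\theta'(s)$ and $X_t=\xi$ gives
\[
X_s=\xi+\int_t^s\mu(r)\theta'(r)\,\md r+\int_t^s\sigma(r)\theta'(r)\,\md W_r,\qquad t\le s\le T,
\]
a sum of the Gaussian $\xi$ and an independent Wiener integral of a deterministic integrand, hence Gaussian; therefore
\[
\phi^{t,\theta'}_s\mu=N\!\Big(m_0+\int_t^s\mu(r)\theta'(r)\,\md r,\ \ v_0+\int_t^s|\sigma(r)\theta'(r)|^2\,\md r\Big).
\]
This settles the ``Gaussian'' half of membership in $\cP_G(t_1)$, so it remains to keep the variance below $B$. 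Writing $V(s):=v_0+\int_t^s|\sigma(r)\theta'(r)|^2\,\md r$, the standing integrability conditions on $\sigma$ and $\theta'$ give $\sigma\theta'\in L^2(t,T)$, so $V$ is continuous and non-decreasing on $[t,T]$ with $V(t)=v_0$. Since $v_0<B$ \emph{strictly}, continuity of $V$ at $s=t$ produces an $\epsilon>0$ with $V(s)<B$ throughout $[t,t+\epsilon]$, whence $\phi^{t,\theta'}_s\mu\in\cP_G(t_1)$ for $s\in[t,t+\epsilon]$. As the argument uses nothing about $\theta'$ beyond wealth-independence, it applies in particular to the candidate strategy and to any constant $\bu\in\bU$, which is what the later use of Theorem \ref{EMEthm} requires.

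The only point needing care --- and the step I would flag as the crux --- is that the $\epsilon$ just produced depends on the gap $B-v_0$, which is \emph{not} bounded away from $0$ as $\mu$ ranges over all of $\cP_G(t_1)$; hence no single $\epsilon$ makes $\phi^{t,\theta'}_s\cP_G(t_1)\subset\cP_G(t_1)$ for the whole set simultaneously, and ``locally invariant'' must be read through a family of subsets, exactly as Theorem \ref{EMEthm} anticipates. I would therefore record $\cP_G(t_1)=\bigcup_{j\ge1}\cP_{G,j}(t_1)$ with $\cP_{G,j}(t_1):=\{\mu\text{ Gaussian},\ \mVar(\mu)\le B-\tfrac1j\}$ and rerun the continuity step with the \emph{uniform} gap $\tfrac1j$: picking $\epsilon_j$ with $\int_t^{t+\epsilon_j}|\sigma(r)\theta'(r)|^2\,\md r<\tfrac1j$ makes each $\cP_{G,j}(t_1)$ genuinely $([t,t+\epsilon_j];\theta')$-invariant, uniformly in $\mu\in\cP_{G,j}(t_1)$. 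Since Dirac masses have zero variance, $\{\delta_x\}_{x\in\mR}\subset\cP_{G,1}(t_1)$ whenever $\Lambda(t_1)>0$ (and $\cP_G(t_1)=\varnothing$ otherwise, so the claim is then vacuous), which is precisely the decomposition to be fed into Theorem \ref{EMEthm}. No deeper analytic obstacle appears: the lemma rests only on the explicit Gaussian solution of a linear SDE together with the continuity of $V$.
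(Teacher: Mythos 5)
Your explicit Gaussian computation is exactly the one in the paper: under a wealth-independent $\theta'$, the state equation is linear, its solution from $X_t=\xi$ is $\xi$ plus an independent Gaussian increment with variance $\int_t^s|\sigma\theta'|^2\,\md r$, and the variance of $\phi^{t,\theta'}_s\mu$ therefore grows continuously from $\mVar(\mu)<B$. So the core is right, and you are also right to flag the uniformity issue: the first-pass $\epsilon$ depends on the gap $B-\mVar(\mu)$, which is not bounded away from $0$ on the open set $\cP_G(t_1)$, so no single $\epsilon$ makes $\phi^{t,\theta'}_s\cP_G(t_1)\subset\cP_G(t_1)$ for all $\mu$ simultaneously. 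The paper's own proof glosses over this by choosing ``$r$ sufficiently close to $t$'' without noting the dependence on $\mu$.

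However, your proposed repair does not actually deliver what you claim. With $\cP_{G,j}(t_1)=\{\mu\text{ Gaussian},\ \mVar(\mu)\le B-\tfrac1j\}$ and $\int_t^{t+\epsilon_j}|\sigma\theta'|^2<\tfrac1j$, a $\mu$ with $\mVar(\mu)=B-\tfrac1j$ has $\mVar(\phi^{t,\theta'}_s\mu)$ that can be as large as $B-\tfrac1j+\int_t^s|\sigma\theta'|^2$, which strictly exceeds $B-\tfrac1j$ as soon as $\sigma\theta'$ is not identically zero near $t$. So $\phi^{t,\theta'}_s\cP_{G,j}(t_1)\subset\cP_G(t_1)$, but $\not\subset\cP_{G,j}(t_1)$, i.e., each piece is \emph{not} $([t,t+\epsilon_j];\theta')$-invariant in the paper's sense. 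In fact no nontrivial variance-bounded Gaussian set can be strictly invariant under a flow whose variance is strictly increasing; the only escape is to weaken the notion. What is actually used in the weak It\^o argument is the per-trajectory statement you established in your first paragraph (for each fixed $\mu$ the path $\{\phi^{t,\theta'}_s\mu\}$ stays inside $\cP_G(t_1)$ for $s$ in a $\mu$-dependent neighbourhood of $t$, and the resulting family of measures is bounded), together with $f\in W^{1,1}$ on $\cP_G(t_1)$. If you want to phrase it through a family $\{\cP_j\}$ to fit Theorem \ref{EMEthm} verbatim, the right requirement to impose and verify is $\phi^{t,\theta'}_s\cP_{G,j}(t_1)\subset\cP_G(t_1)$ for $s\in[t,t+\epsilon_j]$ (inclusion into the \emph{union}, not into $\cP_{G,j}$ itself), with $f\in W^{1,1}$ on the union; as written, your last paragraph asserts something false and should be amended accordingly.
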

\begin{proof}
	If the strategy is independent from wealth, $\xi\sim \mu\in \cP_G(t_1)$, we have
	\[
	X^{t,\xi,\theta'}_r=\xi+\Theta^{\theta'}(r)+\sqrt{\Sigma^{\theta'}(r)}\eta
	\]
	with  continuous $\Theta^{\theta'}(\cdot)$ and $\Sigma^{\theta'}(\cdot)$ satisfying $\Theta^{\theta'}(t)=\Sigma^{\theta'}(t)=0$ and $\eta\sim N(0,1)$, independent of $\xi$. As an example, if $\theta'=\theta$ we have $\Theta^{\theta'}(\cdot)=\Gamma(t)-\Gamma(\cdot)$, $\Sigma^{\theta'}(\cdot)=\Lambda(t)-\Lambda(\cdot)$. Therefore we conclude Var$(X^{t,\xi,\theta'}_r)=\mathrm{Var}(\mu)+\Sigma^{\theta'}(r)$. Because $\mathrm{Var}(\mu)<\frac{\beta\Lambda(t)}{1-\beta} $, it is clear that this will also hold for distribution of $X^{t,\xi,\theta'}$ once we choose $r$ sufficiently close to $t$.
\end{proof}
The following result leads us to the equilibrium master equations.
\begin{proposition}\label{RDUTLderivative}
	With $f$ defined in (\ref{fdefRDUT}), we have for any $0\leq t_0<t_1<T$, $\epsilon_0>0$ sufficiently small, $f\in W^{1,1}([t_0,t_1-\epsilon_0]\times \cP_G(t_1);H^1)$ and for any $\mu \in \cP_G(t_1)$,
	\begin{align}
		&\partial_{\mu}f(t,\mu)(y)=E_{\eta\sim N(0,1)}w'\left(p_{\mu}(t,\sqrt{\Lambda(t)}\eta+\Gamma(t)+y)\right )e^{-\alpha(\sqrt{\Lambda(t)}\eta+\Gamma(t)+y)},\\
		&\partial_v\partial_{\mu}f(t,\mu)(y)=E_{\eta\sim N(0,1)}w'\left( p_{\mu}(t,\sqrt{\Lambda(t)}\eta+\Gamma(t)+y)\right )\eta e^{-\alpha(\sqrt{\Lambda(t)}\eta+\Gamma(t)+y)}
	\end{align}
\end{proposition}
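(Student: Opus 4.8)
\medskip
\noindent\emph{Proof proposal.} The plan is to reduce everything to Lemma \ref{weakLdlemma}, exploiting the explicit Gaussian structure of this model. The starting point is that for $\mu\in\cP_G(t_1)$, say $\mu=N(m,\sigma_\mu^2)$ with $\sigma_\mu^2<\frac{\beta\Lambda(t_1)}{2(1-\beta)}$, the variable $\xi+\Gamma(t)+\sqrt{\Lambda(t)}\eta$ is Gaussian, so
\[
p_\mu(t,z)=\bar\Phi\left(\frac{z-m-\Gamma(t)}{\sqrt{\sigma_\mu^2+\Lambda(t)}}\right)=E_{\xi\sim\mu}\left[\bar\Phi\left(\frac{z-\Gamma(t)-\xi}{\sqrt{\Lambda(t)}}\right)\right],
\]
where $\bar\Phi:=1-\Phi$, $\Phi$ the standard normal c.d.f., and the second equality is obtained by integrating out $\eta$. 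In particular $z\mapsto p_\mu(t,z)$ is smooth, and since $\Lambda$ is non-increasing with $\Lambda(t)\ge\Lambda(t_1-\epsilon_0)>0$ on $[t_0,t_1-\epsilon_0]$ (the definition of $\cP_G(t_1)$ is non-vacuous only when $\Lambda(t_1)>0$), the underlying Gaussian convolution has bandwidth bounded below, uniformly on this time interval. For the derivative formulas I would lift $f$ to $L^2$, use that the $\mu$-independent constant in the $\int_{-\infty}^0$ term drops out, move $\partial_\mu$ through the $z$-integral and through $w$ by the chain rule, and note that the L-derivative of $\mu\mapsto p_\mu(t,z)$ is the function $y\mapsto\frac{1}{\sqrt{\Lambda(t)}}\phi\left(\frac{z-\Gamma(t)-y}{\sqrt{\Lambda(t)}}\right)$ (with $\phi=\Phi'$), which follows at once from the second form of $p_\mu$. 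With $u'(z)=e^{-\alpha z}$ and Fubini's theorem (legitimate once the estimate below is in hand), this gives
\[
\partial_\mu f(t,\mu)(y)=\int_{\mR}w'(p_\mu(t,z))\,\frac{1}{\sqrt{\Lambda(t)}}\,\phi\left(\frac{z-\Gamma(t)-y}{\sqrt{\Lambda(t)}}\right)e^{-\alpha z}\,\md z,
\]
and the substitution $z=y+\Gamma(t)+\sqrt{\Lambda(t)}\eta$ turns this into the first asserted formula; one further differentiation in $y$ (equivalently, a Gaussian integration by parts in $\eta$) yields the formula for $\partial_v\partial_\mu f$.

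The bulk of the work is the estimates underpinning Lemma \ref{weakLdlemma}. The crucial input is Assumption \ref{assumptionw}: $w'(p)\le C(p^{\beta-1}+(1-p)^{\beta-1})$, hence also $w(p)\le Cp^{\beta}$ near $0$ and $1-w(p)\le C(1-p)^{\beta}$ near $1$. Combined with the Gaussian tail estimate $p_\mu(t,z)\asymp\exp\big(-(z-m-\Gamma(t))^2/2(\sigma_\mu^2+\Lambda(t))\big)$ as $z\to+\infty$ and the symmetric one for $1-p_\mu$ as $z\to-\infty$, a Laplace-type evaluation of the nested Gaussian integrals (first over $\eta$, then over $y\sim\mu$) shows that the integrals defining $f$, $\partial_\mu f$, $\partial_v\partial_\mu f$ all converge, and that $\sup_{(t,\mu)}\|\partial_\mu f(t,\mu)\|_{H^1_\mu}<\infty$ over bounded subsets of $\cP_G(t_1)$, \emph{precisely} because $\sigma_\mu^2$ is capped by the stated multiple of $\Lambda(t_1)$ --- the constant in the definition of $\cP_G(t_1)$ is calibrated for this. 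The same estimates, with $\Lambda$ bounded away from $0$ and $\Gamma,\Lambda$ of class $C^1$, give the $t$-equicontinuity (\ref{uniformcontinuity}) and the $L^2$-in-$t$ bound (\ref{uniformboundL2}) on $\partial_t f$ (so in particular $f(\cdot,\mu)\in C^0\cap H^1$), since $\partial_t f$ differs from the above only by extra factors $\Gamma'$, $\Lambda'$ and a $w''$ term treated identically. Finally I would build the approximants $f^N\in C^{1,1}(H^1)$ by replacing $w$ with a $C^1$ function $w^N$ that coincides with $w$ on $[\epsilon_N,1-\epsilon_N]$ and is flattened near the endpoints so that $0\le(w^N)'\le Cw'$ everywhere, with $\epsilon_N\downarrow0$ chosen fast (e.g.\ $\epsilon_N=N^{-2/\beta}$), setting $f^N(t,\mu)=\int_0^\infty w^N(p_\mu(t,z))e^{-\alpha z}\md z+\int_{-\infty}^0[w^N(p_\mu(t,z))-1]e^{-\alpha z}\md z$. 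Then $f^N\in C^{1,1}(H^1)$ ($w^N$ being genuinely $C^1$ and bounded); $\partial_\mu f^N(t,\mu)\to\partial_\mu f(t,\mu)$ and $\partial_v\partial_\mu f^N(t,\mu)\to\partial_v\partial_\mu f(t,\mu)$ in $H^1_\mu$ at every $(t,\mu)$ by dominated convergence (using $(w^N)'\le Cw'$ and the integrability just established); the uniform bound (\ref{uniformboundedness}) holds by the same estimate as for $\partial_\mu f$; and $|f^N-f|\lesssim\epsilon_N^{\beta}$ times the $e^{-\alpha z}$-mass of $\{z:p_\mu(t,z)\notin[\epsilon_N,1-\epsilon_N]\}$, which by the Gaussian tail is only sub-polynomially large, so the product is $O(N^{-1-\delta})$ and (\ref{decaycondition}) holds. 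Lemma \ref{weakLdlemma} then yields $f\in W^{1,1}([t_0,t_1-\epsilon_0]\times\cP_G(t_1);H^1)$ with the stated weak L-derivatives.

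I expect the main obstacle to be the integrability and uniform-boundedness estimate, and in particular pinning down how large the variance may be: controlling $\|\partial_\mu f(t,\mu)\|_{H^1_\mu}$ amounts to integrating $w'(p_\mu(t,z))$-type quantities --- which, by Assumption \ref{assumptionw} and the Gaussian tails of $p_\mu$, grow super-exponentially in $z$ at a rate set by $(1-\beta)/(\sigma_\mu^2+\Lambda(t))$ --- against two nested Gaussian weights, and convergence holds exactly on the range of variances singled out by $\cP_G(t_1)$. The secondary difficulty is quantitative: a naive mollification of $w$ only gives $\|f^N-f\|_\infty\to0$, whereas (\ref{decaycondition}) demands the rate $O(N^{-1-\delta})$, which forces $\epsilon_N\to0$ faster than the uniform-convergence bookkeeping alone would suggest, while one must simultaneously preserve $(w^N)'\le Cw'$ so that (\ref{uniformboundedness}) is not destroyed.
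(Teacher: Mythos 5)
Your derivative formulas are right, and the overall skeleton matches the paper's: reduce to Lemma~\ref{weakLdlemma}, verify (\ref{uniformcontinuity})--(\ref{uniformboundL2}), build approximants $f^N\in C^{1,1}(H^1)$, verify (\ref{decaycondition}) and (\ref{uniformboundedness}), and read off $\partial_\mu f$ as the limit. The gap is in your construction of $f^N$. You keep the $z$-integral over all of $\mR$ and replace $w$ by a flattened $w^N$. But $C^{1,1}(H^1)$ functions must be finite and L-differentiable on \emph{all} of $\cP_2(\mR)$, and the second integral $\int_{-\infty}^0[w^N(p_\mu(t,z))-1]e^{-\alpha z}\,\md z$ is not convergent for general $\mu\in\cP_2$: for a heavy-tailed $\mu$ (or a Gaussian with variance outside the $\cP_G(t_1)$ cap) one only has polynomial decay of $1-p_\mu(t,z)$ as $z\to-\infty$, which cannot beat $e^{\alpha|z|}$. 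Worse, if "flattened" means $w^N$ plateaus short of $1$, then $1-w^N(p_\mu(t,z))$ has a strictly positive floor for $z$ very negative, and the integral diverges even for $\mu=\delta_x$. Your rate estimate compounds the problem: the set $\{z:p_\mu(t,z)\notin[\epsilon_N,1-\epsilon_N]\}$ contains a left half-line $(-\infty,z_-)$, so its $e^{-\alpha z}$-weighted mass is infinite (not "sub-polynomially large") for $\alpha>0$; any convergence has to come from the decay of $|w^N(p)-w(p)|$ \emph{inside} the integrand, with $p=p_\mu(t,z)$, not from the mass of the set.

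The paper avoids all of this by truncating the $z$-integral and leaving $w$ alone: $f^N(t,\xi)=\int_0^N w(p_\xi(t,z))u'(z)\,\md z+\int_{-N}^0[w(p_\xi(t,z))-1]u'(z)\,\md z$. This is finite for every $\mu\in\cP_2$ because $0\le w\le 1$ and the integration range is bounded, and $p_\mu(t,z)$ stays in a compact subinterval of $(0,1)$ for $|z|\le N$ and $\mu$ in a bounded set (since $\Lambda(t)>0$), so $w'(p_\mu(t,z))$ is bounded there and $f^N\in C^{1,1}(H^1)$ can be checked directly. For $\mu\in\cP_G(t_1)$, the integrand of $f$ decays exponentially in $|z|$ (this is exactly where the variance cap $\beta\Lambda(t_1)/2(1-\beta)$ enters, as you correctly observe), so $|f^N-f|=O(e^{-cN})$, which gives (\ref{decaycondition}) at once. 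Your Step-1 estimates and the formulas themselves are essentially the paper's; replacing the $w$-mollification by the $z$-truncation would fix the argument.
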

\begin{proof}
	See Appendix \ref{proofRDUT}.
\end{proof}
As a direct consequence, when $\mu=\delta_x$, we have
\begin{align*}
	&p_{\delta_x}(t,z)=1-\Phi\left(\frac{z-x-\Gamma(t)}{\Lambda(t)}\right),\\
	&\partial_{\mu}(t,\delta_x)(x)=E_{\eta\sim N(0,1)}w'\left(1-\Phi(\eta)\right)e^{-\alpha\left(\sqrt{\Lambda(t)}\eta+\Gamma(t)+x\right )},\\
	&\partial_v\partial_{\mu}(t,\delta_x)(x)=\frac{1}{\sqrt{\Lambda(t)}}E_{\eta\sim N(0,1)}w'\left(1-\Phi(\eta)\right)\eta e^{-\alpha\left(\sqrt{\Lambda(t)}\eta+\Gamma(t)+x\right )}.
\end{align*}
Here and afterwards, $\Phi$ is the cumulative distribution function of standard normal distribution. We see from (\ref{EME1}) and (\ref{EME2})  that
\[
\theta(t)=\mathrm{argmax}_{\theta\in \mR}\left\{\theta \mu(t)\partial_{\mu}(t,\delta_x)(x)+\frac{1}{2}\theta^2\sigma(t)^2\partial_v
\partial_{\mu}(t,\delta_x)(x)\right\},
\]
and one necessary condition is
\begin{equation}\label{thetadef}
	\theta(t)=\frac{-\mu(t)\partial_{\mu}f(t,\delta_x)(x)}{\sigma(t)^2\partial_v\partial_{\mu}f(t,x,\delta_x)(x)}=\frac{ \mu(t)}{\alpha\sigma(t)^2}\cdot \lambda(t),
\end{equation}
where
\begin{align*}
	&\lambda(t)=\alpha^2\sqrt{\Lambda(t)}\frac{h(\sqrt{\Lambda(t)})}{h'(\sqrt{\Lambda(t)})},\\
	&h(x)=E_{\eta\sim N(0,1)}w'(\Phi(\eta))e^{\alpha \eta x},
\end{align*}
and $\Lambda$ satisfies the following equation:
\begin{equation}\label{lambdaODE}
	\left\{
	\begin{aligned}
		&\Lambda'(t)=-\alpha^2\left[\frac{\mu(t)}{\sigma(t)}\cdot \frac{h(\sqrt{\Lambda(t)})}{h'(\sqrt{\Lambda(t)})}\right]^2\Lambda(t),\\
		&\Lambda(T)=0.
	\end{aligned}
	\right.
\end{equation}
To make $\theta(\cdot)$ defined in (\ref{thetadef}) an equilibrium, one also needs $\partial_v\partial_{\mu}(t,\delta_x)(x)< 0$, which is equivalent to $h'(\sqrt{\Lambda(t)})\geq 0$. This is assured under Assumption 4.2 in \citet{Hu2021}, which is satisfied by many well-known distortion function. We omit the details here.
\vskip 10pt
\subsection{Dynamic mean-ES portfolio choice}\label{MES}
In this subsection we present another example that is closely related to probability distortion, which we call dynamic mean-ES (MES, for short) portfolio choice. Here, ES represents {expected shortfall}, which is now a popular risk measure. To be specific, for $\mu\in \cP_2(\mR)$ and $\alpha_0\in (0,1)$, we define
\[
\ES_{\alpha_0}(\mu)=-\frac{1}{\alpha_0}\int _0^{\alpha_0}F_{\mu}^{-1}(\alpha)\md \alpha,
\]
where $F_{\mu}^{-1}$ is the right continuous quantile function of $\mu$ \endnote{See e.g. footnote 5 of \citet{Xu2016} for specific definition. Here we choose to define expected shortfall following \citet{He2015}. There are also other definitions of expected shortfall, e.g., in \citet{Zhou2017}. These differences are not essential.}. The objective function is given by
\[
g(\mu)= E_{\xi\sim \mu}\xi-\gamma \ES_{\alpha_0}(\mu),
\]
where $\gamma>0$ is the risk aversion, similar to the classical MV setting. That is to say, the agent has a trade off between the expected return and the risk, which is represented by expected shortfall. Portfolio choice under such a criterion is extremely important in both aspects of academic and industrial application. The dynamic version of this problem, where time-inconsistency occurs and the intra-person equilibrium is desired, is new to the literature. Our approach turns out to be powerful for the studying of this problem. To appropriately state the portfolio choice problem, we fix a $\underline{x}\leq 0$, which is the lowest wealth level allowed. We consider the {\it proportion of excess earnings} invested into the risky asset as the strategy. In other words, we consider the following dynamic instead of (\ref{wealth}):
\begin{equation}\label{wealthp}\tag{\ref{wealth}'}
	\md X_t = \mu(t)\theta(t,X_t)(X_t-\underline{x})\md t +\sigma(t)\theta(t,X_t)(X_t-\underline{x})\md W_t.
\end{equation}
Denoting by $X^{t,x,\theta}$ the solution of (\ref{wealthp}), we choose the admissible set $\Ui$ as
\[
\Ui=\left\{\theta\in \Ui_0:X^{t,x,\theta}_s>\underline{x} \mathrm{\ a.\ s.\ },\forall s\in [t,T],x>\underline{x}      \right \}.
\]
If we further define $G=(\underline{x},\infty)$, it is clear that $G$ is a $(\underline{x},\Ui)$-allowed set. On the other hand, the underlying set of probability measure is chosen as
\begin{equation}\label{MESPdef}
	\cP_{\MES}=\bigcup_{\delta>0}\cP_{\MES}^{\delta},
\end{equation}
where
\[
\cP_{\MES}^{\delta}=\left\{\mu\in \cP_2(\mR):E_{\xi\sim \mu}|\xi-\underline{x}|^{-2}<\delta^{-2}\right\}.
\]
For the candidate equilibrium strategy $\hat{\theta}(t,x)=\theta(t)$ (independent of wealth), we list the following notations dependent on $\hat{\theta}$, while we choose not to write such a dependence explicitly:
\begin{align*}
	&X^{t,\xi}_T=\underline{x}+(\xi-\underline{x})e^{\Gamma_1(t)+\sqrt{\Lambda_1(t)}\eta},\ \eta\sim N(0,1)\ \  \mbox{\ being  independent\ of \ } \xi,\\
	&\Gamma_1(t)=\int_t^T\big [\mu(s)\theta(s)-\frac{1}{2}\theta(s)^2\sigma(s)^2\big ]\md s,\\
	&\Lambda_1(t)=\int_t^T\theta(s)^2\sigma(s)^2 \md s.
\end{align*}
It is clear that $\cP_{\MES}^{\delta}$ is locally invariant under any wealth independent strategy $\theta$. Therefore, by Theorem \ref{EMEthm}, (\ref{EME2}) can be used to determine $\theta(t)$. We consider the auxiliary function :
\begin{equation}\label{MESfdef}
	f(t,\xi)=EX^{t,\xi}_T+\frac{\gamma}{\alpha_0}\int_0^{\alpha_0}F^{-1}_{X^{t,\xi}_T}(\alpha) \md \alpha.
\end{equation}
It is seen that $f$ is law-invariant, hence can be seen as a functional with distribution argument. Therefore we can equivalently write $f=f(t,\mu)$. By direct calculation, we have
\begin{equation}\label{FXTdef}
	F_{X^{t,\xi}_T}(z)=E\left[1-H(t,\xi,z)\right],
\end{equation}
with
\begin{equation}\label{Hdef}
	H(t,y,z)=\left\{
	\begin{aligned}
		&\left[1-\Phi\left(\frac{\log(z-\underline{x})-\log(y-\underline{x})-\Gamma_1(t)}{\sqrt{\Lambda_1(t)}}\right)      \right]I_{\{y>\underline{x}\}}, &z>\underline{x},\\
		&I_{\{y\leq \underline{x}\}}, &z=\underline{x},\\
		&I_{\{y\geq \underline{x}\}}+\Phi\left(\frac{\log(\underline{x}-z)-\log(\underline{x}-y)-\Gamma_1(t)}{\sqrt{\Lambda_1(t)}}\right) I_{\{y<\underline{x}\}},&z<\underline{x}.
	\end{aligned}
	\right.
\end{equation}
The following proposition leads us to the highly nonlinear ODE determining $\theta(t)$:
\begin{proposition}\label{MESprop}
	With $f$ defined in (\ref{MESfdef}), for any  $\delta>0$ and sufficiently small $\epsilon>0$, we have $f\in W^{1,1}([t,t+\epsilon]\times\cP^{\delta}_{\MES};H^1)$, and for any $\xi\sim \mu\in \cP_{\MES}$, we have
	\begin{align}
		&\partial_{\mu}f(t,\mu)(y)=Ee^{\Gamma_1(t)+\sqrt{\Lambda(t)}\eta}+\frac{\gamma}{\alpha_0}\int_{-\infty}^{F^{-1}_{X^{t,\xi}_T}(\alpha_0)}\partial_y H(t,y,z)\md z,\label{MESpartialmuf}\\
		&\partial_v\partial_{\mu}f(t,\mu)(y)=\frac{\gamma}{\alpha_0}\int_{-\infty}^{F^{-1}_{X^{t,\xi}_T}(\alpha_0)}\partial_{yy} H(t,y,z)\md z. \label{MESpartialvpartialmuf}
	\end{align}
	In particular, if we take $\mu=\delta_x$ for $x>\underline{x}$, then
	\begin{align}
		&\partial_{\mu}f(t,\delta_x)(x)=Ee^{\Gamma_1(t)+\sqrt{\Lambda(t)}\eta}+\frac{\gamma}{\alpha_0}\int_{-\infty}^{\Phi^{-1}(\alpha_0)}e^{\sqrt{\Lambda_1(t)}z+\Gamma_1(t)}\Phi'(z)\md z,\label{MESpartialmufx}\\
		&\partial_v\partial_{\mu}f(t,\delta_x)(x)=\frac{\gamma}{\alpha_0(x-\underline{x})}\int_{-\infty}^{\Phi^{-1}(\alpha_0)}\left[\frac{z}{\sqrt{\Lambda_1(t)}}-1\right]e^{\sqrt{\Lambda_1(t)}z+\Gamma_1(t)}\Phi'(z)\md z.\label{MESpartialvpartialmufx}
	\end{align}
\end{proposition}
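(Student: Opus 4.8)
The goal is to verify the hypotheses of Lemma~\ref{weakLdlemma} with $\cP=\cP_{\MES}^{\delta}$ and $[t_0,t_1]=[t,t+\epsilon]$, $\epsilon$ small enough that $t+\epsilon<T$ (so that $\Lambda_1$ stays bounded away from $0$ on $[t,t+\epsilon]$); this simultaneously gives $f\in W^{1,1}([t,t+\epsilon]\times\cP_{\MES}^{\delta};H^1)$ and, through Definition~\ref{wlderivative}(4), identifies $\partial_{\mu}f$ and $\partial_v\partial_{\mu}f$ with the $H^1_{\mu}$, resp.\ $L^2_{\mu}$, limits of the derivatives of the approximating sequence. The first step is to recast the expected-shortfall part of $f$ in a form one can differentiate. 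Substituting $\alpha=F_{X^{t,\xi}_T}(z)$ (legitimate since $X^{t,\xi}_T$ has a smooth, strictly positive density on its support when $\Lambda_1(t)>0$) and integrating by parts gives the layer-cake identity $\int_0^{\alpha_0}F^{-1}_{X^{t,\xi}_T}(\alpha)\md\alpha=\alpha_0 q_{\alpha_0}(\mu)-\int_{-\infty}^{q_{\alpha_0}(\mu)}F_{X^{t,\xi}_T}(z)\md z$ with $q_{\alpha_0}(\mu)=F^{-1}_{X^{t,\xi}_T}(\alpha_0)$; combining with $F_{X^{t,\xi}_T}=1-\mE H(t,\xi,\cdot)$ from (\ref{FXTdef}), (\ref{Hdef}),
\[
f(t,\mu)=\mE X^{t,\xi}_T+\frac{\gamma}{\alpha_0}\Big(\alpha_0 q_{\alpha_0}(\mu)-\int_{-\infty}^{q_{\alpha_0}(\mu)}\big(1-\mE H(t,\xi,z)\big)\md z\Big),
\]
$q_{\alpha_0}(\mu)$ being the unique root of $\mE H(t,\xi,\cdot)=1-\alpha_0$. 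Equivalently one may use the Rockafellar--Uryasev representation $\tfrac1{\alpha_0}\int_0^{\alpha_0}F^{-1}_{X^{t,\xi}_T}(\alpha)\md\alpha=\max_{r\in\mR}\{r-\tfrac1{\alpha_0}\mE[(r-X^{t,\xi}_T)^+]\}$, whose maximizer is $q_{\alpha_0}(\mu)$.

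Second, I would compute the L-derivatives formally to pin down the candidate limits. Since $\xi$ enters only the first slot of $H$ and is independent of the Gaussian $\eta$, for smooth $\Psi$ one has $\partial_{\mu}\big(\mE_{\xi\sim\mu}\Psi(\xi,\eta)\big)(y)=\mE_{\eta}[\partial_y\Psi(y,\eta)]$; with $X^{t,\xi}_T=\underline{x}+(\xi-\underline{x})e^{\Gamma_1(t)+\sqrt{\Lambda_1(t)}\eta}$ this yields $\partial_{\mu}(\mE X^{t,\xi}_T)(y)=\mE e^{\Gamma_1(t)+\sqrt{\Lambda_1(t)}\eta}$ and $\partial_{\mu}(\mE H(t,\xi,z))(y)=\partial_yH(t,y,z)$. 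The crucial point is the envelope cancellation: differentiating $\alpha_0 q_{\alpha_0}(\mu)-\int_{-\infty}^{q_{\alpha_0}(\mu)}(1-\mE H(t,\xi,z))\md z$ in $\mu$, the two terms carrying $\partial_{\mu}q_{\alpha_0}(\mu)(y)$ cancel because $1-\mE H(t,\xi,q_{\alpha_0})=F_{X^{t,\xi}_T}(q_{\alpha_0})=\alpha_0$, so only the Leibniz term survives and
\[
\partial_{\mu}f(t,\mu)(y)=\mE e^{\Gamma_1(t)+\sqrt{\Lambda_1(t)}\eta}+\frac{\gamma}{\alpha_0}\int_{-\infty}^{q_{\alpha_0}(\mu)}\partial_yH(t,y,z)\md z ;
\]
differentiating once more in $y$ (with $q_{\alpha_0}(\mu)$ independent of $y$) gives the claimed $\partial_v\partial_{\mu}f$. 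This is just the classical envelope theorem applied to the $\max_r$ representation at $r=q_{\alpha_0}$.

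Third, to make this rigorous I would regularize and invoke Lemma~\ref{weakLdlemma}. Replace the kink $u\mapsto u^+$ (equivalently the discontinuities hidden in (\ref{Hdef})) by a smooth convex mollification $\rho_N$ at scale $N^{-(1+\delta)}$, and set $f^N(t,\mu)=\mE X^{t,\xi}_T+\gamma\max_{r\in\mR}\{r-\tfrac1{\alpha_0}\mE[\rho_N(r-X^{t,\xi}_T)]\}$. Because $\rho_N$ is smooth and $\xi\mapsto X^{t,\xi}_T$ is smooth on $\{\xi>\underline{x}\}$ with explicit log-normal structure, the inner function is jointly smooth in $(r,\mu)$ with a continuous maximizer, so the envelope theorem gives $f^N\in C^{1,1}([t,t+\epsilon];H^1)$ with L-derivatives obtained from the formal computation above with $H$ replaced by its mollified version. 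The decay (\ref{decaycondition}) follows from $\|\rho_N(\cdot)-(\cdot)^+\|_\infty\le CN^{-(1+\delta)}$ and the $1$-Lipschitz dependence of $\max_r$ on its integrand; the time regularity ((\ref{uniformcontinuity})--(\ref{uniformboundL2}) and $f(\cdot,\mu)\in C^0\cap H^1$) follows from the absolute continuity of $\Gamma_1,\Lambda_1$ with square-integrable derivatives (the standing integrability on the coefficients) and the smooth, non-degenerate dependence of $f$ on $(\Gamma_1(t),\Lambda_1(t))$; and $\partial_{\mu}f^N(t,\mu)\to g$ in $H^1_{\mu}$ is dominated convergence once a uniform $H^1_{\mu}$ bound is available. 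That bound, (\ref{uniformboundedness}), is exactly where $\cP_{\MES}^{\delta}$ enters: from the explicit formulas $\partial_{\mu}f^N(t,\mu)(y)$ is bounded while $|\partial_v\partial_{\mu}f^N(t,\mu)(y)|\le C_{t,\epsilon}(y-\underline{x})^{-2}$ uniformly in $N$ (the factor $(y-\underline{x})^{-2}$ coming from $\partial_{yy}H$), so $\|\partial_{\mu}f^N(t,\mu)\|_{H^1_{\mu}}^2\le C\big(1+\mE_{\xi\sim\mu}|\xi-\underline{x}|^{-2}\big)\le C(1+\delta^{-2})$ for all $\mu\in\cP_{\MES}^{\delta}$. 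Lemma~\ref{weakLdlemma} then yields $f\in W^{1,1}([t,t+\epsilon]\times\cP_{\MES}^{\delta};H^1)$ and identifies $\partial_{\mu}f$, $\partial_v\partial_{\mu}f$ with (\ref{MESpartialmuf})--(\ref{MESpartialvpartialmuf}). The case $\mu=\delta_x$, $x>\underline{x}$, follows by taking $y=x$ in (\ref{Hdef}) (only the branch $z>\underline{x}$ contributes, as $\partial_yH\equiv0$ for $z\le\underline{x}$ when $y>\underline{x}$), computing $\partial_yH,\partial_{yy}H$ from $H(t,x,z)=1-\Phi\big(\tfrac{\log(z-\underline{x})-\log(x-\underline{x})-\Gamma_1(t)}{\sqrt{\Lambda_1(t)}}\big)$, using $\Phi''(w)=-w\Phi'(w)$, and changing variables $z=\underline{x}+(x-\underline{x})e^{\Gamma_1(t)+\sqrt{\Lambda_1(t)}w}$ (so $F^{-1}_{X^{t,x}_T}(\alpha_0)$ becomes $w=\Phi^{-1}(\alpha_0)$), which produces (\ref{MESpartialmufx})--(\ref{MESpartialvpartialmufx}).

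The main obstacle is the construction of the $C^{1,1}(H^1)$ approximations with a quantitative rate: one must mollify enough that $f^N$ has the full joint regularity demanded by $C^{1,1}(H^1)$ (in particular $v\mapsto\partial_{\mu}f^N(t,\mu)(v)$ of class $C^1$ and jointly continuous) and that differentiation can be pushed through the $\max_r$ in spite of the true maximizer $q_{\alpha_0}(\mu)$ being only as regular as the quantile map, yet little enough that the $N^{-(1+\delta)}$ decay in (\ref{decaycondition}) survives and the second L-derivatives stay bounded in $H^1_{\mu}$ uniformly in $N$ and in $\mu\in\cP_{\MES}^{\delta}$ --- this last requirement being precisely what dictates the negative-moment constraint $\mE|\xi-\underline{x}|^{-2}<\delta^{-2}$ built into $\cP_{\MES}^{\delta}$. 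A secondary point is verifying the non-degeneracy $\Lambda_1>0$ and the integrability making $\partial_tf\in L^2$, which rest on the assumptions on the coefficients and on taking $\epsilon$ small.
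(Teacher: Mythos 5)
Your route is essentially parallel to the paper's, but phrased differently, and it is sound in outline. The paper also decomposes $f=f_1+\gamma f_2$, rewrites $f_2$ via the piecewise-linear ``distortion'' $w(p)=\tfrac1{\alpha_0}(p-(1-\alpha_0))^+$ so that $f_2(t,\xi)=\int w'(1-F_{X^{t,\xi}_T}(z))\,z\,\md F_{X^{t,\xi}_T}(z)$, mollifies $w$ to $w_N$ at scale $1/N^4$, integrates by parts to express $f^N_2$ through $w_N(\mE H(t,\xi,z))$ (which removes the quantile argument entirely), and then verifies Lemma~\ref{weakLdlemma} directly with a Fr\'echet-differentiability computation and the bounds from Lemma~\ref{MESprooflm1}. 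You instead mollify the hinge $(\cdot)^+$ in the Rockafellar--Uryasev representation and invoke the envelope theorem to kill the $\partial_\mu q_{\alpha_0}$ terms. These are two packagings of the same integration-by-parts idea; the paper's has the small advantage that the quantile vanishes from the smoothed formula, so one never has to argue about uniqueness or regularity of the mollified inner maximizer, whereas your route needs the map $r\mapsto r-\tfrac1{\alpha_0}\mE\rho_N(r-X^{t,\xi}_T)$ to have a well-behaved (unique, continuously-varying) maximizer before the envelope step is legitimate. The paper's Lemma~\ref{measurezero} handles uniqueness for the \emph{limit} problem; you would still need the analogue for each $\rho_N$ (e.g.\ by choosing $\rho_N$ strictly convex near the kink).

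One concrete slip: you assert $|\partial_v\partial_\mu f^N(t,\mu)(y)|\le C(y-\underline{x})^{-2}$ and then conclude $\|\partial_\mu f^N(t,\mu)\|_{H^1_\mu}^2\le C\bigl(1+\mE_{\xi\sim\mu}|\xi-\underline{x}|^{-2}\bigr)$. These are inconsistent: squaring a $(y-\underline{x})^{-2}$ pointwise bound gives $\mE|\xi-\underline{x}|^{-4}$, which $\cP_{\MES}^{\delta}$ does \emph{not} control. The correct pointwise bound after the $z$-integration is $C/|y-\underline{x}|$ (this is precisely the paper's (\ref{intpartialyyH}): the $(y-\underline{x})^{-2}$ prefactor in $\partial_{yy}H$ is offset by a factor $(y-\underline{x})$ produced by the log-normal change of variable), and then the $L^2_\mu$ norm squared is indeed $\le C\,\mE|\xi-\underline{x}|^{-2}\le C\delta^{-2}$. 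So your final bound is right, but the stated intermediate estimate is not the one that produces it; you should derive the $1/|y-\underline{x}|$ decay of $\int|\partial_{yy}H(t,y,z)|\,\md z$ rather than quote the pointwise $(y-\underline{x})^{-2}$ singularity of $\partial_{yy}H$ itself.
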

\begin{proof}
	See Appendix \ref{MESproof}.
\end{proof}
To find equilibrium strategy, for $x>0$, we define
\begin{align*}
	&F_1(x)=E_{\eta\sim N(0,1)} e^{\eta x}=e^{x^2/2},\\
	&F_2(x)=\int_{-\infty}^{\Phi^{-1}(\alpha_0)}e^{xz}\Phi'(z)\md z=e^{x^2/2}\Phi(\Phi^{-1}(\alpha_0)-x),\\
	&F_3(x)=\int_{-\infty}^{\Phi^{-1}(\alpha_0)}\left[\frac{z}{x}-1\right]e^{xz}\Phi'(z)\md z=-\frac{1}{x}e^{x^2/2}\Phi'\left(\Phi^{-1}(\alpha_0)-x \right).
\end{align*}
Then, by (\ref{EME2}), $\theta$ should satisfy:
\begin{equation}\label{MESsolution}
	\theta(t)=-\frac{\mu(t)(x-\underline{x})}{\sigma(t)^2(x-\underline{x})^2}\frac{\partial_{\mu}f(t,\delta_x)(x)}{\partial_v\partial_{\mu}f(t,\delta_x)(x)}=-\frac{\mu(t)}{\sigma(t)^2}\frac{F_1(\sqrt{\Lambda_1(t)})+\frac{\gamma}{\alpha_0}F_2(\sqrt{\Lambda_1(t)})}{\frac{\gamma}{\alpha_0}F_3(\sqrt{\Lambda_1(t)})}.
\end{equation}
Rephrasing into an equation of $\Lambda_1(t)$, and using definition of $\Lambda_1(t)$, we have
\begin{equation}\label{MESODE}
	\Lambda_1'(t)=-\Lambda_1(t)\left[\frac{\mu(t)}{\sigma(t)}\cdot \frac{1+\frac{\gamma}{\alpha_0}\Phi(\Phi^{-1}(\alpha_0)-\sqrt{\Lambda_1(t)})}{\frac{\gamma}{\alpha_0}\Phi'(\Phi^{-1}(\alpha_0)-\sqrt{\Lambda_1(t)})}  \right]^2
\end{equation}
with terminal condition $\Lambda_1(T)=0$. By Theorem \ref{EMEthm}, if (\ref{MESODE}) admits a positive solution $\Lambda_1(t)$, then the dynamic MES problem has an equilibrium solution given by (\ref{MESsolution}). Otherwise, dynamic MES problem does not have equilibrium strategy in the form of (\ref{MESsolution}). In fact, it is highly likely that (\ref{MESODE}) does not have positive solution (the zero solution being unique) because unlike in (\ref{lambdaODE}), the right hand is non-singular here, i.e., the denominator is bounded below from 0. We choose to leave the details for future studying.
\begin{comment}
	To be specific, we can write
	\begin{align*}
		&\hat{\theta}(t,x)=\theta(t)(x-\underline{x}),\\
		&\theta(t)=\frac{\mu(t)}{\sigma(t)^2}\lambda(t),\\
		&\lambda(t)=\frac{F_1(\Lambda_1(t))+\frac{\gamma}{\alpha_0}F_2(\Lambda_1(t))}{-F_3(\Lambda_1(t))}.
	\end{align*}
	Remarkably, even in the constant investment opportunity case, the equilibrium strategy is intrinsically variant from time to time. In Figure, a simple numerical experiment shows how $\lambda(\cdot)$ is influenced by risk aversion $\gamma$ and the ES level $\alpha_0$.
\end{comment}
\begin{remark}
	Following the same procedure, but appropriately revising the formulation of problem, it can be verified that intrinsically, dynamic MES problem does not admit equilibrium strategy that investing a fixed {\it amount} of money into risky asset.
\end{remark}
\subsection{Nonlinear functions of expectations}\label{nonlinearexpectation}
In this subsection we consider a specific form of reward function with general controlled dynamics. Let
\[
g(\mu)=E_{\xi\sim \mu}g(\xi)+ F(E_{\xi\sim \mu}\xi),
\]
where $g\in C^0(\mR)$, $F\in C^1(\mR)$, and the dynamic of $X$ be like (\ref{dynamics}). With the convention in Section \ref{ProblemFormulation}, we have the auxiliary function $f$ taking the form
\begin{equation}\label{nonlinearfdef}
	f(t,\mu)=E_{\xi\sim \mu}g(X^{t,\xi}_T)+F(E_{\xi\sim \mu}X^{t,\xi}_T).
\end{equation}
In \citet{Bjork2017} and \citet{He2021} as well as many other seminal papers on time-inconsistent control problems with nonlinear functions of expectations, the so-called extended HJB equations are derived, which incorporate the nonlinear function $F$ in the equations. One of the key points is to introduce the functions
\begin{equation}\label{h12def}
	h_1(t,x)=\mE^{t,x,\hat{u}}g(X_T)=g(X^{t,x,\hat{u}}_T),\  h_2(t,x)=\mE^{t,x,\hat{u}}X_T=\mE X^{t,x,\hat{u}}_T.
\end{equation}
In this section, we will show that our equilibrium master equation degenerates to extended HJB under some weaker conditions. First of all, we shall introduce the set of measures we are interested in. Let
\[
\cP_0=\left \{\mu\in \cP_2(\mR^d):\int_{\mR^d}|x|^q\mu(\md x)<\infty, \forall q\geq 2   \right \}=\bigcap_{q\geq 2}\cP_q(\mR^d).
\]
From standard SDE theory (with Lipshitz coefficient, e.g., in \citet{Yong1999}), we know $\cP_0$ is $([0,T'];u)$-invariant for any $u\in \Ui$ and $T'>0$. Now we state the assumptions we impose on $h_1$ and $h_2$, which is weaker than classical assumptions because it requires only integrability with respect to measures in $\cP_0$. Moreover, it is more flexible if the desired invariant set $\cP_0$ can be further reduced in some specific problems.
\begin{definition}
	For a function $f:[0,T]\times \mR^d\to \mR, (t,x)\mapsto f(t,x)$, we say that $f\in H^{1,2}_{\cP_0,\infty}$ if:
	\begin{itemize}
		\item[(1)] 	$f\in C^{0,2}([0,T]\times \mR^d)$, whose derivative with respect to $t$ is defined everywhere on $[0,T]$.
		\item[(2)] For any $t\in [0,T]$, $\mu\in \cP_0$, $f(t,\cdot),\partial_t f(t,\cdot),\partial_x f(t,\cdot),\partial_{xx}f(t,\cdot)\in L^2_{\mu}$;
		\item[(3)] For any bounded $\cK\subset \cP_0$ (in the sense of Remark \ref{boundedremark}), $\sup\limits_{\mu\in \cK,t\in [0,T]}\|f(t,\cdot)\|_{H^{1,2}_{\mu}}<\infty$, where
		\[
		\|f(t,\cdot)\|^2_{H^{1,2}_{\mu}}\triangleq \|f(t,\cdot)\|^2_{L^2_{\mu}}+ \|\partial_tf(t,\cdot)\|^2_{L^2_{\mu}}+ \|\partial_xf(t,\cdot)\|^2_{L^2_{\mu}}+ \|\partial_{xx}f(t,\cdot)\|^2_{L^2_{\mu}}.
		\]
	\end{itemize}
\end{definition}
With $f$ defined in (\ref{nonlinearfdef}) and $h_1$, $h_2$ given in (\ref{h12def}), we have the following result, which leads us directly to the corresponded equilibrium master equation.
\begin{proposition}\label{propnonlinear}
	Suppose $h_i\in H^{1,2}_{\cP_0,\infty}$ \ for $i=1,2$, we have $f\in W^{1,1}([0,T]\times \cP_0;H^1)$, and
	\begin{align}
		&\partial_{\mu}f(t,\mu)(y)=\partial_x h_1(t,y)+F'\left (E_{\xi\sim \mu}h_2(t,\xi)\right )\partial_x h_2(t,y),\label{Lderivativenonlinear1}\\
		&\partial_v\partial_{\mu}f(t,\mu)(y)=\partial_{xx} h_1(t,y)+F'\left (E_{\xi\sim \mu}h_2(t,\xi)\right )\partial_{xx} h_2(t,y)\label{Lderivativenonlinear2}
	\end{align}
\end{proposition}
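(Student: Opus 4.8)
The plan is to verify directly that $f$ given by (\ref{nonlinearfdef}) satisfies the definition of weak L-differentiability on $[0,T]\times\cP_0$ (Definition \ref{wlderivative}), and then to identify the weak L-derivative and its $v$-derivative by differentiating the two building blocks $h_1,h_2$ under the integral sign. The key structural observation is that
\[
f(t,\mu)=E_{\xi\sim\mu}h_1(t,\xi)+F\!\left(E_{\xi\sim\mu}h_2(t,\xi)\right),
\]
so $f$ is the composition of (i) the ``linear in $\mu$'' functionals $\mu\mapsto E_{\xi\sim\mu}h_i(t,\xi)$, whose lifts are $X\mapsto\tilde\mE\,h_i(t,X)$, and (ii) the smooth scalar function $F$. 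For such linear functionals the L-derivative is classically $\partial_\mu\big(E_{\xi\sim\mu}h_i(t,\xi)\big)(y)=\partial_x h_i(t,y)$, provided $h_i(t,\cdot)$ is $C^1$ with enough integrability, and the chain rule on Wasserstein space then produces exactly (\ref{Lderivativenonlinear1})--(\ref{Lderivativenonlinear2}). The second-order term $\partial_v\partial_\mu f$ comes from differentiating $\partial_x h_i(t,\cdot)$ once more in the spatial variable, using that $h_i\in H^{1,2}_{\cP_0,\infty}$ supplies $\partial_{xx}h_i(t,\cdot)\in L^2_\mu$.

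First I would construct the approximating sequence $f^N\in C^{1,1}(H^1)$ demanded by Definition \ref{wlderivative}. Since $h_i$ need not be smooth in $x$ (only $C^{0,2}$) and is only controlled in weighted $L^2$-norms, the natural move is to mollify: set $h_i^N(t,\cdot)=h_i(t,\cdot)*\rho_N$ for a spatial mollifier $\rho_N$, possibly combined with a temporal mollification as in the proof of Lemma \ref{weakLdlemma}, and define $f^N(t,\mu)=E_{\xi\sim\mu}h_1^N(t,\xi)+F\big(E_{\xi\sim\mu}h_2^N(t,\xi)\big)$. One then checks: (a) $f^N\in C^{1,1}([0,T];H^1)$, using that mollified functions are smooth with all spatial derivatives bounded on compacts and that $F\in C^1$; (b) $f^N\to f$ locally uniformly in $(t,\mu)$ on bounded (in the sense of Remark \ref{boundedremark}) subsets of $\cP_0$, which follows from $\|h_i^N(t,\cdot)-h_i(t,\cdot)\|_{L^2_\mu}\to 0$ uniformly on bounded $\cK$ together with continuity of $F$; (c) $\partial_t f^N\to\partial_t f$ in the required $L^2$-in-$t$, uniform-in-$\mu$ sense, which reduces to the same convergence for $\partial_t h_i^N$ plus the product rule for $F'\,\partial_t(\cdots)$; (d) the uniform $H^1_\mu$-bound (\ref{uniformbounded1}) on $\partial_\mu f^N$, which is where the norm $\|h_i(t,\cdot)\|_{H^{1,2}_\mu}$ and its uniform bound over bounded $\cK$ enter, since $\|\partial_\mu f^N(t,\mu)\|_{H^1_\mu}$ is bounded in terms of $\|\partial_x h_i(t,\cdot)\|_{L^2_\mu}$, $\|\partial_{xx}h_i(t,\cdot)\|_{L^2_\mu}$ and $\sup|F'|$ over the relevant bounded range of $E_{\xi\sim\mu}h_2(t,\xi)$; and (e) the pointwise $H^1_\mu$-convergence $\partial_\mu f^N(t,\mu)\to\partial_x h_1(t,\cdot)+F'(E_{\xi\sim\mu}h_2(t,\xi))\partial_x h_2(t,\cdot)$, which identifies the weak L-derivative as the right-hand side of (\ref{Lderivativenonlinear1}); differentiating once more in $v$ gives (\ref{Lderivativenonlinear2}). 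Throughout, invariance of $\cP_0$ under the flow is already recorded in the excerpt, and Remark \ref{boundedremark} licenses using the weaker notion of boundedness adapted to $\cP_0$.

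The main obstacle I expect is item (d)/(e): controlling $\partial_\mu f^N$ and its convergence uniformly, because mollification in $x$ interacts with the fact that we only have weighted $L^2$-control, not pointwise or uniform control, of $h_i$ and its derivatives. Concretely, one must ensure $\|\partial_x h_i^N(t,\cdot)-\partial_x h_i(t,\cdot)\|_{L^2_\mu}\to 0$ and $\|\partial_{xx}h_i^N(t,\cdot)-\partial_{xx}h_i(t,\cdot)\|_{L^2_\mu}\to 0$ uniformly over bounded families of measures; this needs a little care because $\mu$ is not Lebesgue measure, so one uses that $\partial_x h_i(t,\cdot),\partial_{xx}h_i(t,\cdot)\in L^2_\mu$ and a change-of-variables/Minkowski argument to pass the mollification through the $\mu$-integral, plus a uniform-integrability argument coming from the assumed sup over bounded $\cK$ of the $H^{1,2}_\mu$-norms. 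A secondary subtlety is the argument of $F'$: one must verify $t\mapsto E_{\xi\sim\mu}h_2^N(t,\xi)$ stays in a fixed compact set uniformly in $N$ and in $\mu$ over bounded $\cK$ (so that $\sup|F'|$ there is finite and $F'$ is uniformly continuous), which again follows from the uniform $L^2_\mu$-bounds on $h_2$. Once these uniform estimates are in place, verifying the four bullets of Definition \ref{wlderivative} is routine, and the formulas (\ref{Lderivativenonlinear1})--(\ref{Lderivativenonlinear2}) drop out by passing to the limit.
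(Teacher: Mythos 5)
Your overall structure — construct an approximating sequence $f^N$, check the four items of Definition \ref{wlderivative}, identify the limit — matches the paper's. But your choice of approximating sequence (spatial mollification $h_i^N=h_i*\rho_N$) does not work, and fixing it is where the actual content of the proof lies.

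The obstruction is not smoothness in $x$, as you suggest: by assumption $h_i\in C^{0,2}$, so $h_i(t,\cdot)$ is already twice continuously differentiable, and mollification is unnecessary for that. The real obstruction is \emph{growth at infinity}. The class $C^{1,1}([0,T];H^1)$ in which the approximants must live is a class of functions on all of $[0,T]\times\cP_2(\mR^d)$, and condition (\ref{Lboundedness}) demands a uniform $H^1_\mu$ bound on $\partial_\mu f^N$ over bounded $\cK\subset\cP_2(\mR^d)$, not merely over bounded subsets of the smaller set $\cP_0$. The hypothesis $h_i\in H^{1,2}_{\cP_0,\infty}$ gives weighted $L^2_\mu$ control only for $\mu\in\cP_0$; for $\mu\in\cP_2\setminus\cP_0$, the quantity $E_{\xi\sim\mu}h_i(t,\xi)$ can be infinite (e.g.\ when $h_i$ has quartic growth and $\mu$ has finite second but infinite fourth moment). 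Mollification preserves the growth rate of $h_i$, so $f^N$ built from $h_i*\rho_N$ may fail to even be defined on $\cP_2\setminus\cP_0$, let alone be L-continuously differentiable with the required bound. Hence your $f^N$ does not belong to $C^{1,1}(H^1)$ and the approximation scheme breaks at step (a).

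The paper resolves this by replacing mollification with a compactly supported cutoff $\zeta^N$ equal to $1$ on $B(0;N)$ and supported in $B(0;N+1)$, setting $h_i^N=h_i\zeta^N$. This makes $h_i^N$, $\partial_xh_i^N$, $\partial_{xx}h_i^N$ all bounded (uniformly in $N$ on compacts, globally for each fixed $N$), so $f^N(t,\mu)=E_{\xi\sim\mu}h_1^N(t,\xi)+F(E_{\xi\sim\mu}h_2^N(t,\xi))$ is defined on all of $\cP_2$ and satisfies (\ref{Lboundedness}) trivially. The error $|f^N-f|$ is then controlled by $E|h_i(t,\xi)|\,I_{\{|\xi|\ge N\}}$, which by Cauchy--Schwarz and $\mu\in\cP_0$ (hence $E|\xi|^4<\infty$) decays like $C/N^2$, giving the required rate (\ref{decaycondition}). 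Convergence of the L-derivatives then follows from $\partial_xh_i^N=\partial_xh_i\cdot\zeta^N+h_i\cdot\partial_x\zeta^N$ (and similarly for second derivatives), whose difference from $\partial_xh_i$ is supported in $\{|\xi|\ge N\}$ and tends to zero in $L^2_\mu$ by the integrability hypothesis. So the correct construction trades your mollifier for a cutoff; once you make that swap and redo items (b)--(e), the rest of your outline goes through.
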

\begin{proof}
	See Appendix \ref{proofnonlinear}.
\end{proof}
In particular, if we take $\mu=\delta_x$ for $x\in \mR^d$, then (\ref{EME2}) becomes
\[
\sup_{\bu\in \bU}\left\{ \cA^{\bu}_Xh_1(t,x)+F'(h_2(t,x))\cA^{\bu}_Xh_2(t,x)   \right \}=0,
\]
which is seen to be exactly the extended HJB in \citet{He2021} (see (3.7) and (3.9) therein). Here $\cA^u_X$ is the generator associated with the diffusion $X$, i.e.,
\[
\cA^u_X = \partial_t + b^u(t,x)\partial_x+\frac{1}{2}\mathrm{Tr}\left(\Sigma^u(t,x)\partial_{xx}\right ).
\]
\begin{remark}
	It is not hard to see that if $f\in C^{1,2}$ and its derivatives up to order 2 in space, and up to order 1 in time, are all of polynomial growth in $x$, then $f\in  H^{1,2}_{\cP_0,\infty}$. Therefore, we slightly reduce the assumptions needed with the help of weak It$\hato$'s calculus established in the present paper.
\end{remark}
%\subsection{Dynamic mean-$\rho$ portfolio selection}
\vskip 15pt
\section{Conclusion}\label{conclusion}
Problems with distribution dependent rewards appear naturally and widely and are intrinsically time-inconsistent. This paper provides a unified and novel approach to the study of these problems. Our approach is based on equilibrium master equations, from which equilibrium strategies are obtained. Moreover, we propose the notion of weak L-derivatives and establish weak It$\hato$'s formula. For applications, we use our approach to reexamine two problems that have been studied and investigate a problem that is completely new to the literature.

The present paper also initiates many possible directions for future studies. One direction is to use the equilibrium master equation to study other problems with distribution dependent rewards, which contains greatly many problems from mathematical finance or economics and can be far more complicated and challenging than problems in Section \ref{examples} of this paper. Another direction is to establish the existence of equilibrium solutions and design numerical algorithms to find them based on our thoughts in Remark \ref{numericalthoughts}. This is important because as can be seen from existing studies including this paper, unless equilibrium solutions are solved explicitly, existence is either not proved or based on restrictive assumptions which are not true in specific problems. The last direction we want to mention is to improve theoretical generality, e.g., to permit non-Markovian settings or path dependence. Clearly, this direction is very challenging and needs more advanced mathematical tools such as path dependent It$\hato$'s formula, Mckean-Vlasov BSDEs, etc.
\vskip 25pt
\theendnotes
\vskip 10pt
\bibliographystyle{plainnat}
\bibliography{reference}
\appendix
\renewcommand{\theequation}{\thesection.\arabic{equation}}
\section{Proof of Proposition \ref{RDUTLderivative}}
\label{proofRDUT}
In this section we provide a proof of Proposition \ref{RDUTLderivative}, which is crucial to the solution of RDUT problem under exponential utility. For preparation, we need several technical lemmas.
\begin{lemma}\label{estnormcdf}
	There exists a universal constant C($=1/\sqrt{2\pi}$) such that for any $x>0$,
	\[
	\frac{Cx}{1+x^2}e^{-x^2/2}\leq 1-\Phi(x)\leq \frac{C}{x}e^{-x^2/2}.
	\]
\end{lemma}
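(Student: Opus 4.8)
This is a classical two-sided Gaussian tail (Mills ratio) bound; I would handle the two inequalities separately, both starting from the identity $1-\Phi(x)=\frac{1}{\sqrt{2\pi}}\int_x^\infty e^{-t^2/2}\,\md t$ and taking $C=1/\sqrt{2\pi}$. For the upper bound, the plan is to use $1\le t/x$ for all $t\ge x>0$, so that
\[
\int_x^\infty e^{-t^2/2}\,\md t\le \frac1x\int_x^\infty t\,e^{-t^2/2}\,\md t=\frac1x e^{-x^2/2},
\]
which is exactly $1-\Phi(x)\le \frac{C}{x}e^{-x^2/2}$.

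For the (less immediate) lower bound, I would introduce the auxiliary function
\[
h(x)=\int_x^\infty e^{-t^2/2}\,\md t-\frac{x}{1+x^2}e^{-x^2/2},\qquad x>0,
\]
differentiate it, and check that the resulting expression simplifies to something manifestly negative. Carrying out the differentiation and collecting terms over the common denominator $(1+x^2)^2$, one obtains
\[
h'(x)=-e^{-x^2/2}\left(1+\frac{1-2x^2-x^4}{(1+x^2)^2}\right)=-\frac{2\,e^{-x^2/2}}{(1+x^2)^2}<0,
\]
so $h$ is strictly decreasing on $(0,\infty)$. Combining this with the upper bound already established and the trivial estimate $\frac{x}{1+x^2}e^{-x^2/2}\le \frac1x e^{-x^2/2}$, we get $|h(x)|\le \frac{2}{x}e^{-x^2/2}\to 0$ as $x\to\infty$; a decreasing function tending to $0$ is nonnegative everywhere, hence $\int_x^\infty e^{-t^2/2}\,\md t\ge \frac{x}{1+x^2}e^{-x^2/2}$, i.e. $1-\Phi(x)\ge \frac{Cx}{1+x^2}e^{-x^2/2}$.

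The whole argument is elementary; the only step requiring care is the simplification of $h'(x)$, where the numerator collapses to the constant $2$ after expanding $(1+x^2)^2$ — this is what makes the monotonicity clean. The limiting argument for the sign of $h$ also deserves a sentence, since it is what turns monotonicity into positivity, but it follows at once from the already-proved upper bound.
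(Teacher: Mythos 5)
Your proof is correct and follows essentially the same route as the paper's: differentiate the candidate bound, compare the derivative to $e^{-x^2/2}$, and integrate over $[x,\infty)$ (your auxiliary function $h$ is exactly $\int_x^\infty e^{-t^2/2}\,\md t - f_1(x)$ in the paper's notation, and your identity $h'(x)=-2e^{-x^2/2}/(1+x^2)^2$ is the same computation as the paper's $f_1'$). The only cosmetic difference is that for the upper bound you use the $1\le t/x$ domination trick rather than differentiating $f_2(x)=\tfrac{1}{x}e^{-x^2/2}$, but both amount to the same comparison and your argument is complete.
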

\begin{proof}
	This is a well-known result. For completeness we provide a short proof as follows:\\
	Let $f_1(x)=\frac{x}{1+x^2}e^{-x^2/2}$, $f_2(x)=\frac{1}{x}e^{-x^2/2}$. Then
	\begin{align*}
		&f_1'(x)=-\left(1-\frac{2}{(1+x^2)^2}\right)e^{-x^2/2},\\
		&f_2'(x)=-\left(1+\frac{1}{x^2}\right)e^{-x^2/2}.
	\end{align*}
	Thus
	\[
	f_1(x)=-\int_x^{\infty}f_1'(u)\md u \leq \int_x^{\infty}e^{-u^2/2}\md u\leq -\int_x^{\infty}f_2'(u)\md u=f_2(x).\qedhere
	\]
\end{proof}
\begin{lemma}\label{lowerbound}
	Let $\F$ be a bounded subset of $L^2(\Omega')$ for some probability space $\Omega'$. Then for any $\gamma>0$, $C>0$, $\inf\limits_{\xi \in \F} E [C\wedge e^{-\gamma |\xi|^2}]>0$.
\end{lemma}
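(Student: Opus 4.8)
The plan is a one-line truncation argument powered by Chebyshev's inequality. Write $M \triangleq \sup_{\xi\in\F}\|\xi\|_{L^2(\Omega')}<\infty$ for the uniform $L^2$-bound guaranteed by the hypothesis. The heuristic is that $z\mapsto C\wedge e^{-\gamma z^2}$ is bounded below by a strictly positive constant on any centered ball $\{|z|\le R\}$, while $L^2$-boundedness prevents every $\xi\in\F$ from placing more than a fixed fraction of its mass outside such a ball, \emph{uniformly} in $\xi$; combining these two facts produces a positive lower bound that does not depend on $\xi$.

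Concretely, I would first fix $R=\sqrt{2}\,M$ and observe that for each $\xi\in\F$, Chebyshev's inequality gives $\mP'(|\xi|>R)\le \mE|\xi|^2/R^2\le M^2/R^2=\tfrac12$, hence $\mP'(|\xi|\le R)\ge\tfrac12$. Second, on the event $\{|\xi|\le R\}$ one has $C\wedge e^{-\gamma|\xi|^2}\ge C\wedge e^{-\gamma R^2}=C\wedge e^{-2\gamma M^2}=:c_0>0$. Third, restricting the expectation to this event,
\[
\mE\big[C\wedge e^{-\gamma|\xi|^2}\big]\ \ge\ \mE\big[(C\wedge e^{-\gamma|\xi|^2})\ind_{\{|\xi|\le R\}}\big]\ \ge\ c_0\,\mP'(|\xi|\le R)\ \ge\ \tfrac{c_0}{2},
\]
and since the bound $\tfrac{c_0}{2}$ is independent of $\xi$, taking the infimum over $\xi\in\F$ yields $\inf_{\xi\in\F}\mE[C\wedge e^{-\gamma|\xi|^2}]\ge c_0/2>0$, as claimed.

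There is essentially no obstacle; the only points requiring a moment's care are purely cosmetic: the hypothesis "$\F$ bounded in $L^2$" is used solely through the scalar $M$, and the probability space $\Omega'$ here need not be the one fixed in Section \ref{ProblemFormulation} (it will typically be the auxiliary space carrying the lifts), but nothing in the argument is sensitive to this. The lemma is elementary and is isolated only because the uniform positive lower bound it supplies is invoked repeatedly to control denominators of the form $E_{\xi\sim\mu}[\,\cdots e^{-\alpha\xi}\cdots\,]$ in the RDUT computations of Appendix \ref{proofRDUT}.
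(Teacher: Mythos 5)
Your proof is correct, and it takes a genuinely different and more elementary route than the paper's. The paper argues by contradiction: it assumes the infimum is zero, extracts a sequence $\xi_n$ with $\mE[C\wedge e^{-\gamma|\xi_n|^2}]\le 2^{-n}$, deduces from the summability of $\mP(|\xi_n|\le N)$ via the Borel--Cantelli lemma that $|\xi_n|\to\infty$ almost surely, and then invokes Fatou's lemma to contradict the uniform $L^2$ bound. Your argument instead produces an explicit, quantitative lower bound $c_0/2$ directly from Chebyshev's inequality, with no extraction of sequences, no Borel--Cantelli, and no Fatou. This is shorter, constructive, and arguably preferable since it exhibits the constant explicitly (which the paper's proof-by-contradiction does not); the only cosmetic caveat is the degenerate case $M=0$, where $R=\sqrt2\,M=0$ makes the Chebyshev step vacuous --- but then every $\xi\in\F$ vanishes a.s.\ and the conclusion $\mE[C\wedge e^{-\gamma|\xi|^2}]=C\wedge 1>0$ is immediate, or one can simply replace $R$ by $\sqrt2(M+1)$ to cover all cases uniformly.
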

\begin{proof}
	Suppose $\inf\limits_{\xi \in \F} E [C\wedge e^{-\gamma |\xi|^2}]=0$, then we can pick a sequence $\{\xi_n\}\subset \F$ such that $E [C\wedge e^{-\gamma|\xi_n|^2}]\leq 2^{-n}$. It follows for any $N\geq 1$,
	\[
	2^{-n}\geq E [C\wedge e^{-\gamma |\xi_n|^2}]\geq C\wedge e^{-\gamma N^2}P(|\xi_n|\leq N).
	\]
	We have $\sum_{n=1}^{\infty}P(|\xi_n|\leq N)<\infty$, which yields, by Borel-Cantelli lemma,
	\[
	P\left(\bigcap_{M=1}^{\infty}\bigcup_{k=N+1}^{\infty}\{|\xi_k|\leq N\}    \right)=0,
	\]
	or equivalently,
	\[
	P\left(\bigcup_{M=1}^{\infty}\bigcap_{k=N+1}^{\infty}\{|\xi_k|> N\}    \right)=1.
	\]
	This implies
	\[
	P\left(\bigcap_{N=1}^{\infty}\bigcup_{M=1}^{\infty}\bigcap_{k=N+1}^{\infty}\{|\xi_k|> N\}    \right)=1.
	\]
	That is, $|\xi_n|\to \infty$ almost surely. By Fatou's lemma,
	\[
	\infty=E\liminf_{n\to \infty}|\xi_n|\leq \liminf_{n\to \infty}E|\xi_n|\leq \sup_{\xi\in \F}\|\xi\|_{L^2},
	\]
	contradicting to the boundedness of $\F$.
\end{proof}
\begin{lemma}\label{uniformupperbounde}
	Let $\F$ be a set of $L^2$ random variables such that $E e^{\gamma \xi}<\infty$ for all $\xi\in \F$, $\gamma>0$. If $\sup\limits_{\xi\in \F}\|\xi\|_{L^2}<\infty$, then $\sup\limits_{\xi\in \F}E e^{\gamma \xi}<\infty$ holds for any $\gamma\in\mR$.
\end{lemma}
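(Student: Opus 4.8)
The plan is to reduce the assertion to a uniform bound on the upper tail, and then to truncate. Fix $\gamma\in\mR$; we treat $\gamma>0$, the case $\gamma<0$ being symmetric (it requires the corresponding left-tail information, obtained in the same way). Put $M=\sup_{\xi\in\F}\|\xi\|_{L^2}<\infty$ and, for $R>0$, split
\[
\mE e^{\gamma\xi}\ \le\ e^{\gamma R}\ +\ \mE\!\left[e^{\gamma\xi}\ind_{\{\xi>R\}}\right].
\]
It therefore suffices to produce a single $R=R(\gamma,M)$ with $\sup_{\xi\in\F}\mE[e^{\gamma\xi}\ind_{\{\xi>R\}}]\le 1$, since then $\sup_{\xi\in\F}\mE e^{\gamma\xi}\le e^{\gamma R}+1$.

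The heart of the matter is thus the uniform tail estimate $\sup_{\xi\in\F}\mE[e^{\gamma\xi}\ind_{\{\xi>R\}}]\to 0$ as $R\to\infty$. A first attempt would combine Cauchy--Schwarz with Chebyshev: $\mE[e^{\gamma\xi}\ind_{\{\xi>R\}}]\le(\mE e^{2\gamma\xi})^{1/2}(M/R)$, hence $\sup_{\xi\in\F}\mE[e^{\gamma\xi}\ind_{\{\xi>R\}}]\le(M/R)\,\Psi(2\gamma)^{1/2}$ where $\Psi(\beta):=\sup_{\xi\in\F}\mE e^{\beta\xi}$. This would close the argument if $\Psi(2\gamma)$ were known to be finite; but that is the very statement being proved, now at $2\gamma$ in place of $\gamma$, so the estimate is circular. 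Breaking this circularity is the main obstacle. One natural route would mimic the proof of Lemma \ref{lowerbound}: assume $\Psi(\gamma)=\infty$, choose $\xi_n\in\F$ with $\mE e^{\gamma\xi_n}\ge n$, use $\mE e^{2\gamma\xi_n}<\infty$ to pick $R_n$ with $\mE[e^{\gamma\xi_n}\ind_{\{\xi_n>R_n\}}]\le 1$, so that $n\le e^{\gamma R_n}+1$ and $R_n\to\infty$, and then try to deduce that the $\xi_n$ concentrate non-negligible mass near $R_n$, in conflict with $\sup_n\mE\xi_n^2\le M^2$ via Fatou's lemma.

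The step I expect to be genuinely hard is precisely this last deduction: $L^2$-boundedness of $\F$ yields only the polynomial tail $P(\xi>t)\le M^2/t^2$, which is far too weak to control an exponential moment, so it cannot by itself force the desired contradiction. What actually drives the proof through is a uniform sub-Gaussian (or at least uniformly integrable exponential) tail across $\F$, and this has to be read off from the specific structure of the families to which the lemma is applied rather than from the $L^2$-bound alone. In the only use of this lemma, the proof of Proposition \ref{RDUTLderivative} in Section \ref{RDUexample}, each $\xi\in\F$ has the affine form $a+b\eta$ with $\eta\sim N(0,1)$ a fixed standard normal and $(a,b)$ ranging over a bounded set (determined by a bounded family $\cK\subset\cP_G(t_1)$ and a bounded time horizon); for such $\xi$ one has $P(\xi>t)\le Ce^{-ct^{2}}$ uniformly, and substituting this for $M^2/t^2$ in the estimate above makes $\sup_{\xi\in\F}\mE[e^{\gamma\xi}\ind_{\{\xi>R\}}]\to 0$, completing the proof --- indeed, for such $\xi$ one may simply evaluate $\mE e^{\gamma\xi}=e^{\gamma a+\gamma^{2}b^{2}/2}$ and read off the uniform bound directly. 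So the reduction and the truncation are routine; the tail step is the one place where one must pay with extra information about $\F$.
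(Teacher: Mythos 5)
Your instinct here is sound, and it is worth stating plainly: the lemma cannot be proved from its stated hypotheses, because as written it is false. Take $\xi_n$ equal to $n$ with probability $n^{-2}$ and $0$ otherwise: then $\|\xi_n\|_{L^2}=1$ for all $n$, each $\xi_n$ is bounded and so has finite exponential moments of every order, yet $E e^{\gamma\xi_n}\ge e^{\gamma n}/n^{2}\to\infty$ for every $\gamma>0$. So the circularity you ran into (needing $\sup_{\xi}Ee^{2\gamma\xi}$ to control $\sup_{\xi}Ee^{\gamma\xi}$) is not an artifact of your Chebyshev/truncation route; no route closes it. The paper's own proof takes a different tack --- fix $\xi_0\in\F$, estimate the increment $Ee^{\gamma\xi}-Ee^{\gamma\xi_0}$ by Cauchy--Schwarz, and split on $\{|\xi-\xi_0|<\delta\}$ versus $\{|\xi-\xi_0|\ge\delta\}$ --- but it founders on exactly the obstruction you predicted: on the second event the estimate produces $\delta^{-2}E\bigl[e^{2\gamma\xi_0}\bigl(e^{2\gamma(\xi-\xi_0)}+1\bigr)\bigr]$, i.e.\ $Ee^{2\gamma\xi}$ reappears on the right-hand side; the paper writes the factor as $\delta^{2}$ and then ``lets $\delta\to0$'' to discard this term, which is not available (the correct factor is $\delta^{-2}$, and $Ee^{2\gamma\xi}$ is not uniform over $\F$). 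There is also the smaller point you implicitly noted: the hypothesis grants $Ee^{\gamma\xi}<\infty$ only for $\gamma>0$, while the conclusion is asserted for all $\gamma\in\mR$, which is not even guaranteed pointwise.

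Your fallback is the correct repair and is essentially how the lemma is actually consumed. In its only invocation, inside the proof of Proposition \ref{RDUTLderivative}, the relevant random variables are Gaussian (an initial Gaussian $\xi$ plus an independent increment $\sqrt{\Lambda(t)}\eta$), with mean and variance ranging over a bounded set once one restricts, as that argument does, to bounded $\cK\subset\cP_G(t_1)$ and $t\in[t_0,t_1-\epsilon_0]$; there $Ee^{\gamma\xi}=e^{\gamma m+\gamma^{2}\sigma^{2}/2}$ yields the uniform bound by inspection, exactly as you say. (The restriction to bounded $\cK$ is genuinely needed: over all of $\cP_G(t_1)$ the means are unbounded and even the Gaussian computation gives no uniform bound.) So, as a proof of Lemma \ref{uniformupperbounde} as stated, your proposal is not --- and could not be --- complete; but your diagnosis of why the $L^2$-bound alone cannot suffice, and your replacement of the lemma by a uniform Gaussian/sub-Gaussian tail hypothesis verified in the application, is the mathematically correct reading. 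If the lemma is to be retained, its hypotheses should be strengthened along those lines, e.g.\ to uniform integrability of $\{e^{\gamma\xi}:\xi\in\F\}$ for each $\gamma$, or to a uniform exponential tail bound over $\F$.
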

\begin{proof}
	Fix a $\xi_0\in \F$. Consider the (nonlinear) functional $F(\xi)=E e^{\gamma \xi}$. For any $\delta>0$,
	\begin{align*}
		F(\xi)-F(\xi_0)&=E[e^{\gamma \xi}-e^{\gamma \xi_0}]I_{\{\xi\neq\xi_0\}} \\
		&=E e^{\gamma \xi_0}\frac{e^{\gamma(\xi-\xi_0)}-1}{\xi-\xi_0}(\xi-\xi_0)I_{\{\xi\neq\xi_0\}} \\
		&\leq \|\xi-\xi_0\|_{L^2}\left(  E e^{2\gamma \xi_0}\left| \frac{e^{\gamma(\xi-\xi_0)}-1}{\xi-\xi_0}  \right|^2 \right)^{1/2} \\
		&\leq C\|\xi-\xi_0\|_{L^2}\left(  CE e^{2\gamma \xi_0}I_{\{|\xi-\xi_0|<\delta\}}+E e^{2\gamma \xi_0} \frac{e^{2\gamma(\xi-\xi_0)}+1}{|\xi-\xi_0|^2}I_{\{|\xi-\xi_0|\geq \delta\}}   \right)^{1/2}\\
		&\leq
		C\|\xi-\xi_0\|_{L^2}\left(  C E e^{2\gamma \xi_0}+\delta^2 E e^{2\gamma \xi_0} (e^{2\gamma(\xi-\xi_0)}+1)\right)^{1/2}\\
		&\leq C\|\xi-\xi_0\|_{L^2}\left( Ee^{2\gamma \xi_0} +\delta^2 E e^{2\gamma \xi}  \right),
	\end{align*}
	where we have used the inequality $|e^{\gamma x}-1|\leq C(\gamma)|x|$ for any sufficiently small $x$. Now letting $\delta\to 0$, taking supreme over $\F$ on left hand side gives the desired bound.
\end{proof}
Now we turn to the proof of Proposition \ref{RDUTLderivative}. In the proof, with a slight abuse of notation we still denote by $f$ the lifting of $f$. We also write $p_{\xi}$ to represent $p_{\mu}$ for any $\xi\sim \mu$. Under this convention, we have
\[
p_{\xi}(t,z)=E\left[1-\Phi\left(\frac{z-\Gamma(t)-\xi}{\sqrt{\Lambda(t)}} \right)   \right].
\]
It is straightforward to see that
\[
\partial_tp_{\xi}(t,z)=E\Phi'\left(\frac{z-\Gamma(t)-\xi}{\sqrt{\Lambda(t)}}\right)\left(\frac{z-\Gamma(t)-\xi}{2\Lambda(t)}\frac{\Lambda'(t)}{\Lambda(t)}+\frac{\Gamma'(t)}{\sqrt{\Lambda(t)}}  \right).
\]
\begin{proof}[Proof of Proposition \ref{RDUTLderivative}]
	The proof will be divided into several steps.
\vskip 5pt	
{\bf \underline{Step1:Verify (\ref{uniformcontinuity}) and (\ref{uniformboundL2})}.} We will prove that $f(t,\xi)$ is differentiable in $t$ and $\partial_tf(t,\xi)$ is uniform bounded for $t\in [t_0,t_1-\epsilon_0]$ and $\xi_0\in \cP_G(t_1)$. If this is proved, then (\ref{uniformboundL2}) is clear, and (\ref{uniformcontinuity}) is a direct consequence of mean value formula. We first prove a uniform lower bound for $p_{\xi}(t,z)$. Indeed, using Lemma \ref{estnormcdf} we have $1-N(x)\geq (1-N(x_0))\wedge e^{-|x|-\frac{x^2}{2}}\geq C\wedge C_{\delta}e^{-\frac{(1+\delta)x^2}{2}}$ for fixed $x_0$ and arbitrarily chosen $\delta>0$. Therefore,
	\begin{align*}
		p_{\xi}(t,z)\wedge (1-p_{\xi}(t,z))&\geq C_{\delta_1}Ee^{-\frac{(z-\Gamma(t)-\xi)^2}{2\Lambda(t)}(1+\delta_1)} \\
		&\geq C_{\delta_1}Ee^{-\frac{z^2}{2\Lambda(t)}(1+\delta_1)+\frac{z(\Gamma(t)+\xi)(1+\delta_1)}{\Lambda(t)}-\frac{\xi^2}{2\Lambda(t)}(1+\delta_1)} \\
		&\geq
		C_{\delta_1,\delta_2,t_0,t_1}E C\wedge e^{-\frac{z^2}{2\Lambda(t)}(1+\delta_1+\delta_2)-C_{\delta_1,\delta_2,t_0,t_1}\xi^2} \\
		&\geq
		C_{\delta_1,\delta_2,t_0,t_1}\cdot C\wedge e^{-\frac{z^2}{2\Lambda(t)}(1+\delta_1+\delta_2)},
	\end{align*}
	where Lemma \ref{lowerbound} is also used. Hence under Assumption \ref{assumptionw} we have
	\[
	w'(p_{\xi}(t,z))\leq C (C'\wedge e^{-\frac{z^2}{2\Lambda(t)}(1+\delta_1+\delta_2)})^{-(1-\beta)} \leq Ce^{\frac{z^2}{2\Lambda(t)}(1+\delta_1+\delta_2)(1-\beta)}.
	\]
	We emphasize that $\delta_1$ and $\delta_2$ are arbitrarily small constant to be determined later. Note that
	\[
	\frac{\md}{\md t}(w(p_{\xi}(t,z)))=w'(p_{\xi}(t,z))\partial_tp_{\xi}(t,z).
	\]
	Therefore,
	\begin{equation}\label{partialtest}
		\begin{aligned}
			\int_{-\infty}^{\infty}&w'(p_{\xi}(t,z))|\partial_t p_{\xi}(t,z)|u'(z)\md z\\
			&\leq CE\int_{-\infty}^{\infty}e^{\frac{z^2}{2\Lambda(t)}(1+\delta_1+\delta_2)(1-\beta)-\alpha z}\Phi'\left(\frac{z-\Gamma(t)-\xi}{\sqrt{\Lambda(t)}}\right)\left|\left(\frac{z-\Gamma(t)-\xi}{2\Lambda(t)}\frac{\Lambda'(t)}{\Lambda(t)}+\frac{\Gamma'(t)}{\sqrt{\Lambda(t)}}  \right)\right| \\
			&\leq CE\int_{-\infty}^{\infty}e^{\frac{z^2}{2\Lambda(t)}[(1+\delta_1+\delta_2)(1-\beta)+\delta_3]-\frac{(z-\Gamma(t)-\xi)^2}{2\Lambda(t)}}(C_1z+C_2\xi+C_3)\md z \\
			&\leq CE\int_{-\infty}^{\infty}e^{\frac{z^2}{2\Lambda(t)}[(1+\delta_1+\delta_2)(1-\beta)+\delta_3+\delta_4]-\frac{(z-\Gamma(t)-\xi)^2}{2\Lambda(t)}+\frac{\xi^2}{2\Lambda(t)}\delta_5}\md z \\
			&\leq CE\int_{-\infty}^{\infty}e^{-\frac{z^2}{2\Lambda(t)}(1-\tilde{\delta})+\frac{z(\Gamma(t)+\xi)}{\Lambda(t)}-\frac{(\Gamma(t)+\xi)^2}{2\Lambda(t)}+\frac{\delta_5 \xi^2}{2\Lambda(t)}}\md z\\
			&=CE\int_{-\infty}^{\infty}e^{-\frac{(1-\tilde{\delta})}{2\Lambda(t)}(z-\frac{\xi+\Gamma(t)}{1-\tilde{\delta}})^2+\frac{\frac{\tilde{\delta}}{1-\tilde{\delta}}(\Gamma(t)+\xi)^2+\delta_5 \xi^2}{2\Lambda(t)}}\md z \\
			&=C E e^{\frac{\frac{\tilde{\delta}}{1-\tilde{\delta}}(\Gamma(t)+\xi)^2+\delta_5 \xi^2}{2\Lambda(t)}}=\frac{C}{\sqrt{\mVar(\xi)}}\sqrt{\bar{\delta}}e^{C'\bar{\delta}},
		\end{aligned}
	\end{equation}
	where $\frac{1}{\bar{\delta}}=\frac{1}{\mVar(\xi)}-\frac{\tilde{\epsilon}}{\Lambda(t)}$, and $\tilde{\epsilon}=\frac{\tilde{\delta}}{1-\tilde{\delta}}+\delta_5$, $\tilde{\delta}=(1+\delta_1+\delta_2)(1-\beta)+\delta_3+\delta_4$. Thus from (\ref{partialtest}) we know
	\begin{equation}\label{partialtbound}
		\begin{aligned}
			\int_{-\infty}^{\infty}w'(p_{\xi}(t,z))&|\partial_t p_{\xi}(t,z)|u'(z)\md z\\
			&\leq \frac{C}{\Lambda(t)-\mVar(\xi)\tilde{\epsilon}}e^{\frac{C'}{\Lambda(t)-\mVar(\xi)\tilde{\epsilon}}}\\
			&\leq \frac{C}{\Lambda(t_1-\epsilon_0)-\frac{\beta\Lambda(t_1)\tilde{\epsilon}}{1-\beta}}e^{\frac{C'}{\Lambda(t_1-\epsilon_0)-\frac{\beta\Lambda(t_1)\tilde{\epsilon}}{1-\beta}}}.
		\end{aligned}
	\end{equation}
	Choosing $\delta_i$, $i=1,2,3,4,5$ all sufficiently small, then $\tilde{\epsilon}$ is sufficiently close to $\frac{1-\beta}{\beta}$. By the fact $\Lambda(t_1-\epsilon_0)>\Lambda(t_1)$ we can assume $\Lambda(t_1-\epsilon_0)-\frac{\beta\Lambda(t_1)\tilde{\epsilon}}{1-\beta}>0$. Therefore we complete the proof of Step 1 once establishing (\ref{partialtbound}).
\vskip 5pt	
{\bf \underline{Step2: Construct $f^N$ and verify (\ref{decaycondition}).}} Observe that (see (\ref{fdefRDUT}))
	\begin{equation}\label{fdefapp}
		f(t,\xi)=\int_0^{\infty}w(p_{\xi}(t,z))u'(z)\md z+\int_{-\infty}^{0}[w(p_{\xi}(t,z))-1]u'(z)\md z.
	\end{equation}
	Integrating on both sides of (\ref{growthw}) we know $w(p)\leq Cp^{\beta}$, $1-w(p)\leq C(1-p)^{\beta}$. On the other hand, when $z>0$,
	\begin{align*}
		p_{\xi}(t,z)&=P_{\eta\sim N(0,1),\xi\sim \mu}(\xi+\Gamma(t)+\sqrt{\Lambda(t)}\eta\geq z)\\
		&\leq Ce^{-(|\alpha|+1)z/\beta}(Ee^{(|\alpha|+1)\xi/\beta})(Ee^{(|\alpha|+1)\sqrt{\Lambda(t)}\eta/\beta})\\
		&\leq Ce^{-(|\alpha|+1)z/\beta},
	\end{align*}
	uniformly for $t\in [t_0,t_1-\epsilon_0]$ and $\xi\in \cP_G(t_1)$, where Lemma \ref{uniformupperbounde} is implicitly used. When $z<0$, $1-p_{\xi}(t,z)$ can be similarly estimated, with the bound $Ce^{-(|\alpha|+1)z/\beta} $ replaced by $Ce^{(|\alpha|+1)z/\beta} $. Therefore both integrands in (\ref{fdefapp}) is bounded from above by $Ce^{-|z|}$. This yields that $f$ is well-defined in $\cP_G(t_1)$. Moreover, if we define
	\begin{equation}{\label{fNdef}}
		f^N(t,\xi)=\int_0^{N}w(p_{\xi}(t,z))u'(z)\md z+\int_{-N}^{0}[w(p_{\xi}(t,z))-1]u'(z)\md z,
	\end{equation}
	(\ref{decaycondition}) is readily proved, because it has exponential decay.
\vskip 5pt	
{\bf \underline{Step3: Compute $\partial_\mu f^N$ and verify (\ref{uniformboundedness}).}}
	Once we impose the cut off given in (\ref{fNdef}), there in no singularity and it is clear that $f^N(t,\xi)$ is well-defined for any random variable $\xi$. Choose $\xi'$ such that $\|\xi'\|_{L^2}\leq 1$, let us rewrite the increments $f^N(t,\xi+\xi')-f^N(t,\xi)$ as follows:
	\begin{align*}
		f^N(t,\xi+\xi')-f^N(t,\xi)&=\int_{|z|\leq N}[w(p_{\xi+\xi'}(t,z))-w(p_{\xi}(t,z))]u'(z)\md z \\
		&=\int_{|z|\leq N}\int_0^1w'(\lambda p_{\xi+\xi'}(t,z)+(1-\lambda)p_{\xi}(t,z))\cdot(p_{\xi+\xi'}(t,z)-p_{\xi}(t,z))u'(z)\md z.
	\end{align*}
	As
	\begin{equation}\label{pincrement}
		\begin{aligned}
			|p_{\xi+\xi'}(t,z)-p_{\xi}(t,z)|&=\frac{1}{\sqrt{\Lambda(t)}}\left| E\int_0^1\Phi'\left(\frac{z-\Gamma(t)-(1-\lambda)\xi'-\xi}{\sqrt{\Lambda(t)}} \right)\cdot \xi' \md \lambda \right| \\
			&\leq C E |\xi'|\leq C\|\xi'\|_{L^2},
		\end{aligned}
	\end{equation}
	we have
	\begin{align*}
		&\left|f^N(t,\xi+\xi')-f^N(t,\xi)-\int_{|z|\leq N} w'(p_{\xi}(t,z))\cdot(p_{\xi'+\xi}(t,z)-p_{\xi}(t,z))u'(z)\md z    \right| \\
		\leq &\int_{|z|\leq N}\int_0^1 |w'(\lambda p_{\xi+\xi'}(t,z)+(1-\lambda)p_{\xi}(t,z))-w'(p_{\xi}(t,z))|\cdot|p_{\xi+\xi'}(t,z)-p_{\xi}(t,z)|u'(z)\md z \\
		\leq& C_N \|\xi\|_{L^2}\cdot \sup_{\substack{|z|\leq N \\ |x-y|\leq |p_{\xi+\xi'}(t,z)-p_{\xi}(t,z)|}}|w'(x)-w'(y)| \\
		=&o(\|\xi'\|_{L^2}).
	\end{align*}
	Using the first identity in (\ref{pincrement}) again, we get
	\[
	\begin{aligned}
		&\left|f^N(t,\xi+\xi')-f^N(t,\xi)-\frac{1}{\sqrt{\Lambda(t)}}E\left[\int_{|z|\leq N}w'(p_{\xi}(t,z))\Phi'\left(\frac{z-\Gamma(t)-\xi}{\sqrt{\Lambda(t)}} \right)u'(z)\md z\right]\cdot\xi' \right| \\
		\leq &CE\left[\int_{|z|\leq N}\int_0^1 w'(p_{\xi}(t,z))\left| \Phi'\left(\frac{z-\Gamma(t)-\xi-(1-\lambda)\xi'}{\sqrt{\Lambda(t)}} \right)-N'\left(\frac{z-\Gamma(t)-\xi}{\sqrt{\Lambda(t)}} \right)   \right|     \md z \md \lambda\right]\cdot|\xi'| \\
		\leq &C\int_{|z|\leq N}\int_0^1\int_0^1 E \left| \Phi'\left(\frac{z-\Gamma(t)-\xi-(1-\lambda)\xi'}{\sqrt{\Lambda(t)}} \right)-\Phi'\left(\frac{z-\Gamma(t)-\xi}{\sqrt{\Lambda(t)}} \right)   \right||\xi| \md \lambda'\md \lambda \md z \\
		\leq &CE \left[\int_0^1\int_0^1\Phi''\left(\frac{z-\Gamma(t)-\xi-\lambda'(1-\lambda)\xi'}{\sqrt{\Lambda(t)}}\right)|\xi'|^2 \md \lambda'\md \lambda\right] I_{\{|\xi'|\leq \|\xi'\|_{L^2}^{1/2}\}} +2CE|\xi'|I_{\{|\xi'|>\|\xi'\|^{1/2}_{L^2}\}}\\
		\leq &C((1+\|\xi'\|_{L^2}^{1/2})\|\xi'\|_{L^2}^2+\|\xi'\|_{L^2}^{3/2}) \\
		=& o(\|\xi'\|_{L^2}).
	\end{aligned}
	\]
	Recall Definition \ref{lderivativeDEF} (note also that we can directly obtain the function form of $\partial_{\mu}f^N$ stated in Proposition \ref{Lderivativefunction}), we see from the above calculation that
	\[
	\partial_{\mu}f^N(t,\mu)(y)=\frac{1}{\sqrt{\Lambda(t)}}\left[\int_{|z|\leq N}w'(p_{\mu}(t,z))\Phi'\left(\frac{z-\Gamma(t)-y}{\sqrt{\Lambda(t)}} \right)u'(z)\md z\right].
	\]
	By using Dominated convergence theorem, it is straightforward to deduce that
	\[
	\partial_v \partial_{\mu}f^N(t,\mu)(y)=\frac{1}{\sqrt{\Lambda(t)}}\left[\int_{|z|\leq N}w'(p_{\mu}(t,z))\Phi'\left(\frac{z-\Gamma(t)-y}{\sqrt{\Lambda(t)}} \right)\frac{z-\Gamma(t)-y}{\Lambda(t)}u'(z)\md z\right].
	\]
	To establish (\ref{uniformboundedness}), by tedious calculation similar to that in Step 1, we have
	\[
	|\partial_{\mu}f^N(t,\mu)(y)| \vee|\partial_v\partial_{\mu}f^N(t,\mu)(y)|\leq Ce^{\frac{\left(\frac{1-\beta}{\beta}+\hat{\epsilon} \right)y^2}{2\Lambda(t)}},
	\]
	where $\hat{\epsilon}$ is arbitrarily small. It is not hard to check that $\|\partial_{\mu} f^N(t,\mu)\|_{H^1_{\mu}}$ is uniform bounded because $\mu \in \cP_{G}(t_1)$ and the expectation with respect to $\mu$ can be computed explicitly. As for the convergence of $\partial_{\mu}f^N$, $\partial_v\partial_{\mu}f^N$ to $\partial_{\mu}f$, $\partial_v\partial_{\mu}f$ respectively, using
	\begin{align*}
		|\partial_{\mu}f^N(t,\mu)(y)-\partial_{\mu}f(t,\mu)(y)|&\leq \frac{1}{\sqrt{\Lambda(t)}}\left[\int_{|z|> N}w'(p_{\mu}(t,z))\Phi'\left(\frac{z-\Gamma(t)-y}{\sqrt{\Lambda(t)}} \right)u'(z)\md z\right] \\
		&\leq \frac{1}{N\sqrt{\Lambda(t)}}\left[\int_{-\infty}^{\infty}|z|w'(p_{\mu}(t,z))\Phi'\left(\frac{z-\Gamma(t)-y}{\sqrt{\Lambda(t)}} \right)u'(z)\md z\right],
	\end{align*}
	and $|z|\leq C_{\epsilon}e^{\epsilon z^2}$ for arbitrarily small $\epsilon>0$,  repeating the same procedure as before,  we obtain the desired convergence of $\partial_{\mu}f^N$. The convergence of $\partial_v\partial_{\mu}f^N$ is similar. Thus we have completed the proof.
\end{proof}

\section{Proof of Proposition \ref{MESprop}}\label{MESproof}
We first summarize the useful properties of $H$ in the following lemma, whose proof is quite straightforward but somewhat tedious, hence we only provide the sketch here.
\begin{lemma}\label{MESprooflm1}
	\!\!  \begin{itemize}
		\item[(1)]For any $z\neq \underline{x}$, $H(t,\cdot,z)$ is twice continuously differentiable, with $|\partial_y H(t,y,z)|\leq C/|z-\underline{x}|$, $\forall y\in \mR$.
		\item[(2)]We have the following explicit expressions:
		\begin{equation}
			\partial_y H(t,y,z)=\left\{
			\begin{aligned}
				&\frac{1}{(y-\underline{x})\sqrt{\Lambda_1(t)}}\Phi'\left(\frac{\log\left(\frac{z-\underline{x}}{y-\underline{x}}\right)-\Gamma_1(t)}{\sqrt{\Lambda_1(t)}}   \right)I_{\{y>\underline{x}\}}, &z>\underline{x}, \\
				&\frac{1}{(\underline{x}-y)\sqrt{\Lambda_1(t)}}\Phi'\left(\frac{\log\left(\frac{\underline{x}-z}{\underline{x}-y}\right)-\Gamma_1(t)}{\sqrt{\Lambda_1(t)}}   \right)I_{\{y<\underline{x}\}}, &z<\underline{x},
			\end{aligned}
			\right.
		\end{equation}
		\begin{equation}
			\partial_{yy} H(t,y,z)=\left\{
			\begin{aligned}
				&\frac{\Phi'\left(\frac{\log\left(\frac{z-\underline{x}}{y-\underline{x}}\right)-\Gamma_1(t)}{\sqrt{\Lambda_1(t)}}   \right)}{(y-\underline{x})^2\sqrt{\Lambda_1(t)}}\left[\frac{\log\left(\frac{z-\underline{x}}{y-\underline{x}}\right)-\Gamma_1(t)}{\Lambda_1(t)}-1  \right]I_{\{y>\underline{x}\}},&z>\underline{x}, \\
				\!\!\!\!\!\!\!\!&-\frac{\Phi'\left(\frac{\log\left(\frac{\underline{x}-z}{\underline{x}-y}\right)-\Gamma_1(t)}{\sqrt{\Lambda_1(t)}}   \right)}{(y-\underline{x})^2\sqrt{\Lambda_1(t)}}\left[\frac{\log\left(\frac{\underline{x}-z}{\underline{x}-y}\right)-\Gamma_1(t)}{\Lambda_1(t)}-1  \right]I_{\{y<\underline{x}\}},&z<\underline{x}.
			\end{aligned}
			\right.
		\end{equation}
		\item[(3)]For any $y\neq \underline{x}$,
		\begin{align}
			&\int_{-\infty}^{\infty}|\partial_y H(t,y,z)|\md z=\int_{-\infty}^{\infty}e^{\sqrt{\Lambda_1(t)}z+\Gamma_1(t)}\Phi'(z)\md z, \label{intpartialyH}\\
			&\int_{-\infty}^{\infty}|\partial_{yy} H(t,y,z)|\md z\leq \frac{1}{|y-\underline{x}|}\int_{-\infty}^{\infty}
			e^{\sqrt{\Lambda_1(t)}z+\Gamma_1(t)}\Phi'(z)(1+\frac{|z|}{\sqrt{\Lambda_1(t)}})\md z.\label{intpartialyyH}
		\end{align}
	\end{itemize}
\end{lemma}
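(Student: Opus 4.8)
The plan is to handle $H$ region by region according to the signs of $z-\underline{x}$ and $y-\underline{x}$, differentiate the explicit Gaussian expressions in (\ref{Hdef}) by the chain rule, and then collapse every integral in part (3) to the single Gaussian integral $\int_{\mR}e^{\sqrt{\Lambda_1(t)}w+\Gamma_1(t)}\Phi'(w)\,\md w$ by an exponential change of variables.

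For parts (1) and (2), I would fix $t$ and $z\neq\underline{x}$. On the open half-line where $H(t,\cdot,z)$ is not locally constant — namely $\{y>\underline{x}\}$ when $z>\underline{x}$, and $\{y<\underline{x}\}$ when $z<\underline{x}$ — the function $H(t,\cdot,z)$ is $\Phi$ composed with an affine function of $\log|y-\underline{x}|$, so the chain rule, using $\partial_y\log|y-\underline{x}|=1/(y-\underline{x})$ and the identity $\Phi''(w)=-w\,\Phi'(w)$, produces exactly the closed forms for $\partial_yH$ and $\partial_{yy}H$ claimed in part (2); on the complementary region $H(t,\cdot,z)$ is locally constant ($0$ or $1$) and its $y$-derivatives vanish. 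The only point needing a short argument is $y=\underline{x}$: as $y\to\underline{x}$ the argument $w$ of $\Phi$ tends to $+\infty$ (if $z>\underline{x}$) or $-\infty$ (if $z<\underline{x}$), so $H\to 0$ or $1$ continuously, and each formal derivative in part (2) equals $\Phi'(w)$ times a polynomial in $w$ times at most a fixed negative power of $|y-\underline{x}|$. Parametrizing by $w$ — for $z>\underline{x}$ one has $y-\underline{x}=(z-\underline{x})e^{-\Gamma_1(t)-\sqrt{\Lambda_1(t)}w}$ — this combination is a constant times $e^{\sqrt{\Lambda_1(t)}w-w^2/2}$ times a polynomial, hence tends to $0$ as $w\to\pm\infty$, so the one-sided limits of $\partial_yH$ and $\partial_{yy}H$ at $y=\underline{x}$ are $0$ and match the locally-constant side; thus $H(t,\cdot,z)\in C^2(\mR)$ (indeed $C^\infty$). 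The same parametrization settles the bound in part (1):
\[
\partial_yH(t,y,z)=\frac{e^{\Gamma_1(t)}}{(z-\underline{x})\sqrt{2\pi\Lambda_1(t)}}\,e^{\sqrt{\Lambda_1(t)}w-w^2/2},\qquad z>\underline{x},
\]
so $|\partial_yH(t,y,z)|\le \frac{e^{\Gamma_1(t)+\Lambda_1(t)/2}}{\sqrt{2\pi\Lambda_1(t)}}\cdot\frac{1}{z-\underline{x}}$ since $\sup_{w\in\mR}e^{\sqrt{\Lambda_1(t)}w-w^2/2}=e^{\Lambda_1(t)/2}$; the case $z<\underline{x}$ is symmetric.

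For part (3), I would fix $y\neq\underline{x}$ and split $\int_{\mR}\md z$ at $z=\underline{x}$. If $y>\underline{x}$, then $\partial_yH(t,y,\cdot)\equiv 0$ on $\{z<\underline{x}\}$ (there $H$ is the constant $1$ for such $y$), and on $\{z>\underline{x}\}$ I substitute $w=\big(\log(z-\underline{x})-\log(y-\underline{x})-\Gamma_1(t)\big)/\sqrt{\Lambda_1(t)}$, so that $z-\underline{x}=(y-\underline{x})e^{\Gamma_1(t)+\sqrt{\Lambda_1(t)}w}$ and $\md z=(y-\underline{x})\sqrt{\Lambda_1(t)}\,e^{\Gamma_1(t)+\sqrt{\Lambda_1(t)}w}\md w$; the prefactor $1/\big((y-\underline{x})\sqrt{\Lambda_1(t)}\big)$ in $\partial_yH$ cancels the Jacobian and leaves $\int_{\mR}e^{\sqrt{\Lambda_1(t)}w+\Gamma_1(t)}\Phi'(w)\,\md w$, which is (\ref{intpartialyH}). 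If $y<\underline{x}$ the contributing region is $\{z<\underline{x}\}$ and the analogous substitution $v=\big(\log(\underline{x}-z)-\log(\underline{x}-y)-\Gamma_1(t)\big)/\sqrt{\Lambda_1(t)}$ gives the same integral. Inequality (\ref{intpartialyyH}) follows by the same substitution applied to $\partial_{yy}H$: it turns $|\partial_{yy}H|\,\md z$ into $\tfrac{1}{|y-\underline{x}|}\,e^{\sqrt{\Lambda_1(t)}w+\Gamma_1(t)}\Phi'(w)\,\big|\tfrac{w}{\sqrt{\Lambda_1(t)}}-1\big|\,\md w$, and bounding $\big|\tfrac{w}{\sqrt{\Lambda_1(t)}}-1\big|\le 1+\tfrac{|w|}{\sqrt{\Lambda_1(t)}}$ yields the claimed estimate; all these integrals are finite because the integrand is a Gaussian density times an exponential times a polynomial.

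As for difficulty, there is essentially none beyond bookkeeping: the only genuinely delicate step is the $C^2$-regularity of $H(t,\cdot,z)$ at $y=\underline{x}$, where one must see that the apparent $|y-\underline{x}|^{-1}$ and $|y-\underline{x}|^{-2}$ blow-ups in the formulas of part (2) are annihilated by the Gaussian decay of $\Phi'$ evaluated at an argument running off to $\pm\infty$. Once the substitution $z\mapsto w$ is in place this is transparent, and it simultaneously powers the proofs of parts (1) and (3); the remaining work is only careful tracking of the indicator functions in (\ref{Hdef}).
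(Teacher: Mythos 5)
Your proposal is correct and follows essentially the same route as the paper: parts (1) and (2) by direct differentiation of the piecewise-defined $H$ (the paper simply declares these to be "directly from the definition"), and part (3) by exactly the same exponential change of variables $z\mapsto w$ on each half-line, which cancels the Jacobian against the $1/\big((y-\underline{x})\sqrt{\Lambda_1(t)}\big)$ prefactor. The extra care you take at $y=\underline{x}$ (showing the one-sided derivatives vanish through the Gaussian decay of $\Phi'$) is detail the paper omits but is consistent with, and your derivation of the explicit $C/|z-\underline{x}|$ bound via the same parametrization is a clean way to supply the constant the paper leaves implicit.
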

\begin{proof}
	(1) and (2) are directly from the definition. To calculate the integral in (3), we first split $\mR$ into $I_-=(-\infty,\underline{x})$ and $I_+=(\underline{x},\infty)$. On $I_+$, we apply the change of variable $z'=\frac{\log(z-\underline{x})-\log(y-\underline{x})-\Gamma_1(t)}{\sqrt{\Lambda_1(t)}}$. Thus $z=\underline{x}+(y-\underline{x})e^{\sqrt{\Lambda_1(t)}z'+\Gamma_1(t)}$, $\md z=\sqrt{\Lambda(t)}(y-\underline{x})e^{\sqrt{\Lambda_1(t)}z'+\Gamma_1(t)} \md z'$. Therefore,
	\[
	\int_{I_+}|\partial_yH(t,y,z)|\md z=I_{\{y>\underline{x}\}}\int_{-\infty}^{\infty} e^{\sqrt{\Lambda_1(t)}z'+\Gamma_1(t)} \Phi'(z')\md z'.
	\]
	The integral on $I_-$ is treated similarly, while we use the change of variable $z'=\frac{\log(\underline{x}-z)-\log(\underline{x}-y)-\Gamma_1(t)}{\sqrt{\Lambda_1(t)}}$. Adding these two integrals up, we get (\ref{intpartialyH}). The proof of (\ref{intpartialyyH}) is similar.
\end{proof}
\begin{lemma}\label{measurezero}
	For $\xi\sim\mu\in \cP_{\MES}^{\delta}$, there is only one $z\in \mR$ such that $F_{X^{t,\xi}_T}(z)=\alpha_0$.
\end{lemma}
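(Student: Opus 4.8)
The plan is to prove that the law of $X^{t,\xi}_T$ is absolutely continuous with a density that is strictly positive on $\mathbb{R}\setminus\{\underline{x}\}$. Once this is known, the distribution function $F:=F_{X^{t,\xi}_T}$ is continuous and nondecreasing and cannot be constant on any nondegenerate interval on which it takes a value in $(0,1)$; since $\alpha_0\in(0,1)$, the equation $F(z)=\alpha_0$ has exactly one solution, by the intermediate value theorem together with this strict monotonicity.

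First I would record that $\mP(\xi=\underline{x})=0$, which is immediate from $E_{\xi\sim\mu}|\xi-\underline{x}|^{-2}<\infty$, valid because $\mu\in\cP_{\MES}$. Hence, up to a null set, $\Omega$ is the disjoint union of $A_+=\{\xi>\underline{x}\}$ and $A_-=\{\xi<\underline{x}\}$, at least one of which has positive probability. Next, using the representation $X^{t,\xi}_T=\underline{x}+(\xi-\underline{x})e^{\Gamma_1(t)+\sqrt{\Lambda_1(t)}\,\eta}$ with $\eta\sim N(0,1)$ independent of $\xi$ and with $\Lambda_1(t)>0$ (recall $t<T$ and that the candidate $\theta$ is nondegenerate, so $\theta\sigma\not\equiv 0$ on $[t,T]$), I would condition on $A_+$: there $\log(X^{t,\xi}_T-\underline{x})=\log(\xi-\underline{x})+\Gamma_1(t)+\sqrt{\Lambda_1(t)}\,\eta$ is the sum of an independent (finite-valued) random variable and a nondegenerate Gaussian, and the convolution of any probability measure with a nondegenerate Gaussian has a density strictly positive on all of $\mathbb{R}$. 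Pushing this forward through $v\mapsto\underline{x}+e^v$, the conditional law of $X^{t,\xi}_T$ on $A_+$ has a strictly positive density on $(\underline{x},\infty)$. The symmetric computation, starting from $\underline{x}-X^{t,\xi}_T=(\underline{x}-\xi)e^{\Gamma_1(t)+\sqrt{\Lambda_1(t)}\,\eta}$, shows the conditional law of $X^{t,\xi}_T$ on $A_-$ has a strictly positive density on $(-\infty,\underline{x})$. By the law of total probability, $X^{t,\xi}_T$ then has a density $f(z)=\mP(A_+)f_+(z)I_{\{z>\underline{x}\}}+\mP(A_-)f_-(z)I_{\{z<\underline{x}\}}$, with $f_\pm>0$ on the corresponding half-lines (a term being understood as absent when its event has probability zero) and with no atom at $\underline{x}$.

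Finally I would conclude: $F$ is continuous with $F(-\infty)=0$ and $F(+\infty)=1$, so $S:=\{z:F(z)=\alpha_0\}$ is a nonempty closed interval $[a,b]$. If $a<b$, then $\int_a^b f(z)\,\md z=F(b)-F(a)=0$, whereas $f>0$ on a subset of $(a,b)$ of positive Lebesgue measure: if $(a,b)$ meets $(\underline{x},\infty)$ then $\mP(A_+)>0$ — otherwise $X^{t,\xi}_T<\underline{x}<b$ a.s. and $F(b)=1\neq\alpha_0$ — so $f=\mP(A_+)f_+>0$ there, and the case where $(a,b)$ meets $(-\infty,\underline{x})$ is symmetric. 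This contradiction forces $a=b$, i.e., $S$ is a single point.

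The only genuinely required input is $\Lambda_1(t)>0$, i.e., nondegeneracy of the candidate strategy on $[t,T]$, which is implicit whenever one seeks a positive solution of (\ref{MESODE}). The main (minor) obstacle is the bookkeeping that allows $\mu$ to charge both sides of $\underline{x}$; in the applications $\mu$ is in fact supported on $G=(\underline{x},\infty)$ by the admissibility constraint defining $\Ui$, and then only the case $A=A_+$ occurs and the argument shortens accordingly.
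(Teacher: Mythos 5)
Your proof is correct and follows essentially the same route as the paper's: both exploit the representation $X^{t,\xi}_T=\underline{x}+(\xi-\underline{x})e^{\Gamma_1(t)+\sqrt{\Lambda_1(t)}\,\eta}$ to see that the law of $X^{t,\xi}_T$ is supported on one of $\mathbb{R}$, $(\underline{x},\infty)$, or $(-\infty,\underline{x})$, with a continuous and strictly increasing distribution function there, so $\{F=\alpha_0\}$ is a singleton for $\alpha_0\in(0,1)$. Your version is simply more explicit: the paper states this in one line, leaving implicit the positivity of the density (via convolution with a nondegenerate Gaussian), the absence of an atom at $\underline{x}$ (from $E|\xi-\underline{x}|^{-2}<\infty$), and the standing nondegeneracy hypothesis $\Lambda_1(t)>0$, all of which you spell out carefully.
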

\begin{proof}
	We note that $X^{t,\xi}_T=\underline{x}+(\xi-\underline{x})e^{\eta}$, where $\eta$ is a Gaussian random variable. Thus, depending the support of $\xi$, the support of $X^{t,\xi}_T$ can be $\mR$, $(\underline{x},\infty)$, or $(-\infty,\underline{x})$. In either case, there is only one $z$ such that $F_{X^{t,\xi}_s}(z)=\alpha_0\in (0,1)$.
\end{proof}
\begin{proof}[Proof of Proposition \ref{MESprop}]
	We first write $f(t,\xi)=f_1(t,\xi)+\gamma f_2(t,\xi)$ with
	\begin{align*}
		&f_1(t,\xi)=Eh(t,\xi),\ h(t,x)=EX^{t,x}_T,\\
		&f_2(t,\xi)=\frac{1}{\alpha_0}\int_0^{\alpha_0}F^{-1}_{X^{t,\xi}_T}(\alpha) \md \alpha=\frac{1}{\alpha_0}\int_{-\infty}^{F^{-1}_{X^{t,\xi}_T}(\alpha_0)}z\md F_{X^{t,\xi}_T}(z).
	\end{align*}
	We only need to consider $f_2$ because it is clear by definition that $\partial_{\mu}f_1(t,\mu)(y)=\partial_x h(t,y)$. To deal with $f_2$, we consider the function
	$w(p)=\frac{1}{\alpha_0}(p-(1-\alpha_0))I_{\{p\geq 1-\alpha_0\}}$, and a sequence of smoothing modification $w_N\in C^{\infty}(0,1)$, such that $w(p)-w_N(p)=0$, $w'(p)-w_N'(p)=0$ with  $p$ : $|p-(1-\alpha_0)|\geq 1/N^4$ (see Figure \ref{appendixfigure} for illustration). From (\ref{FXTdef}) and (\ref{Hdef}) we know that  $X^{t,\xi}_T$ is continuous so long as $P(\xi=\underline{x})=0$, which is assured by $\xi\sim \mu\in \cP_{\MES}^{\delta}$. Thus $f_2$ can be rewritten as follows:
	\[
	f_2(t,\xi)=\int_{-\infty}^{\infty}w'(1-F_{X^{t,\xi}_T}(z))z \md F_{X^{t,\xi}_T}(z).
	\]
	\begin{figure}[!htbp]
		\centering
		\def\svgwidth{\columnwidth}
		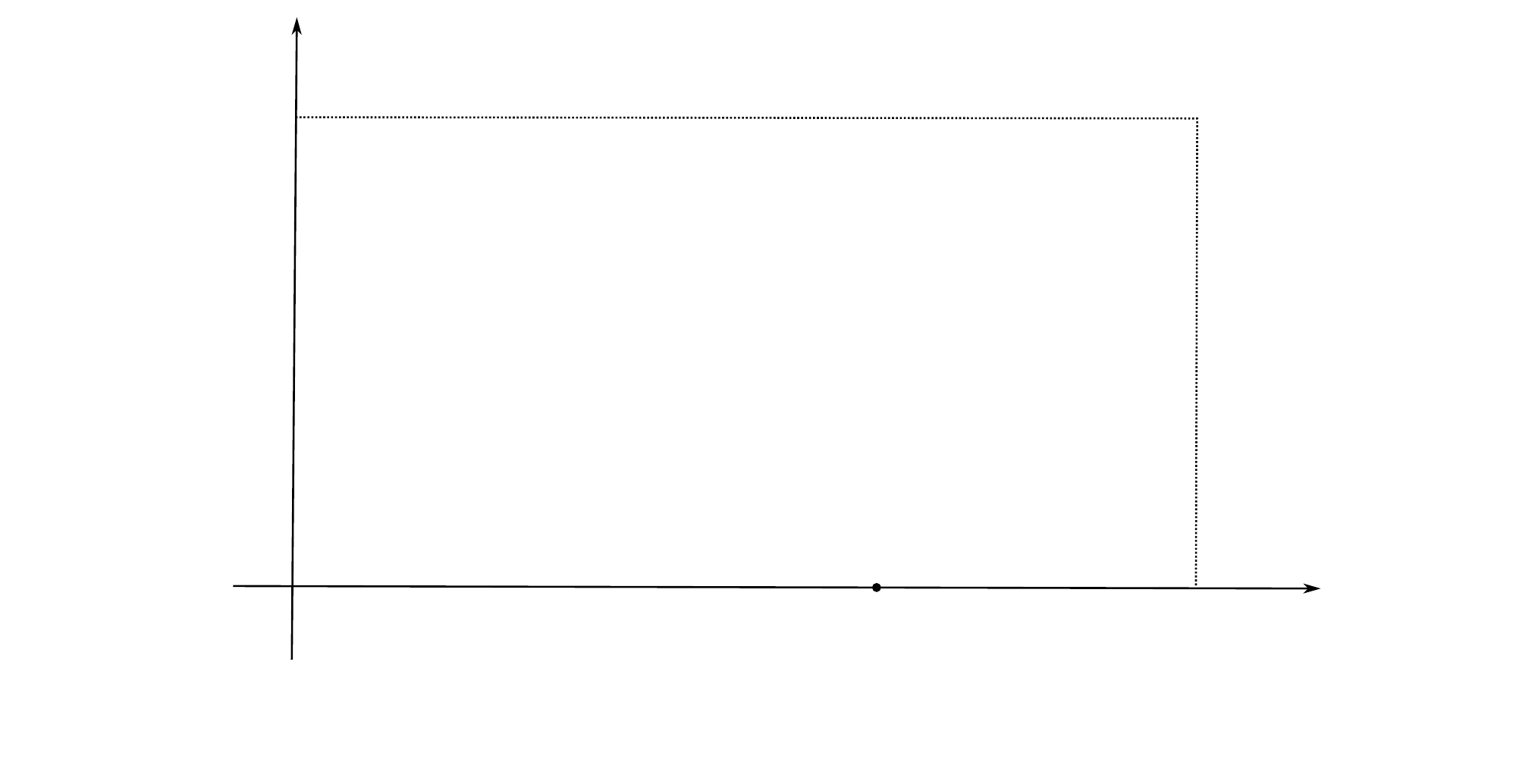
		\caption{$w$ and its smoothing $w_N$.}
		\label{appendixfigure}
	\end{figure}
	This inspires us to define
	\begin{equation}\label{MESfdef1}
		f^N_2(t,\xi)=\int_{-\infty}^{\infty}w'_N(1-F_{X^{t,\xi}_T}(z))z \md F_{X^{t,\xi}_T}(z).
	\end{equation}
	By integration by part, we further have
	\begin{equation}\label{MESfdef2}
		f^N_2(t,\xi)=\int_0^{\infty}w_N(EH(t,\xi,z))\md z+\int_{-\infty}^{0}[w_N(EH(t,\xi,z))-1]\md z.
	\end{equation}
	Using (\ref{MESfdef1}), if $\|\xi\|_{L^2}\leq C$, we have
	\begin{align*}
		|f^N(t,\xi)-f(t,\xi)|&\leq \frac{2}{\alpha_0}\int_{\{z:|F_{X^{t,\xi}_T}(z)-\alpha_0|\leq 1/N^4\}}z \md F_{X^{t,\xi}_T}(z)\\
		&\leq E X^{t,\xi}_T I_A \leq C\|\xi\|_{L^2}P(A)^{1/2} \\
		&\leq C/N^2,
	\end{align*}
	where $A=\{ |F_{X^{t,\xi}_T}(X^{t,\xi}_T)-\alpha_0|\leq 1/N^4      \}$ which has probability $2/N^4$. Thus (\ref{decaycondition}) is satisfied. Using (\ref{MESfdef2}), it is clear and straightforward to prove (\ref{uniformcontinuity}) and (\ref{uniformboundL2}) are satisfied by $f_2^N$ (on $[t,t_1]$), with constants independent of $N$. Therefore, (\ref{uniformcontinuity}) and (\ref{uniformboundL2}) are also true if we consider the limit $N\to \infty$. We now turn to prove (\ref{uniformboundedness}) and the convergence of L-derivatives. To do this, we consider any $\xi'$ with $\|\xi'\|_{L^2}\to 0$, and use the expression (\ref{MESfdef2}) to get
	\begin{align*}
		f_2^N(t,\xi+\xi')-f^N_2(t,\xi)&=\int_{-\infty}^{\infty}\left[ w_N(EH(t,\xi+\xi',z))-w_N(EH(t,\xi,z))   \right] \md z \\
		&=E\int_0^1\int_0^1\int_{-\infty}^{\infty}w'_N(\lambda EH(t,\xi+\xi',z)+(1-\lambda)EH(t,\xi,z))\\
		&\ \ \ \ \ \cdot\partial_x H(t,\xi+\lambda'\xi',z)\xi'\md z \md \lambda \md \lambda'\\
		&=E\left[ \int_{-\infty}^{\infty}w_N'(EH(t,\xi,z))\partial_yH(t,\xi,z)\md z  \right]\xi'+J_1+J_2,
	\end{align*}
	where
	\begin{align*}
		&J_1=E\int_0^1\int_0^1\int_{-\infty}^{\infty}[w'_N(\lambda EH(t,\xi+\xi',z)+(1-\lambda)EH(t,\xi,z))-w_N'(EH(t,\xi,z))]\\
		&\ \ \ \ \ \ \ \ \ \cdot \partial_y H(t,\xi+\lambda'\xi',z)\xi' \md z\md \lambda \md \lambda',\\
		&J_2=E\int_0^1\int_{-\infty}^{\infty}w_N'(EH(t,\xi,z))  [\partial_y H(t,\xi+\lambda'\xi',z)-\partial_yH(t,\xi,z)]\xi'\md z\md \lambda.
	\end{align*}
	For $J_1$, based on (1) of Lemma \ref{MESprooflm1}, we have
	\[
		|EH(t,\xi+\xi',z)-EH(t,\xi,z)|\leq CE|\xi'|/|z-\underline{x}|\leq C\|\xi'\|_{L^2}/|z-\underline{x}|,	
	\]
and $ w_N'\leq \frac{1}{\alpha_0}$.
	Thus, denoting by $I(t,z,\lambda,\lambda',\xi,\xi')$ the absolute value of the intergrands, we have
	\begin{align*}
		|J_1|&\leq\int_{-\infty}^{\infty}\int_0^1\int_0^1EI(t,z,\lambda,\lambda',\xi,\xi')\md \lambda \md \lambda' \md z \\
		&\leq \int_{\{|z-\underline{x}|\geq \|\xi'\|^{1/2}_{L^2}\}}EI(t,z,\lambda,\lambda',\xi,\xi') \md z +\int_{\{|z-\underline{x}|< \|\xi'\|^{1/2}_{L^2}\}}EI(t,z,\lambda,\lambda',\xi,\xi') \md z   \\
		&\triangleq J_3+J_4
	\end{align*}
	with
	\begin{align*}
		J_3&\leq \sup_{x,y:|x-y|\leq C\|\xi'\|_{L^2}^{1/2}}\{|w_N'(x)-w_N'(y)|\}\cdot \sup_{\lambda\in [0,1]}E\xi'\cdot\int_{-\infty}^{\infty}\partial_yH(t,\xi+\lambda\xi',z)\md z,\\
		&\leq C\|\xi'\|_{L^2}\sup_{x,y:|x-y|\leq C\|\xi'\|_{L^2}^{1/2}}\{|w_N'(x)-w_N'(y)|\} \\
		&=o(\|\xi'\|_{L^2}).
	\end{align*}
	Here (\ref{intpartialyH}) is used. On the other hand, we have
	\begin{align*}
		J_4&\leq 2\|w_N'\|_{L^{\infty}} \int_{\{|z-\underline{x}|<\|\xi'\|_{L^2}^{1/2}\}}E|\xi'|\partial_y H(t,\xi+\lambda \xi',z)\md z\\
		&\leq C\|\xi'\|_{L^2}\left(E\left[ \int_{\{|z-\underline{x}|<\|\xi'\|_{L^2}^{1/2}\}} \partial_yH(t,\xi+\lambda\xi',z)\md z \right]^2\right)^{1/2} \\
		&=o(\|\xi'\|_{L^2}),
	\end{align*}
	because we have
	\[
	E\left[ \int_{\{|z-\underline{x}|<\epsilon\}} \partial_yH(t,\xi+\lambda\xi',z)\md z \right]^2 \to 0,\
	\mbox{as} \ \epsilon \to 0
	\]
	by (\ref{intpartialyH}) and Dominated convergence theorem. For $J_2$, we define
	\[
	A(y)=\int_{-\infty}^{\infty}w'_N(EH(t,\xi,z))\partial_y H(t,y,z)\md z.
	\]
	Clearly, $A$ is continuous and bounded. We have
	\begin{align*}
		|J_2|&\leq \sup_{\lambda\in [0,1]}E|A(\xi+\lambda\xi')-A(\xi)||\xi'|\\
		&\leq \sup_{\lambda\in [0,1]}\{E|A(\xi+\lambda\xi')-A(\xi)||\xi'|I_{\{|\xi'|\leq \|\xi'\|_{L^2}^{1/2}\}}+E|A(\xi+\lambda\xi')-A(\xi)||\xi'|I_{\{|\xi'|> \|\xi'\|_{L^2}^{1/2}\}}\} \\
		&\leq \|\xi'\|_{L^2}\sup_{|x-y|\leq \|\xi'\|_{L^2}^{1/2}}+2\|A\|_{L^{\infty}}E|\xi'|I_{\{|\xi'|> \|\xi'\|_{L^2}^{1/2}\}} \\
		&\leq o(\|\xi'\|_{L^2})+O(\|\xi'\|_{L^2}^{3/2}) \\
		&=o(\|\xi'\|_{L^2}).
	\end{align*}
	As a summary, we have just proved
	\[
	\partial_{\mu}f_2^N(t,\mu)(y)=\int_{-\infty}^{\infty}w_N'(EH(t,\xi,z))\partial_yH(t,y,z)\md z.
	\]
	By (\ref{intpartialyyH}), it is not hard to prove
	\[
	\partial_v\partial_{\mu}f_2^N(t,\mu)(y)=\int_{-\infty}^{\infty}w_N'(EH(t,\xi,z))\partial_{yy}H(t,y,z)\md z,
	\]
	so long as $\xi\sim \mu\in \cP^{\delta}_{\MES}$. Using (\ref{intpartialyH}), (\ref{intpartialyyH}) and the fact that $|w'_N(p)|\leq \frac{1}{\alpha_0}$, $\forall p\in[0,1]$, we have
	\begin{align*}
		&\sup_{N\geq 1,(t,\mu)\in[t,t+\epsilon]\times \cP_{\MES}^{\delta}}E_{\xi\sim \mu}|\partial_{\mu}f_2^N(t,\mu)(\xi)|^2\leq C,\\
		&\sup_{N\geq 1,(t,\mu)\in[t,t+\epsilon]\times \cP_{\MES}^{\delta}}E_{\xi\sim \mu}|\partial_v\partial_{\mu}f_2^N(t,\mu)(\xi)|^2\leq C\delta^{-2},
	\end{align*}
	which yield (\ref{uniformbounded1}). To prove (\ref{MESpartialmuf}) and (\ref{MESpartialvpartialmuf}), we denote by $\gamma g_1(y)$ the second term of the right hand side of (\ref{MESpartialmuf}), and $\gamma g_2(y)$ the right hand side of (\ref{MESpartialvpartialmuf}), respectively. Noticing $w'(p)=\frac{1}{\alpha_0}I_{\{p\geq 1-\alpha_0\}}$, we have $g_1(y)=\int_{\infty}^{\infty}w'(EH(t,\xi,z))\partial_yH(t,y,z)\md z$, and  as $N\to \infty$,
	\begin{align*}
		E_{\xi\sim \mu}|\partial_\mu f^N_2(\xi)-g_1(\xi)|^2&\leq \int_{-\infty}^{\infty}|w'(EH(t,\xi,z))-w_N'(EH(t,\xi,z))|\partial_yH(t,\xi,z)\md z\\
		&\leq 2E\int_{\{z:|F_{X^{t,\xi}_T}(z)-\alpha_0|\leq 1/N^4\}}\partial_y H(t,\xi,z)\md z\to 0.
	\end{align*}
	Here (\ref{intpartialyH}) and Dominated convergence theorem are implicitly used together with Lemma \ref{measurezero} (so that the integration region tends to a zero measure set as $N\to 0$). Similarly,
	\[
	E_{\xi\sim \mu}|\partial_v\partial_\mu f^N_2(\xi)-g_2(\xi)|^2\leq 2E\int_{\{z:|F_{X^{t,\xi}_T}(z)-\alpha_0|\leq 1/N^4\}}\partial_{yy} H(t,\xi,z)\md z\to 0,
	\]
	by (\ref{intpartialyyH}) and $\xi\sim \mu\in\cP_{\MES}^{\delta}$.  To prove (\ref{MESpartialmufx}) and (\ref{MESpartialvpartialmufx}), we only need to apply the change of variable which has been used in the proof of Lemma \ref{MESprooflm1} and calculate the integration explicitly.
\end{proof}
\vskip 10pt
\section{Proof of Proposition \ref{propnonlinear}}
\label{proofnonlinear}
First, by Markovian property, it is clear that,
\[
f(t,\xi)=Eh_1(t,\xi)+F\left(Eh_2(t,\xi)\right),
\]
where again we adopt the convention not to distinguish between $f$ and its lifting to the functionals on $L^2$. By assumptions on $h_1$ and $h_2$ we know $\sup\limits_{\mu\in \cK}\{E|\partial_t h_i(t,\xi)|^2+E|h_i(t,\xi)|^2\}<\infty$, it is directly to show that
\[
\partial_t f(t,\xi)=E\partial_t h_1(t,\xi)+F'\left(Eh_2(t,\xi)\right )E\partial_th_2(t,\xi),
\]
and (\ref{uniformcontinuity}) and (\ref{uniformboundL2}) are satisfied. To prove other conditions needed for applying Lemma \ref{weakLdlemma}, we choose a sequence of smooth functions with compact support $\zeta^N\in C^{\infty}_c(\mR^d)$, such that $\supp(\zeta^N) \subset B(0;N+1)$, $\zeta^N\equiv 1$ in $B(0;N)$. We further assume that $\zeta^N$, $\partial_x\zeta^N$, $\partial_{xx}\zeta^N$ are all bounded, uniformly in $N$. This is possible by usual smoothing argument. We consider the cut-off
$$h_i^N(t,x)=h_i(t,x)\zeta^N(x),\ \mbox{ and}\  \ f^N(t,\xi)=Eh_1^N(t,\xi)+F\left(Eh_2^N(t,\xi)\right).$$
By the choice of $\zeta^N$, we have
\begin{eqnarray*}
	|f^N(t,\xi)-f(t,\xi)|
	&\leq &  2 E|h_1(t,\xi)|I_{\{|\xi|\geq N\}}\\
	&&+2\int_0^1|F'\left(\lambda Eh^N_2(t,\xi)+(1-\lambda)Eh_2(t,\xi)\right)|E|h_2(t,\xi)|I_{\{|\xi|\geq N\}}\md \lambda\\
	&\leq & 2\|h_1(t,\cdot)\|_{L^2_{\mu}}\frac{(E|\xi|^4)^{1/2}}{N^2}+2\sup_{|y|\leq 2\|h_2(t,\cdot)\|_{L^2_{\mu}}}|F'(y)|\|h_2(t,\cdot)\|_{L^2_{\mu}}\frac{(E|\xi|^4)^{1/2}}{N^2}\\
	&\leq & \frac{C}{N^2},
\end{eqnarray*}
where the constant $C$ is uniform in $t$ and $\xi\sim \mu\in \cK$. To proceed, we now calculate $\partial_{\mu}f^N$. For any $\xi'\in L^2$ such that $\|\xi'\|_{L^2}\leq 1$, we have
\begin{eqnarray*}
	f^N(t,\xi+\xi')&-&f^N(t,\xi)=\int_0^1E\partial_xh^N_1(t,\xi+\lambda\xi')\xi' \md \lambda \\
	&+& \int_0^1 F'\left(\lambda Eh^N_2(t,\xi+\xi')+(1-\lambda)Eh^N_2(t,\xi)\right)\cdot E\left [h^N_2(t,\xi+\xi')-h^N_2(t,\xi)\right]\md \lambda\\
	&=&J_1+J_2,	
\end{eqnarray*}
where $J_1$ and $J_2$ are treated respectively as follows:
\begin{eqnarray}\label{estJ1}
	|J_1&-&E\partial_xh^N_1(t,\xi)\xi'|\leq  \int_0^1 E|\partial_x h^N_1(t,\xi+\lambda\xi')-\partial_xh^N_1(t,\xi)||\xi'| \md \lambda \nonumber\\
	&=&\int_0^1 \left[E|\partial_x h^N_1(t,\xi+\lambda\xi')-\partial_xh^N_1(t,\xi)||\xi'|I_{\{|\xi'|\leq \|\xi'\|^{1/2}_{L^2}\}} \right.\nonumber\\
	&&\left.+E|\partial_x h^N_1(t,\xi+\lambda\xi')-\partial_xh^N_1(t,\xi)||\xi'|I_{\{|\xi'|> \|\xi'\|^{1/2}_{L^2}\}}   \right]\md \lambda\nonumber\\
	&\leq& \|\xi'\|_{L^2}\sup_{|x-y|\leq \|\xi'\|_{L^2}^{1/2}}|\partial_xh^N_1(t,x)-\partial_xh^N_1(t,y)|+2\|\partial_x h^N_1\|_{L^\infty}E|\xi'|I_{\{|\xi'|> \|\xi'\|^{1/2}_{L^2}\}}\nonumber \\
	&\leq& o(\|\xi'\|_{L^2})+C\|\xi'\|_{L^2}^{3/2}=o(\|\xi'\|_{L^2}),\\
	|J_2&-&F'(Eh^N_2(t,\xi))E\partial_xh_2^N(t,\xi)\xi'|\leq \left|\int_0^1\left[F'(\lambda Eh^N_2(t,\xi+\xi')+(1-\lambda)\right.\right.\nonumber\\
	&&\left.\left.\cdot Eh^N_2(t,\xi))-F'(Eh^N_2(t,\xi))\right]E\left[h^N_2(t,\xi+\xi')-h^N_2(t,\xi)\right]\md \lambda \right| \nonumber\\
	&&+\left| F'\left(Eh^N_2(t,\xi)\right ) E\int_0^1 \left[\partial_xh^N_2(t,\xi+\lambda \xi')-\partial_xh^N_2(t,\xi)\right]\xi' \md \lambda \right| \nonumber
    \end{eqnarray}
	\begin{eqnarray*}
	&\leq & \sup_{\substack{|x-y|\leq E|h^N_2(t,\xi+\xi')-h^N_2(t,\xi)|\\|x|,|y|\leq 2\|h^N_2\|_{L^{\infty}} }}\{|F'(x)-F'(y)|\}\cdot E|h^N_2(t,\xi+\xi')-h^N_2(t,\xi)| \\
	&+&\sup_{|x|\leq \|h^N_2\|_{L^{\infty}}}\{|F(x)|\}\int_0^1\left|\partial_xh^N_2(t,\xi+\lambda \xi')-\partial_xh^N_2(t,\xi)\right||\xi'|d\lambda \\
	&\triangleq &J_3+J_4.
\end{eqnarray*}
Using the boundedness of $\partial_x h^N_2$ and the continuity of $F'$,
\begin{eqnarray*}
	&&E|h^N_2(t,\xi+\xi')-h^N_2(t,\xi)|\leq C\|\xi'\|_{L^2},\\
	&&\sup_{\substack{|x-y|\leq E|h^N_2(t,\xi+\xi')-h^N_2(t,\xi)|\\|x|,|y|\leq 2\|h^N_2\|_{L^{\infty}} }}\{|F'(x)-F'(y)|\}=o(1), \mbox{ as $\|\xi'\|_{L^2}\to 0$     }.
\end{eqnarray*}
Therefore $J_3=o(\|\xi'\|_{L^2})$. On the other hand, using again the same event partition as in (\ref{estJ1}), we also have $J_4=o(\|\xi'\|_{L^2})$. Thus
\[
f^N(t,\xi+\xi')-f^N(t,\xi)=E\partial_xh^N_1(t,\xi)\xi'
+F'(Eh^N_2(t,\xi))E\partial_xh^N_2(t,\xi)\xi'+o(\|\xi'\|_{L^2})
\]
which, together with the definition of L-derivatives, implies
\begin{align}
	&\partial_{\mu}f^N(t,\mu)(y)=\partial_x h^N_1(t,y)+F'\left(E_{\xi\sim \mu}h^N_2(t,\xi)\right )\partial_x h^N_2(t,y),\label{NLderivativenonlinear1}\\
	&\partial_v\partial_{\mu}f^N(t,\mu)(y)=\partial_{xx} h^N_1(t,y)+F'\left(E_{\xi\sim \mu}h^N_2(t,\xi)\right )\partial_{xx} h^N_2(t,y)\label{NLderivativenonlinear2}.
\end{align}
The only thing remains to be proved is the convergence of $\partial_{\mu}f^N$ and $\partial_v\partial_{\mu}f^N$.
Indeed, as for $i=1,2$,
\begin{align*}
	&\partial_x h^N_i=\partial_x h_i\cdot \zeta^N+h_i\cdot \partial_x \zeta^N,\\
	&\partial_{xx}h^N_i=\partial_{xx}h_i\cdot\zeta^N+2\partial_{x}h_i\cdot\partial_x\zeta^N+h_i\cdot \partial_{xx}\zeta^N,
\end{align*}
we know that $h^N_i$, $\partial_x h_i^N$ and $\partial_{xx}h^N_i$ are in fact uniformly bounded, uniformly in $N$. Thus
\begin{align*}
	E|\partial_x h^N_1(t,\xi)-\partial_x h_1(t,\xi)|^2&\leq 4(E|\partial_x h_1|^2|1-\zeta^N(\xi)|^2+E|h_1(t,\xi)|^2\partial_x\zeta^N(\xi)|^2 )\\
	&\leq C\left [E|\partial_x h_1(t,\xi)|^2I_{\{|\xi|\geq N\}} +E|h_1(t,\xi)|^2I_{\{|\xi|\geq N\}}\right].
\end{align*}
Using the integrability of $h_1$,  we have $\partial_x h^N_1(t,\cdot)\to \partial_x h(t,\cdot)$ in  $L^2_{\mu}$ for any $\mu\in \cP_0$. Similarly,
\begin{align*}
	E\left|F'\left (Eh^N_2(t,\xi)\right )\partial_x h^N_2(t,\xi)\right.&\left.-F'\left(Eh_2(t,\xi)\right)\partial_x h_2(t,\xi)  \right|^2\\
	\leq& C|F'\left(Eh^N_2(t,\xi)\right)-F'\left(Eh_2(t,\xi)\right)|^2E|\partial_x h^N_2(t,\xi)|^2 \\
	&+C|F'\left(Eh_2(t,\xi)\right)|^2E|\partial_x h^N_2(t,\xi)-\partial_x h_2(t,\xi)|^2 \\
	\leq & C\sup_{\substack{|x-y|\leq E|h^N_2(t,\xi)-h_2(t,\xi)|^2\\ |x|,|y|\leq E|h(t,\xi)|^2}}\left\{|F'(x)-F'(y)|^2\right\}\cdot E|h_2(t,\xi)|^2 \\
	&+CE|\partial_x h^N_2(t,\xi)-\partial_x h_2(t,\xi)|^2\to 0,\ \mbox{as $N\to \infty$.}
\end{align*}
Letting $N$ tends to infinity on right hand side of (\ref{NLderivativenonlinear1})  proves (\ref{Lderivativenonlinear1}). Repeating the same argument to (\ref{NLderivativenonlinear2}) shows (\ref{Lderivativenonlinear2}), which completes the proof.

\end{document}